\newtheorem{theorem}{Theorem}
\newtheorem{lemma}{Lemma}
\newtheorem{corollary}{Corollary}
\newtheorem{definition}{Definition}
\newcommand{\setft}[1]{\mathrm{#1}}
\newcommand{\Density}{\setft{D}}
\newcommand{\Pos}{\setft{Pos}}
\newcommand{\Lin}{\setft{L}}
\DeclareMathOperator{\negl}{negl}
\newcommand{\Tr}{\mbox{\rm Tr}}
\newcommand{\mZ}{\mathbb{Z}}
\newcommand{\mH}{\mathcal{H}}
\newcommand{\supp}{\textsc{Supp}}
\newcommand{\lwe}{$\mathrm{LWE}$}
\newcommand{\LWE}{$\mathrm{LWE}$}
\newcommand{\LPN}{$\mathrm{LPN}$}
\newcommand{\SZK}{\ensuremath{\sf SZK}}
\newcommand{\SZKf}{\ensuremath{\sf SZK_{\ensuremath{\delta}}}}
\newcommand{\SZKlog}{\ensuremath{\sf SZK_{log}}}
\newcommand{\SZKlogd}{\ensuremath{\sf SZK_{polylog}}}
\newcommand{\SZKconst}{\ensuremath{\sf SZK_{const}}}
\newcommand{\QSZK}{\ensuremath{\sf QSZK}}
\newcommand{\QMA}{\ensuremath{\sf QMA}}
\newcommand{\QIP}{\ensuremath{\sf QIP}}
\newcommand{\QSZKf}{\ensuremath{\sf QSZK_{\ensuremath{\delta}}}}
\newcommand{\QSZKlog}{\ensuremath{\sf QSZK_{log}}}
\newcommand{\QSZKlogd}{\ensuremath{\sf QSZK_{polylog}}}
\newcommand{\QSZKconst}{\ensuremath{\sf QSZK_{const}}}
\newcommand{\BQP}{\ensuremath{\sf BQP}}
\newcommand{\BPP}{\ensuremath{\sf BPP}}
\newcommand{\NC}{\ensuremath{\sf NC^1}}
\newcommand{\NCz}{\ensuremath{\sf NC^0}}
\newcommand{\QNC}{\ensuremath{\sf QNC^1}}
\newcommand{\QNCz}{\ensuremath{\sf QNC^0}}
\newcommand{\QNCf}{\ensuremath{\sf QNC_{f}^0}}
\newcommand{\QCD}{\textsc{QCD}}
\newcommand{\QED}{\textsc{QED}}
\newcommand{\QEDlog}{QED$_{\textrm{log}}$}
\newcommand{\QEDlogd}{QED$_{\textrm{polylog}}$}
\newcommand{\QEDconst}{QED$_{\textrm{O(1)}}$}
\newcommand{\ED}{ED}
\newcommand{\EDlog}{ED$_{\textrm{log}}$}
\newcommand{\EDlogd}{ED$_{\textrm{polylog}}$}
\newcommand{\EDconst}{ED$_{\textrm{O(1)}}$}
\newcommand{\QER}{\textsc{QER}}
\newcommand{\QERconst}{QER$_{\textrm{O(1)}}$}
\newcommand{\EDf}{ED$_{\delta}$}
\newcommand{\QEDf}{QED$_{\delta}$}
\newcommand{\HQED}{\textsc{HQED}}
\newcommand{\HQEDlog}{HQED$_{\textrm{log}}$}
\newcommand{\HQEDconst}{HQED$_{\textrm{O(1)}}$}
\newcommand{\COf}{\ensuremath{\ensuremath{C}^{\mathcal{O}(f)}}}
\newcommand{\COg}{\ensuremath{\ensuremath{C}^{\mathcal{O}(g)}}}
\newcommand{\HQEDb}{HQED$^{b}_{\textrm{O(1)}}$}
\newif\ifnotes\notestrue
\newcommand{\mnote}[1]{\textcolor{red}{\small {\textbf{(Matty:} #1\textbf{)
      }}}}
\newcommand{\gnote}[1]{\textcolor{blue}{\small {\textbf{(Andru:} #1\textbf{) }}}}
\definecolor{mygrey}{gray}{0.50}
\newcommand{\notename}[2]{{\textcolor{mygrey}{\footnotesize{\bf (#1:} {#2}{\bf ) }}}}
\newcommand{\pnote}[1]{{\endnote{#1}}}
\newcommand{\notename}[2]{{}}
\newcommand{\pnote}[1]{}
\renewcommand{\mnote}[1]{}
\renewcommand{\gnote}[1]{}
\begin{document}
\sloppy
\title{On estimating the entropy of shallow circuit outputs}
\author{Alexandru Gheorghiu$^{1}$ and Matty J. Hoban$^{2}$}

\footnotetext[1]{Department of Computer Science and Engineering, Chalmers University of Technology.}
\footnotetext[2]{Department of Computer Science, University of Oxford.}

\date{}
\maketitle
\sloppy

\begin{abstract}
\noindent
Estimating the entropy of probability distributions and quantum states is a fundamental task in information processing. Here, we examine the hardness of this task for the case of probability distributions or quantum states produced by shallow circuits. Specifically, we show that entropy estimation for distributions or states produced by either log-depth circuits or constant-depth circuits with gates of bounded fan-in and \emph{unbounded} fan-out is at least as hard as the \emph{Learning with Errors} (LWE) problem, and thus believed to be intractable even for efficient quantum computation. 
This illustrates that quantum circuits do not need to be complex to render the computation of entropy a difficult task. We also give complexity-theoretic evidence that this problem for log-depth circuits is not as hard as its counterpart with general polynomial-size circuits, seemingly occupying an intermediate hardness regime. Finally, we discuss potential future applications of our work for quantum gravity research by relating our results to the complexity of the bulk-to-boundary dictionary of AdS/CFT.
\end{abstract}

\newpage

\tableofcontents

\newpage

\section{Introduction} \label{sect:intro}

The entropy of a probability distribution or a quantum state is a useful measure for characterising information content with numerous applications in information theory. Shannon and von Neumann entropies, both appear in data compression \cite{schumacher} and asymptotic cryptography \cite{qkd}, and in the case of the von Neumann entropy, entanglement theory \cite{entanglement}. Furthermore, this link to entanglement theory has led to the use of the von Neumann entropy in condensed matter theory \cite{arealaws} and quantum gravity research \cite{rt}.

Given its importance in physics and information theory, it is natural to ask how difficult it is to estimate the entropy of a process. 
A natural way to formalise this question is in terms of \emph{sample complexity}: this looks at how many samples from an unknown probability distribution, or how many copies of an unknown quantum state, are needed to estimate its entropy.
Previous work has considered various algorithms, both quantum and classical, for  computing the entropy of a probability distribution \cite{dasgupta2002complexity, wuyang, jiaovenkat, valiantvaliant}, as well as entropies of quantum states \cite{hastings, acharya, quantumalgorithmentropy}. More recently, it has been shown that for multiple entropy measures, computing the relevant entropy is as hard as full state tomography \cite{acharya, quantumalgorithmentropy, o2015quantum}. In other words, the sample complexity would scale with the support of the probability distribution, or the dimension of the Hilbert space for the quantum state, respectively.
To provide some intuition for why entropy estimation is difficult, in Appendix~\ref{sect:bqpqpoly}, we use the computational complexity tool of advice to give a simple proof that no algorithm exists for efficiently estimating the entropy of an a priori unknown quantum state. In particular, we show that if the entropy of a (mixed) quantum state could be estimated within additive error $\epsilon$ in time that scales polynomially in the number of qubits of the state and in $\log(1/\epsilon)$, then such an algorithm could be leveraged to solve \emph{any} problem. To be more precise, it would imply that polynomial-time quantum computation with quantum advice, could decide all languages, which is known to be false.

Rather than considering sample complexity, an operational way of capturing the complexity of entropy estimation is to start with descriptions of two random processes and ask which process produces more entropy in its output. The natural decision problem for this task was defined by Goldreich and Vadhan~\cite{goldreich} and is known as the \emph{entropy difference} (\ED) problem: given two classical circuits $C_1$ and $C_2$, let $C_1(x)$ and $C_2(x)$ define the output distributions of the two circuits when acting on an $n$-bit string, $x$, drawn from the uniform distribution over $\{0, 1\}^n$; the problem is to decide whether $C_1(x)$ has higher entropy than $C_2(x)$ or vice versa (promised that one of these is the case and promised that the entropy difference is at least $1/poly(n)$).

What can one say about the computational hardness of this problem? In~\cite{goldreich} it was shown that the problem is complete for the class \SZK. This class, known as \emph{statistical zero-knowledge}, contains all decision problems for which a computationally unbounded \emph{prover} can convince a polynomial-time \emph{verifier} that there exists a solution, when one exists (and fail to do so when a solution does not exist) without revealing anything about the solution to the verifier\footnote{This last condition, known as the zero-knowledge condition, is formally defined by saying that there exists a polynomial-sized circuit called a \emph{simulator} that can approximately produce transcripts of the protocol for accepting (or ``yes'') instances.}.
Due to oracle separation results~\cite{szk1} and from the fact that certain cryptographic tasks (such as finding collisions for a cryptographic hash function) are contained in \SZK, it is believed that \SZK\ contains problems that are intractable even for polynomial-time quantum algorithms.
Thus, the fact that \ED\ is complete for \SZK\ tells us that we should expect a similar intractability for the problem of distinguishing the entropies of general classical circuits.

Transitioning to the quantum setting, Ben-Aroya, Schwartz and Ta-Shma defined the analogue of \ED\ known as the \emph{quantum entropy difference} (\QED) problem~\cite{bst}. In this case, the input circuits $C_1$ and $C_2$ are polynomial-size quantum circuits acting on a fixed input state (say the state $\ket{00...0}$), and a fraction of the output qubits are traced out. The remaining qubits will generally be mixed quantum states. As in the classical case, the question is which of the two mixed states (the one produced by $C_1$ or the one produced by $C_2$) has higher entropy, subject to the same promise as for \ED\ (that there is a $1/poly(n)$ gap in the entropy difference). Ben-Aroya et al showed that \QED\ is complete for the class \QSZK, the quantum counterpart of \SZK\ of problems admitting a \emph{quantum statistical zero-knowledge} proof protocol~\cite{bst}. Assuming \QSZK\ strictly contains \SZK, this would mean that \QED\ is strictly harder than \ED.

For both \ED\ and \QED\ the circuits under consideration were assumed to be polynomial in the size of their inputs. A natural follow-up question is: does the hardness of these problems change if we reduce the depth of the circuits? Specifically, what happens if the circuits have depth that is logarithmic or even constant in the size of the input? Focusing specifically on the quantum case, there are number of reasons why one would be interested in answering these questions. From a complexity-theory perspective, it lets us compare and contrast \QSZK\ to other classes in the interactive proofs model, such as \QMA\ or \QIP. Both of these classes have natural complete problems with the inputs being circuits and in both cases the problems remain complete if the circuits are made to have logarithmic depth~\cite{rosgenQIP, jiqmalog}. Furthermore, from a more practical perspective, given that current and near-term quantum computing devices are subject to noise and imperfections, it is expected that the states produced in these experiments will be the result of circuits of low depth.
Estimating the entropy of these states would help in computing other quantities of interest such as the amount of entanglement~\cite{entanglement} or the Gibbs free energy~\cite{freenergy}. It is therefore important to know whether entropy estimation can be performed efficiently for states produced by shallow circuits.



Lastly, we are motivated to answer these questions by recent connections between quantum gravity research and quantum information theory, in the form of the AdS/CFT correspondence~\cite{adscft}. Briefly, AdS/CFT is a correspondence between a quantum gravity theory in a hyperbolic \emph{bulk} space-time known as \emph{Anti-de Sitter} (AdS) space and a \emph{conformal field theory} (CFT) on the \emph{boundary} of that space-time. The general idea is to compute physical quantities of interest in the bulk quantum gravity theory by mapping them to the boundary field theory. A surprising result to come out of this program is a correspondence between bulk geometry and boundary entanglement known as the \emph{Ryu-Takayanagi formula}~\cite{rt}. It states that, to leading order, the area of a bulk surface is equal to the entropy of the reduced state on the part of the boundary that encloses that surface.
Moreover, for certain families of boundary field theories, it is conjectured that the underlying states can be approximated by logarithmic depth tensor networks known as MERA (\emph{multi-scale entanglement renormalization ansatz})~\cite{swingle}. 
Thus, characterising the complexity of entropy difference for shallow circuits could yield insight into the complexity of distinguishing different bulk geometries in quantum gravity.

Intuitively, polynomial-depth quantum circuits can produce highly entangled quantum states and so their reduced density matrices can be highly mixed. In general it is difficult to distinguish two very mixed quantum states and furthermore it is difficult to distinguish their entropies, even if they only differ by a constant. In some sense, the local quantum information can be ``scrambled" in such circuits. However, in shallow quantum circuits, especially those with constant depth, the situation is more subtle; for one thing, we need to consider whether gates have bounded or unbounded fan-out. In the case of bounded fan-out, Greenberger-Horne-Zeilinger (GHZ) states cannot be generated with constant-depth quantum circuits \cite{watts2019exponential}, but with unbounded fan-out gates GHZ states can be generated \cite{hoyer2005quantum}. In this work, we consider shallow quantum circuits with unbounded fan-out gates.


We initiate the study of entropy distinguishability for shallow circuits (both classical and quantum) with \emph{bounded fan-in and unbounded fan-out}. 
We show that both the classical and quantum versions of entropy difference are hard assuming the intractability of the Learning with Errors (\LWE).
Since \LWE\ serves as a basis for various schemes of \emph{post-quantum cryptography}, our result implies that entropy estimation for shallow circuits is intractable unless these cryptographic schemes are insecure. 

We then show that while it may be hard to estimate the entropy for shallow circuit outputs, both versions of entropy distinguishability are unlikely to be \SZK-complete\ and \QSZK-complete\ respectively, as their analogoues with polynomial-depth circuits are. We do this by showing oracle separations between the problems that can be reduced to shallow circuit entropy distinguishability and their counterparts with general polynomial-depth circuits.
Therefore, the entropy difference problem for both classical and quantum shallow circuits occupies an interesting intermediate complexity regime.
This is in contrast to similar tasks like non-identity testing of quantum circuits (which is $\mathsf{QMA}$-complete), or quantum circuit distinguishability (which is $\mathsf{QIP}$-complete). In those cases, the problems remain $\mathsf{QMA}$-complete and $\mathsf{QIP}$-complete, respectively, for constant depth quantum circuits~\cite{rosgenQIP}.

Finally, we consider a version of von Neumann entropy difference where Hamiltonians are given as input, instead of circuits, and show that this problem is at least as hard as the circuit-based version. This last result allows us to relate these computational complexity results to the simulation of physical systems.
We should also emphasize that all our results apply for generalized R{\'e}nyi entropies as well.

\subsection{Main results}
We start by defining \EDf\ (\QEDf) as the (quantum) entropy difference problem where the circuits are of depth $\delta(n)$, for some monotonically (non-strictly) increasing function $\delta : \mathbb{N} \to \mathbb{N}$, and have gates with bounded fan-in and unbounded fan-out. We will denote \EDlogd\ (\QEDlogd) and \EDconst\ (\QEDconst) to be the collection of all problems \EDf\ (\QEDf), with $\delta(n)$ in $O(polylog(n))$ and $O(1)$, respectively.
Our first result gives an indication that these problems are unlikely to be as hard as their poly-size counterparts.
To show this, we will consider \SZKf\ (\QSZKf) to be the set of problems that reduce to \EDf\ (\QEDf) under polynomial-time reductions\footnote{These correspond to statistical zero-knowledge protocols in which the verifier and simulator circuits have depth $\delta(n)$.}. 
For $\delta(n) = polylog(n)$, we denote these classes as \SZKlogd\ and \QSZKlogd\, respectively, and prove the following:

\begin{theorem}\label{thm1}
There exists an oracle $\mathcal{O}$ such that $\mathsf{SZK^{\mathcal{O}}_{polylog}} \neq \mathsf{SZK}^{\mathcal{O}}$ and $\mathsf{QSZK^{\mathcal{O}}_{polylog}} \neq \mathsf{QSZK}^{\mathcal{O}}$.
\end{theorem}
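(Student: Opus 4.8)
\section*{Proof proposal for Theorem~\ref{thm1}}

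The plan is to build a single oracle $\mathcal{O}$ by diagonalization and exhibit a language $L_{\mathcal{O}}$ that lies in $\mathsf{SZK}^{\mathcal{O}}\cap\mathsf{QSZK}^{\mathcal{O}}$ but in neither $\mathsf{SZK}^{\mathcal{O}}_{\mathrm{polylog}}$ nor $\mathsf{QSZK}^{\mathcal{O}}_{\mathrm{polylog}}$; since the polylog-depth classes are trivially contained in their full-depth counterparts, this gives both separations at once. The first thing to notice is a subtlety forced by the definition in the footnote: the reductions defining $\mathsf{SZK}_{\mathrm{polylog}}$ are full polynomial-time reductions, and only the \emph{circuits in the target \EDlogd\ instance} are depth-restricted. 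Consequently $\mathsf{BPP}^{\mathcal{O}}\subseteq\mathsf{SZK}^{\mathcal{O}}_{\mathrm{polylog}}$ (the reduction simply solves the instance and emits a trivial depth-$1$ \ED\ pair), so no problem in $\mathsf{BPP}^{\mathcal{O}}$ can witness the separation. The separating language must therefore be genuinely $\mathsf{SZK}$-hard: its deciding entropy gap must be realizable by \emph{deep} distribution-sampling circuits, yet impossible to reproduce by any shallow \ED\ instance, and moreover hard enough that the poly-time reduction cannot shortcut it.

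Concretely, I would take $\mathcal{O}$ to encode, at each input length $n$, a function $f_n$ that is promised to be either injective or exactly $2$-to-$1$, but where $f_n$ is reachable only by following a pointer chain of length $T(n)$ with $\mathrm{polylog}(n)\ll T(n)\le \poly(n)$ (say $T(n)=n$): a fresh domain point of $f_n$ can be produced only by iterating a random map $g$ for $T$ steps, and $f_n$ is never exposed directly. Let $L_{\mathcal{O}}=\{1^n : f_n \text{ is } 2\text{-to-}1\}$. For the upper bound, a poly-\emph{depth} circuit $C_1$ samples a random seed $r$, traverses the chain in depth $O(T\log n)=\poly(n)$, and outputs $f_n$ at the chain endpoint; its output entropy is $m$ in the injective case and $m-1$ in the $2$-to-$1$ case, so comparing it against a fixed reference circuit $C_2$ of entropy $m-\tfrac12$ yields a $\ge\tfrac12$ entropy gap that decides $L_{\mathcal{O}}$. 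This places $L_{\mathcal{O}}$ in $\mathsf{SZK}^{\mathcal{O}}$, and the analogous coherent state preparation $\sum_r\ket{r}\ket{f_n(r)}$ followed by tracing out the seed register places it in $\mathsf{QSZK}^{\mathcal{O}}$.

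For the lower bound I would use two ingredients. First, an \emph{adaptivity} bound: any depth-$d$ circuit with oracle gates admits at most $d$ rounds of mutually dependent queries, so a $\mathrm{polylog}(n)$-depth \ED\ circuit can never traverse a chain of length $T(n)$ and hence cannot evaluate $f_n$ on any fresh point. Second, a \emph{collision} query lower bound: distinguishing injective from $2$-to-$1$ requires $\approx 2^{m/3}$ queries even quantumly (Aaronson--Shi), so the poly-time reduction, although it may traverse chains from $\poly(n)$ random starting points, learns $f_n$ only at polynomially many locations and gains statistically negligible information about the injective-vs-$2$-to-$1$ property. Combining these, I would argue by a hybrid/averaging argument that the joint view consisting of the reduction's query transcript together with the oracle values the shallow circuits can actually reach is, up to negligible statistical distance, independent of which case holds. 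Since the output distributions of $C_1^x,C_2^x$—and therefore the quantity $\mathrm{Ent}(C_1^x)-\mathrm{Ent}(C_2^x)$—are determined by this view, the entropy gap cannot differ by the promised $1/\poly(n)$ between a YES and a NO oracle that share the same view distribution; this contradiction rules out $L_{\mathcal{O}}\in\mathsf{SZK}^{\mathcal{O}}_{\mathrm{polylog}}$. The identical argument, invoking the quantum collision bound and the light-cone/causal-depth restriction on shallow quantum circuits, handles $\mathsf{QSZK}^{\mathcal{O}}_{\mathrm{polylog}}$, so one oracle serves both.

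Finally I would stitch everything into a single $\mathcal{O}$ by reserving a distinct input length for each (classical or quantum) poly-time reduction in a standard enumeration, choosing $f_n$ at that length to fool that reduction; the probabilistic indistinguishability of a random injective versus a random $2$-to-$1$ function under the reduction's limited access guarantees a fooling choice exists, and a union bound over the enumeration fixes one oracle that defeats all reductions simultaneously. I expect the main obstacle to be exactly the combined lower bound in the third paragraph: proving rigorously that a \emph{deep} poly-time reduction, which may precompute chain endpoints, together with \emph{shallow} \ED\ circuits cannot conspire to correlate their output entropy with the hidden statistical property. This requires carefully showing that the reduction's polynomially many learned function values are statistically useless for the collision distinction and that the shallow circuits contribute nothing new beyond what is already reachable, so that no clever shallow distribution depending on the reduction's advice can smuggle the answer into the entropy gap.
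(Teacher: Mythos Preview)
Your proposal is essentially correct and follows the same high-level strategy as the paper: hide a $1$-to-$1$ versus $2$-to-$1$ function behind a sequential barrier that polylog-depth circuits cannot penetrate but poly-depth circuits can, and use the resulting entropy gap as the separating language. The paper, however, does not build this oracle from scratch; it invokes the $d$-shuffling oracle of Chia--Chung--Lai (your ``pointer chain'' is exactly their construction) and imports their trace-distance bound as a black box. More importantly, the paper argues through the \emph{protocol} characterisation of $\mathsf{(Q)SZK}_{\mathrm{polylog}}$ (polylog-depth verifier and simulator, unbounded prover) rather than through the reduction-to-\EDlogd\ definition you use. In that model there is no separate poly-time reduction to worry about: the zero-knowledge constraint forces the simulator to reproduce the honest transcript, the simulator is polylog-depth and hence nearly oracle-independent by the CCL bound, and therefore the verifier's view cannot distinguish the two cases. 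This sidesteps entirely the issue you carefully flag---that a full-depth reduction with oracle access could precompute chain endpoints---so the paper never needs the Aaronson--Shi collision bound or an explicit diagonalization. Your route is more self-contained and is arguably more transparent about where the hardness against the deep reduction comes from, but it re-derives machinery that the paper simply cites; the paper's route is considerably shorter at the cost of relying on the relativization of the \ED--\SZK\ completeness proof and on \cite{chiachunglai} for the key inequality.
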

\noindent In particular, this means that $\mathsf{SZK^{\mathcal{O}}_{log}} \neq \mathsf{SZK}^{\mathcal{O}}$ and $\mathsf{QSZK^{\mathcal{O}}_{log}} \neq \mathsf{QSZK}^{\mathcal{O}}$. The oracle we use is the same as the one from the recent work of Chia, Chung and Lai~\cite{chiachunglai}, showing the separation $\mathsf{BPP}^{\mathsf{QNC}^{\mathcal{O}}} \neq \mathsf{BQP}^{\mathcal{O}}$, where $\mathsf{QNC}$ denotes the set of problems that can be solved by quantum circuits of polylogarithmic depth. In particular, because we use the oracle from~\cite{chiachunglai}, we in fact obtain a hierarchy theorem which says that relative to the oracle $\mathcal{O}$, we have that $(Q)ED^{\mathcal{O}}_{\delta(n)} \leq_P (Q)ED^{\mathcal{O}}_{2\delta(n) + 1}$.
We note that since our initial version of the manuscript, a result of Arora et al.~\cite{arora2022quantum} has shown that $\mathsf{BPP}^{\mathsf{QNC}^{\mathcal{O}}} \neq \mathsf{BQP}^{\mathcal{O}}$ \emph{relative to a random oracle $\mathcal{O}$.} We expect that this extends to our result as well, so that Theorem~\ref{thm1} also holds relative to a random oracle.

Our second result concerns a direct approach at trying to show that \QSZKlog $=$ \QSZK\ and why we believe this is unlikely. To explain this approach, let us first discuss the class \QIP, of problems that can be decided using an interactive proof system having a quantum verifier. A problem that is complete for this class is the quantum circuit distinguishability problem (\QCD), in which one is given as input two quantum circuits and asked to determine whether, when restricting to a subset of input and output qubits, the corresponding channels are close or far in diamond distance. It was shown by Rosgen that this problem remains \QIP-complete even when the circuits under consideration are log-depth (or even constant depth)~\cite{rosgenQIP}.
This is achieved by constructing log-depth circuits that check a ``history state" of the original circuits, which uses a different construction to the Feynman-Kitaev history state~\cite{fk}. One can also show that any \QIP\ protocol can be made to have a log-depth verifier, using the same history state construction.
In analogy to Rosgen's result, we can now suppose that any \QSZK\ protocol can be made into a \QSZKlog\ protocol by having the prover send history states of the computations that the verifier in the \QSZK\ protocol would perform. 
It is clear from the \QIP\ result that making the verifier have logarithmic depth in a \QSZK\ protocol does not reduce the set of problems that can be decided by such a protocol. The question is whether \emph{in addition} to having a log-depth verifier, the transcript of such a protocol\footnote{Here we mean the transcript of the protocol for ``yes'' instances to the problem.} can be produced by a log-depth simulator.
We show that if this is possible, then since \BQP\ $\subseteq$ \QSZK\, it would be possible to simulate any \BQP\ computation by ``checking" this history state on a log-depth quantum computer. This result can be stated as follows:

\begin{theorem}
If there exists a polynomial-time reduction from a \QSZK\ protocol with a log-depth verifier to a \QSZKlog\ protocol which preserves the transcript of the \QSZK\ protocol, then $\BQP=\BPP^{\QNC}$.
\end{theorem}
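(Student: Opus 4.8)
The plan is to establish the equivalent containment $\BQP \subseteq \BPP^{\QNC}$; the reverse containment $\BPP^{\QNC} \subseteq \BQP$ is immediate, since a polynomial-time quantum machine can run the classical $\BPP$ computation and answer each of its (adaptive) log-depth quantum oracle queries by simulating the corresponding shallow circuit. Fix a language $L \in \BQP$. Using $\BQP \subseteq \QSZK$, I would first obtain a quantum statistical zero-knowledge protocol for $L$, and then, invoking the history-state construction discussed above (the analogue of Rosgen's \QIP\ result), assume without loss of generality that this protocol has a log-depth verifier $V$.

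The next step is to apply the hypothesized reduction. By assumption, the log-depth-verifier \QSZK\ protocol for $L$ is transformed into a \QSZKlog\ protocol $\Pi'$ that preserves the transcript, so $\Pi'$ is equipped with both a log-depth verifier and a log-depth simulator $\mathrm{Sim}$, and its transcript (the verifier's final view) coincides with that of the original protocol. Transcript preservation is what allows the completeness and soundness guarantees of the original protocol to be inherited by $\Pi'$: the output of $\mathrm{Sim}$ can be fed directly into the verifier's final acceptance measurement.

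The decision procedure is then a $\BPP^{\QNC}$ algorithm. On input $x$, it makes a single \QNC\ oracle call that runs the log-depth simulator $\mathrm{Sim}(x)$ to produce the simulated verifier view and then applies the log-depth acceptance measurement of $V$ to it; as the composition of two log-depth circuits is again log-depth, this is one genuine $\QNC$ computation returning an accept/reject bit. The classical $\BPP$ machine repeats this polynomially often and outputs the majority. For yes-instances, the zero-knowledge property forces $\mathrm{Sim}(x)$ to be statistically close to the honest verifier's view in a real, accepting interaction, so the measurement accepts with high probability. For no-instances, soundness forbids any transcript from causing $V$ to accept; if $\mathrm{Sim}(x)$ nevertheless produced an accepting, consistent view with non-negligible probability, one could read off a cheating prover strategy from it and violate soundness.

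I expect the main obstacle to be the no-instance analysis---making the extraction of a cheating prover from the simulator rigorous in the quantum setting. The acceptance test must certify not only that the simulated view is accepting but also that the verifier messages recorded in it are consistent with the honest strategy of $V$, since only then does replaying the prover messages against the real verifier reproduce the accepting probability. In the quantum case this consistency check is delicate, because the verifier's messages are quantum states that cannot be freely copied or compared, and one must argue that the transcript-preserving reduction yields a simulator whose output is simultaneously faithful and checkable by a shallow circuit. A secondary point to verify is that folding the (possibly polynomially many) rounds of the original interaction into $\mathrm{Sim}$ and the final measurement does not reintroduce a per-round depth blow-up; this is precisely where the history-state construction, which collapses the whole interaction into one checkable state, does the essential work.
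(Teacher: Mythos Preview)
Your overall architecture matches the paper's, but you diverge at the two places that matter, and the obstacle you flag in the no-instance analysis is real and is precisely what the paper's argument is designed to avoid.

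First, the paper does not take an arbitrary \QSZK\ protocol for $L\in\BQP$ and then compress the verifier via Rosgen. It builds a \emph{specific one-round} protocol: the prover sends the tensor-product history state $\ket{\psi_U}=\ket{0\cdots0}\otimes U_1\ket{0\cdots0}\otimes\cdots\otimes U_m\cdots U_1\ket{0\cdots0}$, and the log-depth verifier checks adjacent blocks with SWAP tests before measuring the output qubit. Because the protocol is one-round from prover to verifier, the transcript \emph{is} the prover's message; there are no verifier messages whose consistency you would have to certify. Your worry about ``folding polynomially many rounds'' and checking consistency of quantum verifier messages therefore disappears once you commit to this specific protocol rather than a generic one.

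Second, and more importantly, the paper never runs a soundness-based argument on no-instances. Instead it uses closure of \BQP\ under complement: one obtains two log-depth simulators, $S$ for $L$ and $S'$ for $\overline{L}$. For every input $x$, exactly one of these is on a yes-instance and hence (by zero-knowledge) produces a state close to $\ket{\psi_U}$; the other may output garbage. The $\BPP^{\QNC}$ algorithm runs the SWAP-test verification on \emph{both} outputs, identifies which one is a genuine history state, and reads the answer from that one. This completely sidesteps the extraction problem you identify: you never need to argue that a simulator on a no-instance cannot fabricate an accepting view, because you always have a simulator that is on a yes-instance.

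Your route could be salvaged in the one-round setting (there the simulator's output literally is a candidate prover message, so acceptance would directly contradict soundness), but as written it leaves exactly the gap you name. The complement trick is the missing idea.
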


Based on these results, we conjecture that estimating the output entropy of circuits having (poly)logarithmic or constant depth should be easier than for circuits of larger depth. Despite this fact, our next result shows that even for these shallow circuits, entropy difference is still intractable for polynomial-time classical and quantum algorithms, assuming the quantum intractability of the Learning with Errors (\LWE) problem:

\begin{theorem} \label{thm:lwe}
 \lwe\ $\leq_{P}$ \EDconst\ $\leq_{P}$ \QEDconst.
\end{theorem}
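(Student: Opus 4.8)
The plan is to establish the two reductions separately. The second inclusion, $\EDconst \leq_P \QEDconst$, is the routine direction: given a classical constant-depth circuit $C$ acting on a uniformly random $n$-bit input, I would first make it reversible gate-by-gate (replacing each bounded--fan-in Boolean gate by a Toffoli/CNOT gadget acting on fresh ancillas), which increases the depth by at most a constant factor and hence preserves membership in $\NCz$. The associated quantum circuit prepares the uniform superposition $\frac{1}{\sqrt{2^n}}\sum_x \ket{x}$ with one layer of Hadamards, computes $\ket{x}\ket{C(x)}$ reversibly, and then traces out the input register. The resulting reduced state is $\rho = \sum_y \Pr[C(x)=y]\,\proj{y}$, which is diagonal in the computational basis, so its von Neumann entropy equals the Shannon entropy of $C(x)$. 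Applying this to both circuits of an \EDconst\ instance yields a \QEDconst\ instance with identical entropies, and hence the same answer.

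Now the hard part. For $\lwe \leq_P \EDconst$ I would use the decisional form of \LWE: given $(A,b)$ with $A \in \mathbb{Z}_q^{m\times n}$, decide whether $b = As + e$ for a secret $s$ and short noise $e$, or $b$ is uniform. The obstruction is that a fixed instance carries no entropy and the secret is unknown, so I cannot directly build a sampler for the LWE distribution. The key idea is re-randomization: define a circuit $C_{A,b}$ that draws a short random combiner $r$ and outputs $(r^\top A,\, r^\top b)$. By the random self-reducibility of \LWE\ (leftover hash lemma for $r^\top A$, bounded noise growth for $r^\top e$), this produces a fresh sample that is statistically close to a genuine LWE sample when $(A,b)$ is an LWE instance, and close to uniform otherwise. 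Crucially these two output distributions have a genuine entropy gap: in the uniform case the output is near-maximal entropy $(n+1)\log q$, whereas in the LWE case the last coordinate is nearly determined by the first $n$, giving entropy $\approx n\log q$ plus the small entropy of the accumulated noise. I would then pair $C_{A,b}$ with a fixed reference circuit whose entropy is tuned to sit strictly between these two values (e.g.\ $n\log q + \tfrac12\log q$, realizable by outputting $n$ uniform $\mathbb{Z}_q$-symbols together with one symbol uniform over a size-$\sqrt q$ subset). This places a $\ge \tfrac12\log q = 1/\poly(n)$ gap on each side, with opposite signs, so that the \EDconst\ answer reveals the \LWE\ answer.

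The main obstacle is that the sampler $C_{A,b}$ as described is not constant depth: forming the inner products $r^\top A$ and $r^\top b$ requires iterated addition modulo $q$, which lies in $\mathsf{TC^0}$ but not $\NCz$. To push the construction into constant depth I would replace $C_{A,b}$ by a perfect randomized encoding $\widehat C_{A,b}$, using the ``cryptography in $\NCz$'' machinery of Applebaum--Ishai--Kushilevitz, under which each output bit depends on only a constant number of input bits. Here I must verify the property that makes randomized encodings compatible with entropy estimation: since $C_{A,b}$'s output is decodable from $\widehat C_{A,b}$, and $\widehat C_{A,b}$ is in turn a simulatable randomized function of that output, the Shannon entropy of $\widehat C_{A,b}$ equals that of $C_{A,b}$ up to a fixed additive constant independent of which \LWE\ branch we are in. Absorbing this constant into the reference circuit preserves the two-sided $1/\poly(n)$ gap while placing both circuits in $\NCz$.

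Checking that the encoding is genuinely entropy-preserving (that the conditional entropy $H(\widehat C_{A,b}\mid C_{A,b})$ is the same constant in both branches) and that the re-randomization parameters can be chosen so that the statistical gap dominates the leftover-hash and noise-growth approximation errors are the two technical points I expect to require the most care. Once these are settled, composing the two reductions yields the chain $\lwe \leq_P \EDconst \leq_P \QEDconst$.
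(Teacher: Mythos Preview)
Your proposal is sound, but the route you take for $\lwe \leq_P \EDconst$ differs from the paper's. The paper does not re-randomize a single decisional instance against a reference circuit. Instead it imports the Extended Trapdoor Claw-Free (ETCF) machinery: from the \LWE\ parameters one builds the pair of circuits $C_f(b,x,e_u)=Ax+b\cdot u+\textsc{Gaussify}(e_u)$ and $C_g(b,x,e_u)=Ax+b\cdot(As+e')+\textsc{Gaussify}(e_u)$, the first being $1$-to-$1$ and the second (negligibly close to) $2$-to-$1$, so that their output entropies differ by essentially one bit; the \emph{injective invariance} property then says that telling which is which is as hard as \LWE. This buys a clean, structural entropy gap and sidesteps the leftover-hash and noise-growth bookkeeping you flag. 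Conversely, your re-randomization sampler is more self-contained---it does not need the ETCF abstraction---but it only works for typical $(A,b)$ (harmless since \LWE\ is average-case, yet worth stating) and the gap depends on parameter choices. Both approaches land in $\NC$ and both apply Applebaum--Ishai--Kushilevitz to reach $\NCz$; at that step the paper argues structurally that the perfect encodings remain $1$-to-$1$ and $2$-to-$1$ (via unique randomness plus decodability), whereas your entropy-shift identity $H(\widehat C)=H(C)+m(n)$ is an equally valid and arguably more general way to transport the gap through the encoding. Your treatment of $\EDconst \leq_P \QEDconst$ matches the paper's, which simply observes that a reversible classical constant-depth circuit is already a quantum one.
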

\LWE, a problem defined by Regev in~\cite{lwe}, serves as the basis for recently proposed cryptographic protocols that are believed to be post-quantum secure. In other words, it is conjectured that no polynomial-time classical or quantum algorithm is able to solve this problem. Recent results have leveraged the versatility of this problem to achieve tasks such as verifiable delegation of quantum computation~\cite{mahadev2018classical, gv19}, certifiable randomness generation~\cite{brakerski2018}, and self-testing of a single quantum device~\cite{mv20}. In all of these works, the protocols were based on the use of \emph{Extended Trapdoor Claw-Free Functions} (ETCF), introduced in~\cite{mahadev2018classical}. An ETCF family consists of a pair of one-way functions, $(f, g)$, that are hard to invert assuming the hardness of \LWE. Importantly, $f$ is a $1$-to-$1$ one-way function, whereas $g$ is a $2$-to-$1$ one-way function. An essential property of these functions is that it should be hard (based on \LWE) to determine which is the $1$-to-$1$ function and which is the $2$-to-$1$ function, given descriptions of the functions. This property is known as \emph{injective invariance}. Consider what happens if we evaluate each of these functions on a string $x$ drawn uniformly at random from $\{0, 1\}^n$. Given that $f$ is a $1$-to-$1$ function, the distribution $f(x)$ is still the uniform distribution over $n$-bit strings and will therefore have maximum entropy, $S(f) = n$. On the other hand, since $g$ is a $2$-to-$1$, $g(x)$ will be the uniform distribution on \emph{half} of the strings in $\{0, 1\}^n$, thus having entropy $S(g) = n-1$. This means that given descriptions of $f$ and $g$, if one can distinguish the entropy of their outputs, one can also solve \LWE, which effectively shows that \LWE\ $\leq_{P}$ \ED.

While the above argument shows a reduction from \LWE\ to \ED, to obtain the result of Theorem~\ref{thm:lwe} one would need an ETCF pair of functions that can be evaluated in constant depth (with unbounded fan-out gates). Such a construction is not known to exist and so we adopt a different approach towards proving the result. We start by showing that the ETCF functions defined in~\cite{mahadev2018classical} can be performed using log-depth circuits, thus showing \LWE\ $\leq_{P}$ \EDlog. This follows from the fact that the specific ETCF functions we use involve only matrix and vector multiplications, which can be parallelized and performed in logarithmic depth\footnote{The functions also require the ability to sample from a Gaussian distribution, but this can also be done in logarithmic depth (following a preprocessing step that is done as part of the reduction)~\cite{peikert2010efficient}.}. To then go to constant-depth circuits, we use the result of Applebaum, Ishai and Kushilevitz of compiling log-depth one-way functions to constant-depth~\cite{aik}. The main idea is that, for a given function $f$, that can be evaluated in log-depth, as well as an input string $x$, it is possible to produce a \emph{randomized encoding} of  $f(x)$ using a circuit of constant depth. A randomized encoding is an encoded version of $f(x)$ from which $f(x)$ can be efficiently (and uniquely) reconstructed. The main difficulty of the proof is to show that the constant-depth randomized encoders for $f$ and $g$ remain indistinguishable, based on \LWE. This then implies the desired result, \LWE\ $\leq_{P}$ \EDconst. Since \QEDconst\ is simply the quantum generalization of \EDconst, the reduction \EDconst\ $\leq_{P}$ \QEDconst\ is immediate.

It should be noted that similar ideas are employed in~\cite{dvir2017approximating}, where the authors use randomized encodings to characterise the hardness of estimating the entropy of low-degree polynomials, rather than short-depth circuits. Indeed, the use of randomized encodings to obtain constant-depth circuit implementations is not new, as this is already remarked in~\cite{aik}. However, one important aspect of the constant-depth construction with randomized encodings is that it requires gates with unbounded fan-out. This brings us to a sharp distinction between the classical and the quantum case. While in the classical case it is considered standard (for instance in the definition of $\mathsf{NC^1}$) to have gates with unbounded fan-out, the same is not true for the quantum case. There, due to the unclonability of general quantum states, as well as various physical limitations, the natural choice for representing quantum circuits is with gates having both bounded fan-in and bounded fan-out.

Unfortunately, our results do not extend to the case of constant-depth circuits with bounded fan-out gates. We leave resolving the hardness of that case, for both quantum and classical circuits, as an interesting open problem.

Finally, we wish to investigate the potential application of our results to physics. To make this link, we first define a Hamiltonian version of the quantum entropy difference problem, which we denote \HQED. Informally, instead of being given as input two quantum circuits and being asked to determine which produces a state of higher entropy, our input will consist of the descriptions of two local Hamiltonians $H_1$ and $H_2$. Upon tracing out a certain number of qubits from the ground states of these Hamiltonians, the problem is to decide which of the two reduced states has higher entropy\footnote{This is essentially the same as asking which of the two ground states has higher entanglement entropy, for a given bipartition of the qubits in the two states.}.
Unsurprisingly, it can be shown that this problem is at least at hard as the circuit version: 

\begin{theorem}\label{thmhqed}
\QED\ $\leq_{P}$ \HQED.
\end{theorem}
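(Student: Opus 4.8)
The plan is to realize the two \QED\ output states as reduced states of the ground states of local Hamiltonians, via the Feynman--Kitaev history-state construction~\cite{fk}. Note first that the naive idea of encoding the output $C\ket{0\cdots0}$ as the ground state of $H = C(\Id - \proj{0\cdots0})C^{\dagger}$ fails, since conjugating a local projector by the full circuit destroys locality; the history state is precisely the device that localizes this while keeping the circuit output accessible. Given a \QED\ instance with circuits $C_1 = U^{(1)}_{T}\cdots U^{(1)}_1$, $C_2 = U^{(2)}_{T}\cdots U^{(2)}_1$ acting on $\ket{0\cdots0}$, together with the prescribed split of the output qubits into a kept register $A$ and a discarded register $B$, I would first pad each circuit with $N=\poly(n)$ trailing identity gates. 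For each padded circuit I would then write the standard Hamiltonian $H_i = H_{\mathrm{in}} + H_{\mathrm{prop}} + H_{\mathrm{clock}}$, with no output-penalty term; this is $O(1)$-local and frustration-free, and its unique zero-energy ground state is the history state
\[
\ket{\eta_i} \;=\; \frac{1}{\sqrt{T+N+1}} \sum_{t=0}^{T+N} \ket{t}_C \otimes \ket{\psi^{(i)}_t}_W ,
\]
where $\ket{\psi^{(i)}_t}$ is the work register $W$ after $t$ steps of $C_i$ and $C$ is the clock. The pair $(H_1,H_2)$ is the \HQED\ instance, and I would take the \HQED\ bipartition that keeps exactly $A$ and traces out everything else (clock $C$, discarded output $B$, and ancillas).

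Because the clock states $\ket{t}$ are mutually orthogonal, tracing out $C$ kills all cross terms between distinct times, leaving the classical mixture $\tfrac{1}{T+N+1}\sum_t \proj{\psi^{(i)}_t}$ on $W$; using that the padded steps leave $W$ in $\ket{\psi^{(i)}_T}$ and tracing out $B$ then gives
\[
\rho^{(i)}_A \;=\; \frac{1}{T+N+1}\Bigl( \sum_{t=0}^{T-1} \Tr_B \proj{\psi^{(i)}_t} \;+\; (N+1)\,\Tr_B \proj{\psi^{(i)}_T} \Bigr).
\]
The point of the padding is that this is close to the genuine \QED\ output $\sigma_i := \Tr_B \proj{\psi^{(i)}_T}$: one checks $\tfrac12\|\rho^{(i)}_A - \sigma_i\|_1 \le T/(T+N+1) = O(T/N)$.

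The main technical step is to verify that polynomial padding already preserves the $1/\poly(n)$ promise gap. I would invoke the Fannes--Audenaert continuity bound, $|S(\rho^{(i)}_A) - S(\sigma_i)| \le \|\rho^{(i)}_A - \sigma_i\|_1 \log \dim A + h\!\left(\tfrac12\|\rho^{(i)}_A - \sigma_i\|_1\right)$, where $h$ is the binary entropy. Since $\dim A \le 2^{\poly(n)}$, so $\log\dim A = \poly(n)$, and $\tfrac12\|\rho^{(i)}_A - \sigma_i\|_1 = O(T/N)$ with $T=\poly(n)$, choosing $N$ to be a sufficiently large fixed polynomial in $n$ (scaling with the inverse promise gap and $\log\dim A$) drives both the linear term and the vanishing binary-entropy term to $o(1/\poly(n))$. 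Hence $S(\rho^{(1)}_A)-S(\rho^{(2)}_A)$ has the same sign as, and a gap comparable to, $S(\sigma_1)-S(\sigma_2)$, so the produced \HQED\ instance is a yes/no instance exactly when the original \QED\ instance is, and the entire map is computable in polynomial time.

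I expect the only real subtleties to be bookkeeping: confirming that $H_{\mathrm{in}}+H_{\mathrm{prop}}$ pins down the history state with no spurious ground-state degeneracy (so that the \HQED\ ground state, and hence its entanglement entropy, is well defined), and tracking that the dimension factor $\log\dim A$ in the continuity bound is genuinely absorbed by polynomial padding rather than forcing superpolynomial $N$. If \HQED\ is defined to require geometric or $2$-locality rather than $O(1)$-locality, the hard part will instead be composing with standard perturbation-gadget reductions; these preserve the low-energy subspace, hence the ground state, up to negligible error, so the entropy estimate above is unaffected.
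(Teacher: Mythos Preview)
Your proposal is correct and follows essentially the same approach as the paper: pad the circuits with polynomially many identities, build the Feynman--Kitaev Hamiltonian (without an output penalty), trace out the clock and the discarded register, bound the trace distance to the true circuit output by $O(T/N)$, and finish with Fannes--Audenaert. The only cosmetic difference is that the paper routes the closeness bound through fidelity first before converting to trace distance, whereas you compute the trace distance directly; the arithmetic and the final entropy bound are the same.
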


The proof makes use of the Feynman-Kitaev history state construction to encode the history state of a quantum circuit in the ground state of a Hamiltonian~\cite{fk}. Directly using the history states of the circuits, $C_1$ and $C_2$, from an instance of \QED\ would not necessarily yield the desired result since those states have small overlap with the output state of $C_1$ and $C_2$. To get around this issue, we make use of the padding construction from~\cite{nirkhe_et_al:LIPIcs:2018:9095}, and pad the circuits $C_1$ and $C_2$ with polynomially-many identity gates. It can then be shown that the resulting history states for these new circuits will have overlap $1 - 1/poly(n)$ with the output states of $C_1$ and $C_2$. Correspondingly, up to inverse polynomial factors, the entropy difference between these states and the original ones is preserved.

If we now denote as \HQEDlog\ and \HQEDconst\ the analogous problems in which the ground states can be approximated by circuits of logarithmic and constant depth (with unbounded fan-out gates), respectively, it follows from the previous results that \QEDlog\ $\leq_{P}$ \HQEDlog\ and \QEDconst\ $\leq_{P}$ \HQEDconst, and therefore that \lwe\ $\leq_P$ \HQEDconst\ $\leq_P$ \HQEDlog. Furthermore, if \HQEDb\ is \HQEDconst\ but where the circuits have bounded fan-out then we also have \LPN\ $\leq_P$ \HQEDb. 

We now turn to the question of what impact our results have for the AdS/CFT correspondence. As previously mentioned, if states on the boundary are described by log-depth tensor networks according to MERA, then computing the entropy for these states should inform the bulk geometry contained within a boundary. If computing the entropy of the boundary is difficult even for quantum computers, then this poses a challenge to the quantum version of the Extended Church Turing thesis. That is, in some sense, quantum gravitational systems cannot be simulated efficiently by a quantum computer. Bouland, Fefferman and Vazirani have also explored this challenge to the thesis from the perspective of the wormhole growth paradox using the tools of computational complexity \cite{boulandfeffermanvazirani}. In particular, they showed that one can prepare computationally pseudorandom states on the boundary CFT. These are efficiently preparable ensembles of quantum states that require exponential complexity to distinguish from Haar random states. Importantly, the states can have different circuit complexities and under the AdS/CFT correspondence this would correspond to different wormhole lengths. An observer in the bulk could, in principle, determine the length of a wormhole, however the circuit complexity of the boundary states should be computationally intractable to determine. They conclude that either the map from bulk to boundary in the AdS/CFT duality is exponentially complex, or a quantum computer cannot efficiently simulate such a system. 

With our final result in Theorem \ref{thmhqed} as a basis, we can make tentative connections to the AdS/CFT duality and reach a similar conclusion to the Bouland et al result.
If one can have ground states of differing entanglement entropy, based on \LWE, for a CFT Hamiltonian then this will correspond to differing bulk geometries. An observer in the bulk can, in principle, efficiently differentiate between these geometries. 
Based on this, we devise an experiment that allows for the efficient estimation of the ground state entanglement entropy of our Hamiltonian. We conjecture that unless the AdS/CFT bulk to boundary map is exponentially complex, this experiment would violate the intractability assumption of \LWE. An important point to make is that our results can be applied to any simulation of a system where the AdS/CFT duality holds; it need not be the case that it is a true description of our universe. Thus, our proposed experiment does not rely on an exotic physical system, but can be explored with a quantum simulator.

\subsection{Discussion and open questions}

The current manuscript is substantially different from the original version of our paper. The most important change we made was clarifying that our results apply for circuits with unbounded fan-out gates. While we have made several attempts to characterise the hardness of entropy estimation for constant depth circuits with bounded fan-out gates, these have unfortunately been unsuccessful. As it stands, it's unclear if entropy estimation in that scenario remains hard or becomes tractable (or whether there is a quantum-classical separation in complexity). We therefore leave that case as an exciting open problem.

Let us also comment on a number of follow-up works that came out after version 1 of our paper. The most relevant is~\cite{aaronson2022quantum} which extends our results by introducing the notion of \emph{quantum pseudoentanglement} and connecting it to questions in the study of quantum gravity. Pseudoentangled states are those whose entanglement entropy is computationally intractable to estimate. There is an important distinction in whether one is given merely copies of the pseudoentangled states, or whether one is given descriptions of circuits preparing those states. Our results have focused on the latter, which has been later referred to as \emph{public key quantum pseudoentanglement}~\cite{bouland2023public}. For the former, the authors of~\cite{aaronson2022quantum} showed that (private-key) pseudoentangled states can be constructed assuming only the existence of one-way functions. This contrasts the public key case where both our results and those of~\cite{bouland2023public} require making the more specialised assumption of the hardness of LWE.
Another important distinction between our results and those of~\cite{aaronson2022quantum,bouland2023public}, is that they require entanglement entropy to be hard to estimate across any bipartition of the qubits. In contrast, our results consider only one bipartition of the qubits.
Finally,~\cite{bouland2023public} showed that pseudoentangled states can be made to have a large gap in entanglement entropy (as small as $\Omega(\log n)$ and as large as $O(n)$). This also follows from our result in Appendix~\ref{sect:eratio}, where we examined the entropy ratio between states produced by shallow circuits.\footnote{Though, once again, our result holds for a single bipartition of the qubits, whereas the results in~\cite{bouland2023public} apply for any bipartition.}

A natural open problem raised by our work is to characterise more precisely the complexity classes induced by the shallow-depth versions of entropy difference (\SZKlog, \SZKconst, \QSZKlog, \QSZKconst). We show that these correspond to zero-knowledge protocols in which the verifier and simulator are bounded-depth circuits, but can one give other characterisations of these classes? 
Some work along these lines has already been done in~\cite{dvir2017approximating}. There, the authors considered the entropy estimation problem for polynomial mappings, showing that for low-degree polynomials the resulting problems are complete for $\mathsf{SZK}_L$, the analog of $\mathsf{SZK}$ where the honest verifier and its simulator are computable in logarithmic space.

We have also seen that these depth-restricted classes appear to be \emph{strictly} contained in their counterparts with general polynomial-depth circuits. Thus, if estimating the entropy of shallow circuits is easier, what is the runtime of the optimal quantum algorithm for deciding \QED\ (or \ED) for log-depth and constant-depth circuits?
It is also pertinent to ask this question not just for worst-case instances of the problems, but also for the average case. \lwe\ is assumed to be hard on-average, and thus we can use the reduction to argue that \ED\ should be hard on average for a particular range of parameters. Can we say anything more general? What about the entropy of random quantum circuits? States produced by random circuits will typically be highly entangled and thus subsystems will have close to the maximum amount of entropy. Does this imply that all forms of entropy calculation for random circuits is easy?


Let us also comment on the use of \LWE\ and ETCF functions to derive Theorem~\ref{thm:lwe}. \LWE\ and lattice problems are leading candidates in the development of post-quantum cryptographic protocols due to their conjectured intractability even for quantum algorithms. Moreover, an appealing feature of \LWE\ for cryptographic purposes, is that one can reduce \emph{worst-case} instances of lattice problems to \emph{average-case} instances of \LWE\ \cite{lwe, peikert2016decade}. Combining this with Theorem~\ref{thm:lwe}, means that average-case instances (relative to a certain distribution over input circuits) of \EDconst\ will also be ``at least as hard'' as worst-case lattice problems. Thus, using \LWE\ gives strong evidence that \EDconst\ is quantum intractable.
ETCF functions then provide a very natural instance of an entropy difference problem: determine whether a given function is 1-to-1 or 2-to-1. Such functions differ by one bit in entropy, under a uniform input, but by the injective invariance property it is at least as hard as solving \LWE\ to tell which is which. It is then only necessary, for our results, to show that these functions can be evaluated by circuits of constant depth. As mentioned, this can be achieved in a relatively straightforward manner by first showing that the functions can be evaluated in log depth and then using the compiling technique of Applebaum, Ishai and Kushilevitz to achieve constant depth.

A natural question at this point is whether these ETCF functions were necessary, and could generic one-way functions have sufficed? Furthermore, there are general constructions known as \emph{lossy (trapdoor) functions} \cite{peikert}, which are functions that are either injective or involve many collisions, analogous to our $1$-to-$1$ or $2$-to-$1$ functions. Do these functions not suffice already to give our result? Indeed, our approach gives an example of such a construction of lossy functions. While generic functions could be leveraged to show that there is a separation between the entropies of the two associated distributions, it is not clear \emph{a priori} what the value of this separation is. With our construction it is clear that there is one bit of entropy different between the two distributions, and that our construction can be generalised to give a greater entropy difference. Going further, our construction is adaptable enough to allow us to say something about estimating the entropy of constant-depth quantum circuits with bounded fan-in and bounded fan-out.
Lastly, having depth-efficient constructions for ETCF functions is highly relevant in the context of classical-verifier quantum protocols that make use of these functions (such as the ones from~\cite{mahadev2018classical, brakerski2018, gv19, mv20}). This was, in fact, recently used to develop depth-efficient \emph{proofs of quantumness}~\cite{liu2021depth}. 



With regards to the implications for quantum gravity research, we leave it open whether one can more formally describe our experiment by giving a Hamiltonian for the CFT whose grounds states encode the hard problem, instead of just positing that the CFT is prepared in one of those states. One would also need to understand how the geometry of the spacetime in the bulk is altered by evolution at the boundary and whether an observer in the bulk is able to measure any deformation. 
In fact, one objection that can be raised is that a constant entropy difference might not lead to a noticeable change in geometry. Assuming bulk observers can only differentiate between large-scale changes in geometry, this would translate into the question of whether \emph{entropy ratio} is intractable to estimate, rather than entropy difference. We show that this is indeed the case, in Appendix~\ref{sect:eratio}, by formally defining the entropy ratio problem and showing that it is equivalent (under Turing-Cook reductions) to the entropy difference problem. We also show how the \LWE\ construction can be directly extended to the entropy ratio case, showing a reduction from \LWE\ to estimating the entropy ratio of states produced by shallow circuits.



\subsection*{Acknowledgements} The authors would like to thank François Le Gall for pointing out the distinction between having bounded and unbounded fan-out gates in the quantum case. We are also grateful to Mehmet Burak Şahinoğlu, Thomas Vidick, Grant Salton, Hrant Gharibyan, Junyu Liu and Alexander Poremba for useful comments and discussions. Finally, we thank the anonymous referees for useful comments that improved our manuscript. 
AG acknowledges support from MURI Grant FA9550-18-1-0161 and the IQIM, an NSF Physics Frontiers Center (NSF Grant PHY-1125565) with support of the Gordon and Betty Moore Foundation (GBMF-12500028), as well as from Dr. Max R\"ossler, the Walter Haefner Foundation and the ETH Z\"urich Foundation. MJH acknowledges the FQXi large grant The Emergence of Agents from Causal Order.

\section{Preliminaries} \label{sect:prelim}

\subsection{Notation}
We write $\mathbb{N}$ for the set of natural numbers, $\mathbb{Z}$ for the set of integers, $\mathbb{Z}_q$ for the set of integers modulo $q$, $\mathbb{R}$ for the set of real numbers, $\mH$ for a finite-dimensional Hilbert space, using indices $\mH_A$, $\mH_B$ to specify distinct spaces. $\Lin(\mH)$ is the set of linear operators on $\mH$. We write $\Tr(\cdot)$ for the trace, and $\Tr_B:\Lin(\mH_A \otimes \mH_B )\to \Lin(\mH_A)$ for the partial trace. $\Pos(\mH)$ is the set of positive semidefinite operators and $\Density(\mH)=\{X\in \Pos(\mH):\Tr(X)=1\}$ the set of density matrices (also called states).  

Given $A, B \in \Lin(\mH)$, $\|A\|_1=\Tr\sqrt{A^\dagger A}$ is the Schatten $1$-norm, {$TD(A, B) = \frac{1}{2}\|A - B\|_1$} the trace distance, $F(A, B) = \left[ Tr \sqrt{\sqrt{A} B \sqrt{A}} \right]^2$ the fidelity, and $S(A) = -Tr(A \log A)$ the Von Neumann entropy.

Probability distributions will generally be over $n$-bit binary strings and so will be functions $\mathcal{D} : \{0, 1\}^n \to [0, 1]$ for which $\mathcal{D}(x) \geq 0$, for all $x \in \{0, 1\}^n$ and $\sum_{x \in \{0, 1\}^n} \mathcal{D}(x) = 1$. The Shannon entropy of a distribution is 
\begin{equation}
S(\mathcal{D}) = - \sum_{x \in \{0, 1\}^n} \mathcal{D}(x) \log(\mathcal{D}(x)).
\end{equation}
Here we are abusing notation since $S$ denotes both the Shannon entropy and the Von Neumann entropy and it will be clear from the context which notion of entropy is being used. We will also use $h : [0, 1] \to [0, 1]$ to denote the binary entropy function $h(\epsilon) = -\frac{1}{\epsilon} \log(\epsilon) - \frac{1}{1 - \epsilon} \log(1 - \epsilon)$. 

We write $\supp(\mathcal{D}) = \{x \; | \; \mathcal{D}(x) > 0 \}$ for the support of the distribution $\mathcal{D}$. The Hellinger distance between two distributions is:
\begin{equation}
H^2(\mathcal{D}_1, \mathcal{D}_2) = 1 - \sum_{x} \sqrt{\mathcal{D}_1(x) \mathcal{D}_2(x)}.
\end{equation}
Note that:
\begin{equation}
|| \mathcal{D}_1 - \mathcal{D}_2 ||_{TV} = \frac{1}{2} \sum_{x} | \mathcal{D}_1(x) - \mathcal{D}_2(x)| \leq H^2(\mathcal{D}_1, \mathcal{D}_2),
\end{equation}
where $|| \cdot ||_{TV}$ is the total variation distance.

For a finite set $S$, we write $x \leftarrow_U S$ to mean that $x$ is drawn uniformly at random from $S$. In general, $x \leftarrow_{\chi} S$ will mean that $x$ was drawn according to the distribution $\chi : S \to [0, 1]$ from $S$. We say that a function $\mu : \mathbb{N} \to \mathbb{R}$ is negligible if it goes to $0$ faster than any inverse-polynomial function, i.e. for any polynomial $p : \mathbb{N} \to \mathbb{R}$, $p(n) \mu(n) \to_{n \to \infty} 0$.

\subsection{Complexity theory} \label{sect:complexity}
In this paper we reference the standard complexity classes describing polynomial-time probabilistic and quantum computation, respectively, denoted \BPP\ and \BQP. 
In addition, we also utilize the complexity classes for computations that can be performed by polynomial-size circuits having polylogarithmic depth, \textsf{NC} for classical circuits and \textsf{QNC} for quantum circuits, logarithmic depth \NC\ and \QNC, as well as constant depth, \NCz\ and \QNCz, respectively. We highlight a common distinction between quantum and classical polylog-depth circuits. For classical \textsf{NC} circuits, gates have bounded fan-in but unbounded fan-out, for quantum \QNC\ circuits, gates have bounded fan-in and bounded fan-out. This distinction is important for considering constant-depth circuits, and \QNCf\ denotes the complexity class associated with constant-depth quantum circuits of unbounded fan-out (and bounded fan-in).
For formal definitions of these classes, as well as others mentioned in this paper, we refer to the Complexity Zoo~\cite{zoo}.
Here, we provide the definition of \QSZK, as it is the focus of our main results:

\begin{definition}[Quantum Statistical Zero-Knowledge (\QSZK)]
A promise problem $(L_{yes}, L_{no})$ belongs to \QSZK\ if there exists a uniform family of polynomial-size quantum circuits $V = \{V_n\}_{n > 0}$, known as a verifier such that the following conditions are satisfied:
\begin{itemize}
\item Completeness. For each $x \in L_{yes}$, there exists a prover $P(x)$ exchanging  polynomially-many quantum messages with $V(x)$ that makes $V(x)$ accept with probability greater than $2/3$,
\item Soundness. For each $x \in L_{no}$, any prover $P(x)$ exchanging polynomially-many quantum messages with $V(x)$ will make $V(x)$ accept with probability at most $1/3$,
\item Zero-Knowledge. There exists a uniform family of polynomial-size quantum circuits $S = \{ S_n \}_{n > 0}$, known as a simulator, as well as a negligible function $\varepsilon : \mathbb{N} \rightarrow [0,1]$ such that for all $x \in L_{yes}$, $S(x)$ produces a state that is $\varepsilon$-close in trace distance to the transcript of $V(x) \leftrightarrow P(x)$.
\end{itemize}
\end{definition}

As was shown by Ben-Aroya, Schwartz and Ta-Shma~\cite{bst}, the following problem is \QSZK-complete under polynomial-time reductions:
\begin{definition}[Quantum Entropy Difference (\QED)~\cite{bst}]
Let $C_1$ and $C_2$ be quantum circuits acting on $n + k$ qubits. Define the following $n$-qubit mixed states:
\begin{equation}
\rho_1 = Tr_{k} (C_1 \ket{00...0} \bra{00...0} C_1^{\dagger}) \quad \quad \quad \quad
\rho_2 = Tr_{k} (C_2 \ket{00...0} \bra{00...0} C_2^{\dagger})
\end{equation}
Given $n$, $k$ and descriptions of $C_1$ and $C_2$ as input, decide whether:
\begin{equation}
S(\rho_1) \geq S(\rho_2) + 1/poly(n+k)
\end{equation}
or
\begin{equation}
S(\rho_2) \geq S(\rho_1) + 1/poly(n+k)
\end{equation}
promised that one of these is the case.
\end{definition}
Note that the entropy difference can always be made constant by considering \emph{parallel}\footnote{Hence the new circuits have the same depth as the original ones.} repeated version of $C_1$ and $C_2$. Specifically, if the entropy difference for $C_1$ and $C_2$ is $D = 1/poly(n+k)$, then the difference for $C_1^{\otimes D}$ and $C_2^{\otimes D}$, acting on $(n+k)D$ qubits and when tracing out $kD$ qubits, will be $O(1)$.
For this reason, throughout this paper we will consider the entropy difference problem (and all its variants) with a constant entropy difference. 

The classical/probabilistic analogues of the above definitions are represented by the class \SZK\ and the Entropy Difference (\ED) problem. In analogy to \QED, \ED\ is defined as the problem of distinguishing the output entropies of classical circuits having uniformly random inputs. It was shown in~\cite{goldreich} that \ED\ is \SZK-complete.

\subsection{Cryptography} \label{subsect:lwe}
In this section we define the relevant cryptographic notions that we will be using.

\subsubsection{Learning with Errors (LWE)}
The Learning with Errors (\LWE) problem, first introduced by Regev~\cite{lwe}, is the problem of solving an approximate system of linear equations. In other words, given a matrix $A \in \mZ_q^{n \times m}$, as well as a vector $y = As + e \; (mod \; q)$, with $y \in \mZ_q^m$, $s \in \mZ_q^n$ and $e \leftarrow_{\chi^m} \mZ_q^m$ (here $\chi$ denotes a truncated Gaussian distribution that is described below), the problem is to determine $s$. This is known as the \emph{search} version of \LWE\ though there is also a \emph{decision} version. The decision version is to distinguish between a vector of the form $As + e$ and a uniformly random vector $u \leftarrow_U \mathbb{Z}^m_q$. We refer to the tuple $(A, As + e)$ as an \emph{LWE sample}.

The truncated Gaussian distribution is defined as follows. For a positive real $B$ and positive integer $q$, the truncated discrete Gaussian distribution over $\mZ_q$ with parameter $B$ is supported on $\{x\in\mZ_q:\,\|x\|\leq B\}$ and has density
\begin{equation}\label{eq:d-bounded-def}
 D_{\mZ_q,B}(x) \,=\, \frac{e^{\frac{-\pi\lVert x\rVert^2}{B^2}}}{\sum\limits_{x\in\mZ_q,\, \|x\|\leq B}e^{\frac{-\pi\lVert x\rVert^2}{B^2}}} \;.
\end{equation}
For a positive integer $m$, the truncated discrete Gaussian distribution over $\mZ_q^m$ with parameter $B$ is supported on $\{x\in\mZ_q^m:\,\|x\|\leq B\sqrt{m}\}$ and has density
\begin{equation}\label{eq:d-bounded-def-m}
\forall x = (x_1,\ldots,x_m) \in \mZ_q^m\;,\qquad D_{\mZ_q^m,B}(x) \,=\, D_{\mZ_q,B}(x_1)\cdots D_{\mZ_q,B}(x_m)\;.
\end{equation}

As shown in \cite{lwe,PRS17}, for any $\alpha>0$ such that $\sigma = \alpha q \ge 2 \sqrt{n}$ the LWE problem, with error vector $e$ sampled from $D_{\mZ_q,\sigma}$, is at least as hard (under a quantum poly-time reduction) as approximating the shortest independent vector problem (SIVP) to within a factor of $\gamma = \tilde{O}({n}/\alpha)$ in \emph{worst case} dimension $n$ lattices.

Typically, an instance of \LWE\ is parametrized by a number $\lambda > 0$, referred to as the \emph{security parameter}. We think of all the LWE parameters $n, m, q$ as being functions of $\lambda$. The so-called (non-uniform) ``\LWE\ assumption'' is that no quantum polynomial-time procedure can solve LWE (either in the search or decision version) with more than a negligible advantage in $\lambda$, even when given access to a quantum polynomial-size state that depends on $\lambda$.

\subsubsection{Claw-free functions}

For the reduction from \lwe\ to \QEDlog\ we make use of ``extended trapdoor claw-free functions (ETCF),'' introduced in~\cite{brakerski2018, mahadev2018classical}.
A claw-free function is a type of hash function which has pairs of collisions known as ``claws''. The function should be efficient to evaluate but it should be intractable to find claws (given a description of the function). This is referred to as the ``claw-free'' property and is based on some computational assumption. In~\cite{brakerski2018, mahadev2018classical}, the authors construct such functions for which the claw-free property is based on the intractability of LWE. The trapdoor property means that the functions can be inverted efficiently, given a secret trapdoor, however we do not require this property for our results. Finally, ETCFs are pairs consisting of claw-free functions and injective functions. The latter do not have collisions, as a result of being injective. However, the injective functions are constructed so that they are indistinguishable from the claw-free functions. This property is known as \emph{injective invariance} and is the key property we will use to prove the hardness of entropy estimation. We now give a simplified definition of ETCFs and refer the reader to~\cite{brakerski2018, mahadev2018classical} for the full version.

\begin{definition}[Extended Trapdoor Claw-free Functions (ETCFs)~\cite{brakerski2018,mahadev2018classical}] \label{def:etcfs}
  A pair of functions $f,g : \mathcal{X} \to \mathcal{Y}$ forms an Extended Trapdoor Claw-free Function (ETCF) pair if they satisfy the following properties:
  \begin{itemize}
    \item \textbf{Efficient generation}. There exists a polynomial-time algorithm $\textsf{Gen}\left(1^{\lambda}\right)$, $\lambda > 0$, that generates circuit descriptions for $f$ and $g$. Here and throughout the paper, $f$ will denote the injective function and $g$ will denote the claw-free function. In other words, $f$ is a 1-to-1 mapping, while $g$ is a 2-to-1 mapping.
    \item \textbf{Efficient evaluation}. There exist polynomial-time evaluation algorithms for $f$ and $g$. In other words, given any $x \in \mathcal{X}$, both $f(x)$ and $g(x)$ can be computed in polynomial time.  
    \item \textbf{Claw-free}. There is no polynomial-time algorithm that, given a description of $g$, is able to output $x_0, x_1 \in \mathcal{X}$ such that $g(x_0) = g(x_1)$, with probability greater than $\negl(\lambda)$. 
    \item \textbf{Trapdoor}. The generation algorithm, $\textsf{Gen}\left(1^{\lambda}\right)$, also outputs strings $t_f$ and $t_g$ (known as trapdoors) which allow for efficiently inverting $f$ and $g$, respectively. In other words, there exists a polynomial-time algorithm that given $t_f$ ($t_g$) and $f(x)$ ($g(x)$) outputs $x$ ($x_0$ and $x_1$ such that $g(x_0)=g(x_1)=g(x)$), for all $x \in \mathcal{X}$.
    \item \textbf{Injective invariance} For any polynomial-time algorithm $\mathcal{A}$, which receives as input a description of either $f$ or $g$, it is the case that:
      \[ \left| \Pr \left[ 0 \leftarrow \mathcal{A}(f) \right] - \Pr \left[ 0 \leftarrow \mathcal{A}(g) \right] \right| = \negl(\lambda). \]
    In other words, no efficient algorithm can distinguish between the claw-free function and the injective function, given a description of either.
  \end{itemize}
\end{definition}

\subsubsection{Randomized encodings}
To prove the reduction from \LWE\ to the constant depth version of the entropy difference problem, we make use of randomized encodings. The following definitions and results are taken from~\cite{aik}:

\begin{definition}[Uniform randomized encoding~\cite{aik}] \label{def:randenc}
Let $f : \{0, 1\}^* \to \{0, 1\}^*$ be a polynomial-time computable function and $l(n)$ an output length function such that $|f(x)| = l(|x|)$, for every $x \in \{0, 1\}^*$. We say that a function $\hat{f} : \{0, 1\}^* \times \{0, 1\}^* \to \{0, 1\}^*$ is a $\delta(n)$-correct, $\varepsilon(n)$-private randomized encoding of $f$, if it satisfies the following properties:
\begin{itemize}
\item \textbf{Length regularity.} There exist polynomially-bounded and efficiently computable length functions $m(n)$, $s(n)$ such that for every $x \in \{0, 1\}^{n}$ and $r \in \{0, 1\}^{m(n)}$ we have that $|\hat{f}(x, r)| = s(n)$.
\item \textbf{Efficient evaluation.} There exists a polynomial-time evaluation algorithm that, given $x \in \{0, 1\}^*$ and $r \in \{0, 1\}^{m(n)}$, outputs $\hat{f}(x, r)$.
\item $\delta$-\textbf{correctness.} There exists a polynomial-time algorithm $C$, called a decoder, such that for every input $x \in \{0, 1\}^n$, 
\begin{equation}
\Pr_{r \leftarrow_U \{0, 1\}^{m(n)}}[C(1^n, \hat{f}(x, r)) \neq f(x)] \leq \delta(n).
\end{equation}
\item $\varepsilon$-\textbf{privacy.} There exists a probabilistic polynomial-time algorithm $S$, called a simulator such that for every $x \in \{0, 1\}^n$ and $r \leftarrow_U \{0, 1\}^{m(n)}$,
\begin{equation}
|| S(1^n, f(x)) - \hat{f}(x, r) ||_{TV} \leq \varepsilon(n).
\end{equation}
\end{itemize}
\end{definition}
Correctness (and privacy, respectively) can be \emph{perfect} when $\delta = 0$ ($\varepsilon = 0$, respectively), or \emph{statistical} when $\delta(n)$ ($\varepsilon(n)$, respectively) is a negligible function in $n$. Thus, a \emph{perfect randomized encoding} is one that has perfect correctness and perfect privacy\footnote{In fact a perfect randomized encoding needs to satisfy two further properties called \emph{balance} and \emph{stretch-preservation}~\cite{aik}, though we omit them here since they are not important for our results.}. Similarly, a \emph{statistical randomized encoding} is one that has statistical correctness and statistical privacy.

\begin{lemma}[Unique randomness~\cite{aik}] \label{lemma:uniquerand}
Suppose $\hat{f}$ is a perfect randomized encoding of $f$. Then,
\begin{itemize}
\item[(a)] for any input $x$, $\hat{f}(x, \cdot)$ is injective, i.e. there are no distinct $r, r'$, such that $\hat{f}(x, r) = \hat{f}(x, r')$, 
\item[(b)] if $f$ is a permutation then so is $\hat{f}$.
\end{itemize}
\end{lemma}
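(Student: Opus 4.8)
The plan is to derive both statements from the interplay of perfect correctness with the balance and stretch-preservation conditions that complete the AIK definition of a perfect randomized encoding (the two properties suppressed after Definition~\ref{def:randenc}). The single fact I will extract from those extra conditions is that, writing $m = m(n)$, $l = l(n)$ and $s = s(n)$, one has $s = l + m$ and, for every $y$ in the range of $f$, the simulator output $S(1^n, y)$ is the \emph{uniform} distribution on a set of exactly $2^{s-l} = 2^m$ strings. Everything else is elementary counting.

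For part (a), fix an input $x \in \{0,1\}^n$ and consider the map $g \colon \{0,1\}^m \to \{0,1\}^s$ given by $g(r) = \hat{f}(x,r)$. Perfect privacy ($\varepsilon = 0$) says that $g(U_m)$, the pushforward of the uniform distribution on $\{0,1\}^m$ under $g$, is identically distributed to $S(1^n, f(x))$, which by the fact above is uniform on a set $T$ with $|T| = 2^m$. But for any $w \in T$ we have $\Pr_r[g(r) = w] = |g^{-1}(w)|/2^m = 1/|T| = 2^{-m}$, forcing $|g^{-1}(w)| = 1$, while any $w \notin T$ has no preimage; since the $2^m$ domain elements are thereby in bijection with $T$, the map $\hat{f}(x,\cdot)$ is injective. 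This is the step I expect to be the crux: correctness and privacy \emph{alone} do not suffice — the constant encoding $\hat{f}(x,r) \equiv f(x)$ is perfectly correct and perfectly private yet wildly non-injective — so the uniform-support guarantee coming from balance and stretch-preservation is doing the real work.

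For part (b), I first record a disjointness fact that follows purely from perfect correctness ($\delta = 0$): the decoder satisfies $C(1^n,\hat{f}(x,r)) = f(x)$ for \emph{every} $r$, so any two encodings whose underlying inputs have distinct $f$-values must themselves be distinct. Now suppose $f$ is a permutation of $\{0,1\}^n$, so that $l = n$ and $f$ is injective. Combining the disjointness fact (encodings of $x \neq x'$ never collide, because $f(x) \neq f(x')$) with part (a) (each slice $\hat{f}(x,\cdot)$ is injective), the full map $\hat{f} \colon \{0,1\}^{n} \times \{0,1\}^{m} \to \{0,1\}^{s}$ is injective. Finally stretch-preservation gives $s = l + m = n + m$, so the domain $\{0,1\}^{n+m}$ and codomain $\{0,1\}^{s}$ have the same cardinality $2^{n+m}$; an injection between finite sets of equal size is a bijection, hence $\hat{f}$ is a permutation. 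The only nonroutine ingredient in the whole argument is the uniform-$2^m$-support property isolated at the start, which is exactly what the omitted balance and stretch-preservation clauses provide.
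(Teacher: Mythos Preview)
The paper does not prove this lemma; it is stated with a citation to~\cite{aik} and invoked as a black box in the proof of Theorem~\ref{thm:lweedconst}. Your argument is essentially the one given in that reference: you correctly identify that the balance and stretch-preservation axioms---explicitly flagged as omitted in the footnote following Definition~\ref{def:randenc}---are indispensable, and your counterexample of the constant encoding $\hat{f}(x,r)\equiv f(x)$ pinpoints exactly why perfect correctness and perfect privacy alone cannot suffice. The counting in part~(a) (uniform pushforward onto a set of size $2^m$ forces injectivity of a map with domain of size $2^m$) and the cardinality matching in part~(b) via $s=l+m$ are both correct. So there is nothing in this paper to compare against, but your reconstruction of the cited proof is sound.
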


\begin{theorem}[\cite{aik}] \label{thm:pren}
Any function $f \in \NC$ admits a perfect randomized encoding in $\mathsf{NC}^0_4$. Moreover, constructing the randomized encoding of $f$ can be achieved in time polynomial in the size of the circuit that evaluates $f$.
\end{theorem}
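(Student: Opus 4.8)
The plan is to follow the two-stage strategy of representing $f$ by an algebraic object whose output is a determinant, randomizing that determinant into low-degree polynomials, and then flattening the degree into constant locality. First I would invoke Barrington's theorem to replace a circuit for $f \in \NC$ by a polynomial-size (indeed bounded-width) branching program, and then pass to its mod-$2$ counting version, so that $f(x)$ equals the parity of the number of source-to-sink paths. The standard graph-theoretic fact is that this path count can be read off as $\det L(x)$ over $\mathbb{F}_2$, where $L(x)$ is a square matrix of dimension one less than the number of vertices, carrying $1$'s on the super-diagonal, zeros below it, and affine (degree-$1$) functions of the input bits on and above that diagonal. Assembling $L(x)$ from the branching program is a purely syntactic, poly-time operation, which also settles the efficiency claim for this stage.

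The heart of the argument is the algebraic randomization step. The idea is that $\det$ is invariant under multiplication by unimodular matrices, so $R_1 L(x) R_2$ has the same determinant as $L(x)$ whenever $\det R_1 = \det R_2 = 1$. Choosing $R_1$ and $R_2$ from suitably restricted families of unitriangular matrices (upper-unitriangular for $R_1$, and a one-column family for $R_2$), one shows that the distribution of the entries of $R_1 L(x) R_2$, over uniform choices of the free entries of $R_1$ and $R_2$, depends on $x$ \emph{only} through $f(x) = \det L(x)$. This is exactly the perfect-privacy requirement, and perfect correctness follows because $f(x)$ is recovered from the encoding by computing its determinant; the decoder and simulator demanded by Definition~\ref{def:randenc} are then immediate. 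Crucially, each entry of $R_1 L(x) R_2$ is a polynomial of total degree at most $3$ in the combined input-and-randomness variables, so this yields a perfect randomized encoding of \emph{algebraic degree} $3$.

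Degree $3$ does not yet imply constant locality, so the final step is a generic locality-reduction transformation. I would encode each degree-$3$ output polynomial, written as a sum of monomials, using a fresh layer of additive telescoping masking, so that every new output bit reads at most one monomial plus a constant number of mask bits, giving output locality $4$. Since the composition of a perfect randomized encoding with another perfect randomized encoding is again a perfect randomized encoding, composing the degree-$3$ encoding with this locality reducer produces an encoding in $\mathsf{NC}^0_4$; each output bit is a fixed constant-locality function, the encoder is uniform, and it is constructible in time polynomial in the branching-program (hence circuit) size. The main obstacle is the randomization lemma of the second paragraph: proving that the restricted unimodular conjugation makes the output distribution a function of $\det L(x)$ alone, and of no other feature of $x$, requires a careful choice of the matrix families together with a rank/counting argument showing that, for each fixed determinant value, the reachable matrices are hit uniformly. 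Getting the locality reducer to preserve perfect privacy while attaining locality exactly $4$ (rather than $5$) is the secondary technical point, and is where the precise form of the degree-$3$ polynomials must be exploited.
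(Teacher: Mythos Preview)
The paper does not give its own proof of this theorem: it is quoted verbatim as a result of Applebaum, Ishai and Kushilevitz, with the citation \cite{aik} and no accompanying argument. So there is nothing in the paper to compare your proposal against beyond the bare statement.

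That said, your outline is essentially the proof from \cite{aik}. The three stages you describe --- (i) represent an $\NC$ function as the $\mathbb{F}_2$-determinant of a matrix $L(x)$ with affine entries, via a mod-$2$ branching program; (ii) randomize by $R_1 L(x) R_2$ with $R_1,R_2$ drawn from carefully chosen unitriangular families, so that the output distribution depends only on $\det L(x)$ and every entry has degree $\le 3$; (iii) apply a generic locality-reduction gadget to turn degree-$3$ outputs into locality-$4$ outputs --- are exactly the ingredients of their construction. One small remark: invoking Barrington's theorem is a slight detour, since the route in \cite{aik} goes more directly through polynomial-size (not bounded-width) branching programs for formulas, but either path lands on a polynomial-size mod-$2$ branching program and the rest proceeds identically. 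Your identification of the randomization lemma as the crux, and of the locality-$4$ (rather than $5$) bound as requiring the specific structure of the degree-$3$ terms, is accurate.
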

\noindent Here $\mathsf{NC}^0_4$ denotes the set of uniform constant-depth circuits having output locality 4. In other words, each output bit depends on at most 4 input bits.

It should be noted here that the randomized encoding constructions from~\cite{aik} use gates of unbounded fan-out (at least in the first circuit layer).

\section{Entropy difference for shallow circuits with unbounded fan-out gates}
In this section we examine the hardness of the entropy difference problem for circuits whose depth scales at most (poly)logarithmically with the size of the input. Throughout this section, unless otherwise stated, we assume that gates in the circuits have bounded fan-in and unbounded fan-out. We will address the case of bounded fan-out at the end of this section.

We start by formally defining the entropy difference problem, in both the classical and the quantum cases, for circuits with depth scaling as $\delta(n)$, for inputs of size $n$ and where $\delta : \mathbb{N} \to \mathbb{N}$, is a monotonically increasing function:

\begin{definition}[\EDf]
Let $C_1$ and $C_2$ be reversible boolean circuits acting on $n + k$ bits such that $depth(C_1) \leq \delta(n+k)$, $depth(C_2) \leq \delta(n+k)$. Let
\begin{equation}
\mathcal{D}_1(y) = \Pr_{x \leftarrow_{U} \{0, 1\}^{n+k}} \left[ y = Tr_{k}(C_1(x)) \right]  \quad \quad \quad \mathcal{D}_2(y) = \Pr_{x \leftarrow_{U} \{0, 1\}^{n+k}} \left[ y = Tr_{k}(C_2(x)) \right]
\end{equation}
denote the output distributions of the circuits when restricted to the first $n$ output bits and with the input chosen uniformly at random.

Given $n$, $k$ and descriptions of $C_1$ and $C_2$ as input, decide whether:
\begin{equation}
S(\mathcal{D}_1) \geq S(\mathcal{D}_2) + 1 \quad\quad \text{or} \quad\quad
S(\mathcal{D}_2) \geq S(\mathcal{D}_1) + 1
\end{equation}
promised that one of these is the case\footnote{As mentioned in Subsection~\ref{sect:complexity}, the gap in entropy difference can be $1/poly(n + k)$, but this can always be made constant by simply taking parallel repeated versions of the circuits and using the fact that entropy is additive. We restrict to the case where the entropy difference is at least $1$, unless otherwise specified.}.
\end{definition}

\begin{definition}[\QEDf]
Let $C_1$ and $C_2$ be quantum circuits acting on $n + k$ qubits such that $depth(C_1) \leq \delta(n+k)$, $depth(C_2) \leq \delta(n+k)$. Define the following $n$-qubit mixed states:
\begin{equation}
\rho_1 = Tr_{k} (C_1 \ket{00...0} \bra{00...0} C_1^{\dagger}) \quad \quad \quad \quad
\rho_2 = Tr_{k} (C_2 \ket{00...0} \bra{00...0} C_2^{\dagger})
\end{equation}
Given $n$, $k$ and descriptions of $C_1$ and $C_2$ as input, decide whether:
\begin{equation}
S(\rho_1) \geq S(\rho_2) + 1 \quad\quad \text{or} \quad\quad
S(\rho_2) \geq S(\rho_1) + 1
\end{equation}
promised that one of these is the case.
\end{definition}

Note that if $\delta(n) = O(\log(n))$ we obtain the definitions for \EDlog\ and \QEDlog, respectively, and if $\delta(n) = O(1)$, we obtain the definitions for \EDconst\ and \QEDconst, respectively\footnote{We slightly abuse notation here since $O(\log(n))$ and $O(1)$ are sets of functions and so, correspondingly, \EDlog\ (\QEDlog) and \EDconst\ (\QEDconst) will also be sets of problems. However, our results are valid for any instances of the problems in these classes.}. For the case when the depth of the circuits is constant (or sub-logarithmic, in general), we need to distinguish between cases when the gates have bounded or unbounded fan-out (fan-in will always be assumed to be bounded). As such, \EDconst\ and \QEDconst\ will be problem instances in which the circuits have gates of unbounded fan-out, whereas we denote as $ED_{O(1)}^b$ and $QED_{O(1)}^b$ the analogs of \EDconst\ and \QEDconst\ with gates of bounded fan-out. Note that when $\delta(n) = poly(n)$, we recover the original definitions of \ED\ and \QED, respectively. As those problems are complete for \SZK\ and \QSZK, we proceed to define the analogous classes \SZKf\ and \QSZKf\ of problems that poly-time reduce to \EDf\ and \QEDf.

\begin{definition}[\SZKf, \QSZKf]
We let \SZKf\ (\QSZKf) consist of all promise problems $(L_{yes}, L_{no})$ for which there exists a deterministic polynomial-time reduction to \EDf\ (\QEDf).
\end{definition}

\noindent Taking $\delta(n) = polylog(n)$, we now show that these classes have the following characterisation:

\begin{lemma}
\SZKlogd\ (\QSZKlogd) is the set of promise problems that admit a (quantum) statistical zero-knowledge protocol in which the verifier circuit and the simulator circuit have depth $polylog(n)$, for inputs of size $n$.
\end{lemma}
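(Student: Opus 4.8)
The plan is to establish the two inclusions separately, in each case reusing the structure of the standard completeness theorems --- Goldreich--Vadhan for \ED/\SZK\ and Ben-Aroya--Schwartz--Ta-Shma for \QED/\QSZK\ --- but tracking the \emph{depth} of every circuit that is either produced by the reduction or executed by the verifier and simulator. The single fact that makes the lemma work is that all of the transformations appearing in those proofs add only $polylog$-depth overhead, so that a $polylog$-depth entropy-difference instance corresponds to a protocol with $polylog$-depth verifier and simulator, and conversely.

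For the inclusion of \SZKlogd\ (\QSZKlogd) into the class of problems with $polylog$-depth protocols: by definition a problem here reduces in polynomial time to an \EDlogd\ (\QEDlogd) instance, i.e.\ to a pair of $polylog$-depth circuits $C_1,C_2$ with a guaranteed entropy gap. I would then exhibit the canonical zero-knowledge protocol for entropy difference, in which the verifier's only nontrivial actions are to sample a uniform input and apply one of the circuits $C_b$, to run the low-depth gap-amplification preprocessing (parallel repetition, together with the Sahai--Vadhan polarization of the associated statistical-difference instance), and to apply a universal hash. Each ingredient is $polylog$-depth: applying $C_b$ is $polylog$ by hypothesis, polarization is parallel repetition followed by an XOR-tree of depth $O(\log n)$, and hashing is a matrix--vector product computable in $\mathsf{NC}^1$. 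Hence the verifier and the simulator, both of which merely call these subroutines, have depth $polylog(n)$, while completeness, soundness, and the zero-knowledge guarantee are inherited verbatim from the standard analysis.

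For the reverse inclusion, I would take a protocol with a $polylog$-depth verifier $V$ and $polylog$-depth simulator $S$ and run the completeness reduction, which constructs from $V$ and $S$ the circuits (quantum circuits) preparing the distributions (reduced density matrices) associated with the real and simulated views of the interaction. The depth of these view-preparing circuits is the depth of $V$ and $S$ plus the low-depth glue needed to concatenate successive rounds and to perform the entropy-gap amplification. Since $V$ and $S$ are $polylog$-depth and the glue is $O(\log n)$-depth, the resulting \ED\ (\QED) instance is an \EDlogd\ (\QEDlogd) instance, placing the problem in \SZKlogd\ (\QSZKlogd).

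The step I expect to be the main obstacle is verifying that the completeness reduction is genuinely depth-preserving. Concretely, one must check that (i) the polarization lemma and the passage between entropy difference and statistical difference can be realized with only an additive $O(\log n)$ (in fact often constant) increase in depth; (ii) the view-construction circuits stitch the rounds of $V$ and $S$ together without accumulating more than $polylog$ depth; and (iii) in the quantum case the round-by-round preparation of view states behaves the same way. A further point to handle carefully is whether the equivalence between honest-verifier and general (quantum) statistical zero-knowledge can be made to preserve the $polylog$ depth bound on the verifier and simulator; if it cannot, the lemma should be read as referring to the honest-verifier versions, for which the above depth accounting goes through directly.
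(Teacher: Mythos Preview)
Your proposal is correct and follows essentially the same approach as the paper: both argue that the standard \ED/\SZK\ and \QED/\QSZK\ completeness proofs go through verbatim once one tracks that every circuit transformation involved adds only $polylog$ depth. In fact your write-up is considerably more detailed than the paper's own proof, which simply cites the standard proofs (in \cite{vidick2016quantum}) and asserts that replacing all circuits by their $polylog$-depth counterparts yields the claim; the depth-preservation concerns you flag (polarization, view-stitching, honest-verifier vs.\ general zero-knowledge) are not separately addressed there.
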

\begin{proof}
To prove this result we need to show that \EDlogd\ (\QEDlogd) is contained and complete for the class of problems that admit a (quantum) statistical zero-knowledge protocol in which the verifier circuit and the simulator circuit have depth $polylog(n)$.
This follows immediately from the proofs that \ED\ (\QED) is contained and complete for \SZK\ (\QSZK), from~\cite{vidick2016quantum}, by replacing all circuits in those proofs with the corresponding circuits of depth $polylog(n)$.
\end{proof}

\subsection{Entropy difference is easier for shallow circuits}
With the above definitions, let us now address the question of whether \EDlogd\ (\QEDlogd) is \SZK-complete (\QSZK-complete). From the above lemma, we see that this is the same as asking whether $\SZKlogd = SZK$ ($\QSZKlogd = QSZK$). In other words, can any \SZK\ (\QSZK) protocol be turned into a protocol having polylog-depth circuits for the simulator and verifier? Providing an \emph{unconditional} negative answer seems difficult without proving explicit circuit lower-bounds. 
Instead, we will give ``complexity-theoretic evidence'' that the answer is no. We first show that there exists an oracle, $\mathcal{O}$, such that, relative to $\mathcal{O}$, \SZKlogd\ $\neq$ \SZK\ (\QSZKlogd $\neq$ \QSZK).

\begin{theorem}
There exists an oracle $\mathcal{O}$ such that $\mathsf{SZK^{\mathcal{O}}_{polylog}} \neq \mathsf{SZK}^{\mathcal{O}}$ and $\mathsf{QSZK^{\mathcal{O}}_{polylog}} \neq \mathsf{QSZK}^{\mathcal{O}}$.
\end{theorem}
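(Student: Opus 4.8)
The plan is to leverage an existing oracle separation as a black box rather than construct one from scratch. The key observation is that the theorem statement explicitly points to the oracle of Chia, Chung and Lai~\cite{chiachunglai}, which witnesses $\mathsf{BPP}^{\mathsf{QNC}^{\mathcal{O}}} \neq \mathsf{BQP}^{\mathcal{O}}$. So the strategy is to exhibit a problem that, relative to this oracle $\mathcal{O}$, lies in $\mathsf{SZK}^{\mathcal{O}}$ (respectively $\mathsf{QSZK}^{\mathcal{O}}$) but provably not in $\mathsf{SZK}^{\mathcal{O}}_{\mathrm{polylog}}$ (respectively $\mathsf{QSZK}^{\mathcal{O}}_{\mathrm{polylog}}$). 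The natural candidate is the very problem separating $\mathsf{BQP}^{\mathcal{O}}$ from $\mathsf{BPP}^{\mathsf{QNC}^{\mathcal{O}}}$, since intuitively a polylog-depth zero-knowledge protocol can only certify statements whose honest computation is itself shallow.

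First I would establish the two containment directions needed to sandwich the separating problem. For the upper bound, I would use the standard inclusion $\mathsf{BQP} \subseteq \mathsf{QSZK}$ (and $\mathsf{BPP} \subseteq \mathsf{SZK}$), which relativizes, to place the CCL separating language inside $\mathsf{QSZK}^{\mathcal{O}}$. For the lower bound — the heart of the argument — I would show that any problem in $\mathsf{QSZK}^{\mathcal{O}}_{\mathrm{polylog}}$ can in fact be decided in $\mathsf{BPP}^{\mathsf{QNC}^{\mathcal{O}}}$. The intuition, drawing on the lemma proved just above in the excerpt, is that a $\mathsf{QSZK}_{\mathrm{polylog}}$ protocol is one whose verifier \emph{and} simulator are circuits of polylogarithmic depth; since the simulator produces the accepting transcript and the verifier checks it, a $\mathsf{BPP}^{\mathsf{QNC}^{\mathcal{O}}}$ machine can run the simulator to generate a transcript and run the verifier to check acceptance, all in polylogarithmic quantum depth with classical polynomial-time control. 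This yields the chain
\begin{equation}
\mathsf{QSZK}^{\mathcal{O}}_{\mathrm{polylog}} \subseteq \mathsf{BPP}^{\mathsf{QNC}^{\mathcal{O}}} \subsetneq \mathsf{BQP}^{\mathcal{O}} \subseteq \mathsf{QSZK}^{\mathcal{O}},
\end{equation}
where the strict middle inclusion is exactly the CCL separation, immediately giving $\mathsf{QSZK}^{\mathcal{O}}_{\mathrm{polylog}} \neq \mathsf{QSZK}^{\mathcal{O}}$. The classical case runs identically with $\mathsf{SZK}$ in place of $\mathsf{QSZK}$ and $\mathsf{BPP}$ in place of $\mathsf{BQP}$, using the analogous classical oracle separation.

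The hard part will be making precise the claim that a polylog-depth simulator together with a polylog-depth verifier lets the whole acceptance decision be carried out in $\mathsf{BPP}^{\mathsf{QNC}^{\mathcal{O}}}$. Two technical subtleties must be handled carefully. First, the zero-knowledge guarantee is only that the simulator's output is $\varepsilon$-close in trace distance to the real transcript on ``yes'' instances, so I would need to argue that running the verifier on the simulated transcript still accepts with probability bounded away from the soundness threshold, and that the completeness/soundness gap survives the trace-distance error — a standard amplification-and-triangle-inequality argument. Second, and more delicately, I must confirm that composing the simulator and verifier does not secretly inflate the depth: each is individually polylog-depth, and since $\mathsf{QNC}$ is closed under composing a constant (or even polylogarithmic) number of polylog-depth layers, the combined circuit remains in $\mathsf{QNC}$, with the classical $\mathsf{BPP}$ layer orchestrating the (polynomially many) message rounds and the final accept/reject decision. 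I would also remark, following the excerpt, that the conjectured analogue using the Coudron--Menda oracle~\cite{coudronmenda} would go through by the identical argument, since only the abstract separation $\mathsf{BPP}^{\mathsf{QNC}^{\mathcal{O}}} \neq \mathsf{BQP}^{\mathcal{O}}$ is used.
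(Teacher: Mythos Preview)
Your proposed containment $\mathsf{QSZK}^{\mathcal{O}}_{\mathrm{polylog}} \subseteq \mathsf{BPP}^{\mathsf{QNC}^{\mathcal{O}}}$ via ``run the simulator, then run the verifier on its output'' does not go through, and the gap is not in the two subtleties you flag but in the soundness direction. The zero-knowledge property constrains the simulator \emph{only on yes instances}; on a no instance $x$, the circuit $S(x)$ is an arbitrary polylog-depth computation with no guarantee whatsoever. Nothing prevents $S(x)$ from outputting a state in which the verifier's accept bit is set, because a simulated ``transcript'' need not be consistent with the verifier's actual message-generating behaviour and hence need not correspond to any prover strategy to which soundness would apply. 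Indeed, your argument uses nothing specific to polylog depth: applied verbatim with $\delta(n)=\mathrm{poly}(n)$ it would prove $\mathsf{QSZK}\subseteq\mathsf{BQP}$ (and $\mathsf{SZK}\subseteq\mathsf{BPP}$), which we do not believe. A secondary issue is your treatment of the classical case: the CCL separating language $d$-\textsf{SSP} lies in $\mathsf{BQP}^{\mathcal{O}}$, not in $\mathsf{BPP}^{\mathcal{O}}$, so the relativized inclusion $\mathsf{BPP}\subseteq\mathsf{SZK}$ does not place it in $\mathsf{SZK}^{\mathcal{O}}$; one needs instead the observation that $1$-to-$1$ versus $2$-to-$1$ is an instance of \ED, hence in $\mathsf{SZK}^{\mathcal{O}}$ directly.

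The paper avoids these issues by not attempting any containment of $\mathsf{QSZK}^{\mathcal{O}}_{\mathrm{polylog}}$ in another class. Instead it argues directly, for the specific problem $d$-\textsf{SSP}, that \emph{any} polylog-depth oracle circuit --- in particular the simulator --- produces outputs that are $\mathrm{poly}(n)/2^n$-close in trace distance whether the oracle hides a $1$-to-$1$ or a $2$-to-$1$ function (this is a hybrid argument over the oracle calls, invoking the indistinguishability lemma of~\cite{chiachunglai} at each layer). Consequently the verifier's transcript on a yes instance is negligibly close to what it would be on a no instance, and since the verifier itself is polylog-depth it cannot amplify this to a constant advantage. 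The point is that the argument works by comparing the \emph{same} protocol on the two oracle types, so no control over the simulator's behaviour on no instances is ever needed.
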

\begin{proof}
The proof will be primarily for the quantum case, since the classical case is analogous, though we will specify whenever there is a distinction between the two.
The oracle in our proof will be identical to the one of Chia, Chung and Lai~\cite{chiachunglai}, showing the separation $\mathsf{BPP}^{\mathsf{QNC}^{\mathcal{O}}} \neq \mathsf{BQP}^{\mathcal{O}}$. 
Their oracle provides access to a function $f : \{0, 1\}^n \to \{0, 1\}^n$ that is promised to be either $1$-to-$1$ or $2$-to-$1$. However, the oracle does not give direct query access to $f$. Instead, the oracle allows for the querying of $d + 1$ functions $f_0, f_1 ... , f_d : \{0, 1\}^n \to \{0, 1\}^n$ such that $f = f_d \circ ... \circ f_0$, for some $d > 0$. The functions $f_0, f_1, ... , f_{d-1}$ are $1$-to-$1$ functions and $f_{d}$ is either a $1$-to-$1$ function or a $2$-to-$1$ function depending on whether $f$ is $1$-to-$1$ or $2$-to-$1$.
This is referred to as a \emph{$d$-shuffling oracle}. Its purpose is to force any algorithm that attempts to query $f$ to first evaluate the $d+1$ functions. This is achieved by having the image of each function be a random subset of its co-domain. In other words, each function will be defined as $f_i : S_i^{(i)} \to S_{i+1}^{(i+1)}$, with $S_i^{(i)} \subseteq \{0, 1\}^n$. The input domain, however, will be a set $S_0 \subseteq \{0, 1\}^n$ chosen uniformly at random from subsets of $n$-bit strings. Thus, the image of $f_1$ on $S_0$ will be $Im_{S_0}(f_1) = f_1(S_0) = S_1$ and in general $S_i = Im_{S_{i-1}}(f_i)$.

The problem that Chia, Chung and Lai define relative to this oracle is to determine whether the function $f : S_0 \to S_{d+1}$ is $1$-to-$1$ or $2$-to-$1$. In the latter case, the function also has Simon's property so that the problem (called \emph{$d$-shuffling Simon's problem}, or $d$-\textsf{SSP}) can be solved efficiently in quantum polynomial time, thus showing containment in $\BQP^{\mathcal{O}}$. Using the properties of the $d$-shuffling oracle it is possible to show that no quantum circuit of depth smaller than $d$ can solve the problem, even when alternating these quantum circuits with classical circuits of polynomial depth. Thus, taking $d = \Omega(n)$ is sufficient to show that the problem is not contained ${\BPP^{\mathsf{QNC}}}^{\mathcal{O}}$.

For the proof of our result we also consider the $d$-\textsf{SSP} problem. Since we already know that the problem is in $\BQP^{\mathcal{O}}$ this immediately implies that the problem is also in $\QSZK^{\mathcal{O}}$. For the classical case, we would also need to show that the problem is contained in $\SZK^{\mathcal{O}}$. This follows from the fact that \ED\ is in \SZK\ and the problem of determining whether a function is $1$-to-$1$ or $2$-to-$1$ reduces to \ED\footnote{This is because $f$ evaluated on a uniformly random $n$-bit string will have maximum entropy $n$, when $f$ is $1$-to-$1$ and entropy $n-1$ when the function is $2$-to-$1$.}. 
We therefore need to show that the problem is not contained in $\SZKlogd^{\mathcal{O}}$ and $\QSZKlogd^{\mathcal{O}}$.

\begin{figure}[ht!]
\begin{center}
  \includegraphics[width=0.7\linewidth]{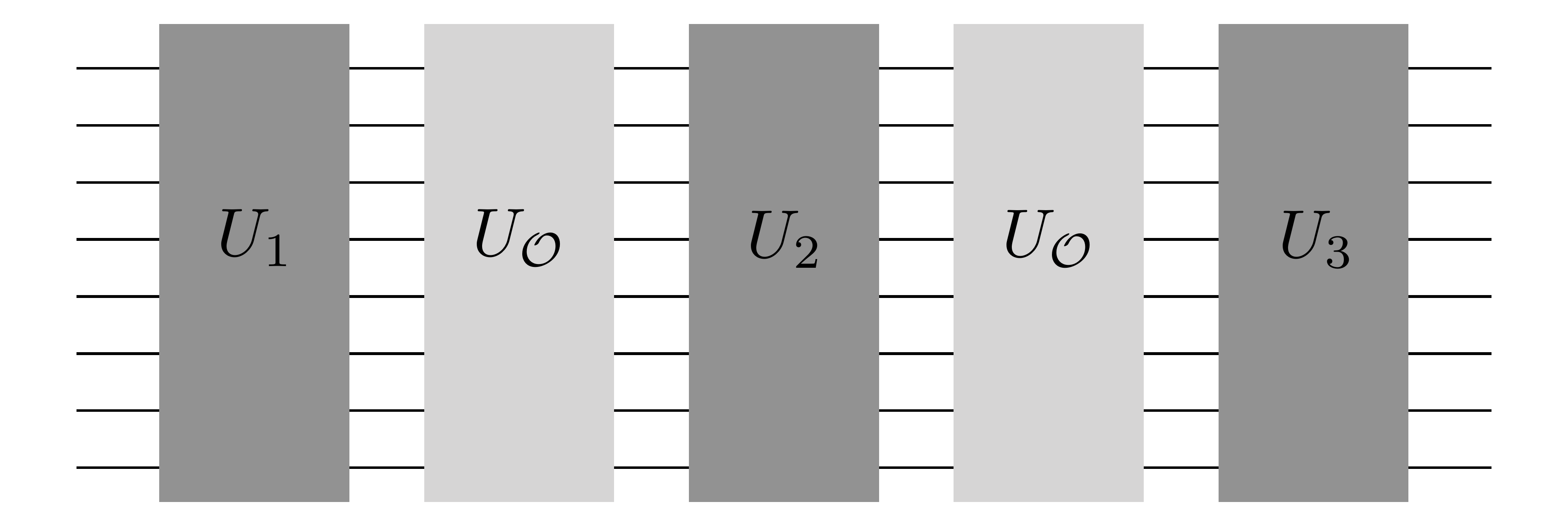}
\caption{Circuit with oracle calls. The unitaries $U_j$ are assumed to be polylogarithmic depth quantum circuits. The unitaries $U_{\mathcal{O}}$ represent calls to the oracle $\mathcal{O}$. The circuit has polylogarithmic depth overall.}
\label{fig:simcircuit}
\end{center}
\end{figure}

To do this, consider a \QSZKlogd\ protocol in which the verifier, the prover and the simulator all have access to the oracle $\mathcal{O}$. In such a protocol, the verifier and the simulator are circuits of depth $O(polylog(n))$ that generically consist of alternating sequences of polylog-depth circuits and calls to the oracle, as shown in Figure~\ref{fig:simcircuit}. For a fixed input state that starts off uncorrelated with the oracle, let us examine the output state of such a circuit in the cases when the function is injective and when it is a $2$-to-$1$ function. 
We will denote the oracle as $\mathcal{O}(f)$ in the former case and as $\mathcal{O}(g)$ in the latter\footnote{The interpretation of this notation is that the $d$-shuffling oracle is providing ``shuffled'' access to either $f$ or $g$.}. We also denote the circuit under consideration, when given access to the oracle, as $\COf$ and $\COg$, respectively.
These can be expressed as follows
\begin{equation}
\COf = U_m U_{\mathcal{O}(f)} U_{m-1} ... U_{\mathcal{O}(f)} U_{1} \quad \quad \COg = U_m U_{\mathcal{O}(g)} U_{m-1} ... U_{\mathcal{O}(g)} U_{1}
\end{equation}
with $m = polylog(n)$ and where each $U_i$ is a circuit of depth one.
We denote the input state to the circuit as\footnote{The analysis remains unchanged if the input state is a mixed state.} $\ket{\psi(0)}$. We will also write $\ket{\psi^f(i)} = U_{\mathcal{O}(f)} U_i \ket{\psi^f(i-1)}$ and $\ket{\psi^g(i)} =U_{\mathcal{O}(g)} U_i \ket{\psi^g(i-1)}$. 
Note that
\begin{equation}
\ket{\psi^f(m)} = \COf \ket{\psi(0)} \quad \quad \ket{\psi^g(m)} = \COg \ket{\psi(0)} 
\end{equation}
Following a similar analysis to that of~\cite{chiachunglai}, we have that
\begin{multline}
TD(\ket{\psi^f(m)}, \ket{\psi^g(m)}) \leq 
TD(\ket{\psi^f(m)}, U_{\mathcal{O}(f)} U_{m} \ket{\psi^g(m - 1)}) + \\ TD(U_{\mathcal{O}(f)} U_{m} \ket{\psi^g(m - 1)}, \ket{\psi^g(m)})
\end{multline}
which can be extended to
\begin{equation}
TD(\ket{\psi^f(m)}, \ket{\psi^g(m)}) \leq \sum_{i=1}^{m} TD(U_{\mathcal{O}(f)} U_{i} \ket{\psi^g(i - 1)}, \ket{\psi^g(i)})
\end{equation}
Both of these equations follow from the triangle inequality. Using a hybrid argument, we have that
\begin{equation}
TD(\ket{\psi^f(m)}, \ket{\psi^g(m)}) \leq m \max_{i \leq m} \; TD(U_{\mathcal{O}(f)} U_{i} \ket{\psi^g(i - 1)}, U_{\mathcal{O}(g)} U_{i} \ket{\psi^g(i - 1)})
\end{equation}
Finally, as in~\cite[Theorem 6.1]{chiachunglai}, one can use the one-way to hiding lemma~\cite{ambainis2019quantum} to show that for $i \leq d$,
\begin{equation}
TD(U_{\mathcal{O}(f)} U_{i} \ket{\psi^g(i - 1)}, U_{\mathcal{O}(g)} U_{i} \ket{\psi^g(i - 1)}) \leq \frac{poly(n)}{2^n}
\end{equation}
and since $m = polylog(n)$, $d = \Omega(n)$, for sufficiently large $n$, $m < d$, hence
\begin{equation}
TD(\ket{\psi^f(m)}, \ket{\psi^g(m)}) \leq m \frac{poly(n)}{2^n}.
\end{equation}
If we now consider $C^{\mathcal{O}}$ to be the simulator circuit in a \QSZKlogd\ (\SZKlogd) protocol, we see that the simulator produces nearly identical transcripts irrespective of whether the oracle function is $1$-to-$1$ or $2$-to-$1$.
This means that, for the ``yes'' instances (the function being $1$-to-$1$), the transcript that the verifier circuit acts on, upon its interaction with the prover, is \emph{almost completely uncorrelated} with the oracle itself. Stated differently, the transcript is $poly(n)/2^n$-close in trace distance to a transcript for a ``no'' instance. 
Thus, the interaction with the prover can provide the verifier with at most a $poly(n)/2^n$ advantage in deciding the problem correctly. Since the verifier circuit itself is polylogarithmic in depth, from the above analysis (and the result of~\cite{chiachunglai}), it follows that if the oracle function type is equiprobable to be $1$-to-$1$ or $2$-to-$1$, the resulting \QSZKlogd\ (\SZKlogd) protocol will decide correctly with probability at most $1/2 + poly(n)/2^n$.
This concludes the proof.
\end{proof}

It should be noted that, following~\cite{chiachunglai}, the above result extends to circuits of depth strictly less than $d$. The key insight of the proof is the fact that the shuffling oracle requires circuits of depth at least $d$ in order to obtain an output that is non-negligibly correlated with the oracle type. This also results in the more fine-grained result $(Q)ED^{\mathcal{O}}_{\delta(n)} \leq_P (Q)ED^{\mathcal{O}}_{2\delta(n) + 1}$. The $2\delta(n) + 1$ upper bound comes from the fact that an instance of $d$-$\mathsf{SSP}$ can be solved with depth $2d + 1$.
For our case, since $d = \Omega(n)$, if we were to look strictly at instances of \EDlogd\ and \QEDlogd\, in which the circuits under consideration can query the oracle, the number of queries is too small for there to be any noticeable difference in the output entropies. Thus, relative to the shuffling oracle, \EDlogd\ and \QEDlogd\ are strictly weaker than \ED\ and \QED.

Let us now consider a different argument for why it is unlikely that entropy difference with shallow circuits is as hard as with general poly-size circuits. We will focus specifically on the quantum case for circuits of logarithmic depth and show the following:

\begin{theorem} \label{thm6}
If there exists a polynomial-time reduction from a \QSZK\ protocol with a log-depth verifier to a \QSZKlog\ protocol which preserves the transcript of the \QSZK\ protocol, then $\BQP=\BPP^{\QNC}$.
\end{theorem}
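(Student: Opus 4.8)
The plan is to derive the collapse $\BQP = \BPP^{\QNC}$ by using the hypothesized reduction to place an arbitrary $L \in \BQP$ into $\BPP^{\QNC}$; the reverse inclusion $\BPP^{\QNC} \subseteq \BQP$ is immediate, since $\QNC \subseteq \BQP$ and a polynomial-time classical machine can orchestrate polynomially many such quantum subroutines, so establishing the forward inclusion suffices.

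First I would fix $L \in \BQP$ and use $\BQP \subseteq \QSZK$ together with the history-state construction discussed above (the \QSZK\ analogue of Rosgen's \QIP\ result) to obtain a \QSZK\ protocol for $L$ with a \emph{log-depth verifier} $V$: the prover's message is the history state of the \BQP\ computation on input $x$, and $V$ is the shallow circuit that performs the local consistency checks on this history state and measures the output qubit. Applying the hypothesized transcript-preserving reduction turns this into a \QSZKlog\ protocol for the same transcript, and in particular supplies a \emph{log-depth simulator} $S$ that, on yes instances, outputs a state $\varepsilon$-close in trace distance to the history-state transcript.

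The decision procedure I would then use is to run $S(x)$ to generate a candidate transcript, feed it directly into $V(x)$, and output $V$'s accept/reject bit, with a classical \BPP\ machine driving polynomially many (parallel) repetitions and taking a majority vote to amplify. Because $S$ and $V$ are both log depth, their composition is a single log-depth quantum computation taking a classical input to a classical bit, so the whole procedure lies in $\BPP^{\QNC}$. Correctness on yes instances follows by combining completeness with the zero-knowledge guarantee: the genuine transcript makes $V$ accept with probability at least $2/3$, and $S$'s output is $\varepsilon$-close to it, so $V$ still accepts with probability at least $2/3 - \varepsilon$. On no instances I would invoke soundness: the state $S(x)$ produces is just one particular, possibly adversarial, prover message, so the soundness of the protocol caps $V$'s acceptance probability at $1/3$.

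The main obstacle is exactly this no-instance case, because the zero-knowledge property constrains $S$ \emph{only} on yes instances; on no instances $S$ may output an arbitrary state, and one must rule out that it manufactures a ``fake'' accepting transcript. The resolution is that the log-depth verifier's consistency check is precisely what enforces soundness: any state that $V$ accepts with non-negligible probability must be close to a genuine history state, which on a no instance encodes a rejecting computation. The delicate points are therefore (i) checking that the transcript-preserving reduction returns the simulator's output in the message format that lets the protocol's soundness be invoked verbatim, with $S$ playing the role of the prover, and (ii) ensuring that the negligible trace-distance error of $S$ and the (at worst inverse-polynomial) soundness error of the history-state verifier remain inside the completeness--soundness gap after the polynomial amplification, which is the step most likely to require care.
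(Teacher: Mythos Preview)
Your approach is correct and reaches the same conclusion as the paper, but via a genuinely simpler route on the no-instance case. Both you and the paper build the identical one-round \QSZK\ protocol for $L\in\BQP$: the prover sends Rosgen's tensor-product history state $\ket{\psi_U}$ and the log-depth verifier performs SWAP-test consistency checks together with the output-qubit measurement. The divergence is in how the no instances are handled once the hypothesized reduction hands you a log-depth simulator $S$.

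You feed $S(x)$ straight into $V$ and invoke the \emph{soundness} of the protocol: whatever mixed state $S$ produces on $x\in L_{no}$ is just a particular prover message, so $V$'s acceptance probability is at most $1/3$. This is valid, since soundness holds against arbitrary prover states. The paper instead uses closure of \BQP\ under complement to manufacture a \emph{second} log-depth simulator $S'$ (for the complement language), so that on every input at least one of $S, S'$ is guaranteed to output a state close to $\ket{\psi_U}$; the algorithm SWAP-tests both outputs and reads the answer off whichever one passes. The paper's detour buys a presentation that avoids leaning on soundness of the SWAP-test verifier against adversarial states and instead only needs that any state passing the tests is close to $\ket{\psi_U}$ --- but this is essentially the same property, so the gain is stylistic rather than substantive. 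Your route is the more economical of the two; the delicate points you flag (that the one-round transcript really is just the prover's message, and that inverse-polynomial soundness suffices after parallel amplification) are exactly the right ones and are easily dispatched.
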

\begin{proof}
It is straightforward to show that $\BQP\subseteq\QSZK$: the quantum verifier ignores the prover and can decide any language in \BQP\ . However, one can also give a \QSZK\ protocol for any language in \BQP\ where there is non-trivial interaction between a prover and verifier. Furthermore, we show that such a protocol only requires the verifier's circuit to be log depth. To show this, we adapt the proof by Rosgen that \QIP\ only requires log depth quantum verifiers \cite{rosgenQIP}.

Given a polynomial-time quantum circuit on $n$ qubits of the form $C = U_m U_{m-1} ... U_1$, the following state $\ket{\psi_U}$ can be constructed by another \BQP\ circuit,

\begin{equation}
\ket{\psi_U} = \ket{0...0} \otimes U_1 \ket{0...0} \otimes ... \otimes U_m U_{m-1} ... U_1 \ket{0...0}.
\end{equation}
Notice the similarity to a Feynman-Kitaev history state except where one takes the tensor product of unitaries applied to the input, and not the superposition. The state $\ket{\psi_U}$ stores the final state of the circuit above in the rightmost $n$ qubits. Therefore, any language in \BQP\ can be decided by measuring the relevant qubit in these rightmost $n$ qubits in $\ket{\psi_U}$.

Now we can have a \QSZK\ protocol where an honest prover gives the verifier the state $\ket{\psi_U}$, and thus a verifier has the ability to decide any language in \BQP\ when given this state. In the case of a dishonest prover, we can use the techniques described by Rosgen, to verify that the state given by the prover is $\ket{\psi_U}$. For convenience of explanation, we divide the state given by the prover up into $m$ ``registers" of $n$ qubits, where the first register should be in the state $\ket{0...0}$, the second register in the state $U_1\ket{0...0}$, and so on. The idea for verifying that the state is $\ket{\psi_U}$ is to pairwise compare the $j$th and $(j+1)$th registers with SWAP tests. More precisely the verification circuit will apply $U_{j+1}$ to the $j$th register and perform a SWAP test on the $j$th and $(j+1)$th registers: if the states are the same then swapping leaves the states invariant and the circuit accepts, otherwise it rejects. Therefore, after $n$ SWAP tests of this form, if all tests accept, then with high probability the state is $\ket{\psi_U}$. Importantly, all of the SWAP tests to compare registers can be done in log depth \cite{rosgenQIP}, thus the verifier's circuit is a log depth quantum circuit. 

In the above protocol we have outlined how the verifier can verify  
the state $\ket{\psi_U}$, but we have not shown that it satisfies the property of statistical zero-knowledge. Note that the state $\ket{\psi_U}$ produced by the prover (such that an input is accepted) can be generated by a polynomial time quantum circuit. Therefore, in the case of a \QSZK\ protocol, the simulator could produce this state $\ket{\psi_U}$, and since this state is the whole transcript of the protocol, the protocol has the property of zero-knowledge. 

If we assume that there exists a polynomial-time reduction from a \QSZK\ protocol to a \QSZKlog\ protocol which preserves the transcript of the \QSZK\ protocol, then the above \QSZK\ protocol for deciding \BQP\ can be turned into a \QSZKlog\ protocol. Therefore, a simulator $S$ must be able to produce a state very close to $\ket{\psi_U}$ with a log-depth quantum circuit, in the case that the input $x$ is in the language $L_{yes}$. Furthermore, since \BQP\ is closed under complement, we can take the complement of the language above, and have a \QSZKlog\ protocol for this, and thus another simulator $S'$ that produces a state close to $\ket{\psi_U}$ for $x\in L_{no}$.

Now we have the situation where if the conditions of the theorem hold, we have two log-depth quantum circuits corresponding to the simulators $S$ and $S'$ above that can be used to decide membership of a language in \BQP: if $x\in L_{yes}$ then $S$ will produce the state $\ket{\psi_U}$, but $S'$ could produce anything; if $x\in L_{no}$ then $S'$ produces the correct state $\ket{\psi_U}$. To decide which is which, a verifier can apply the SWAP tests outlined above individually on both of the states generate by $S$ and $S'$: at least one of the two states will satisfy the tests and correctly accept or reject. We can now leverage these observations to prove the theorem. 

To collect the observations together, we have pointed out that if the conditions of the theorem hold, there are log-depth quantum circuits $S$ and $S'$ that generate states $\ket{\psi_U}$, which can be used by a log-depth quantum circuit to decide membership in \BQP. Thus we could decide any language in \BQP\ with a $\BPP^{\QNC}$ algorithm in the following way: a \BPP\ machine computes the reduction from a \QSZK\ protocol to a \QSZKlog\ protocol, feeds the circuit descriptions of the log-depth simulators $S$ and $S'$ to a \QNC\ oracle, which can then produce states of the form $\ket{\psi_U}$, and carry out the necessary SWAP tests to verify this state. If the SWAP tests are passed for at least one of the two states produced by $S$ and $S'$, then the accept/reject measurement is performed on it (again by the oracle), and the algorithm accepts if the oracle accepts, or rejects otherwise.
\end{proof}

It is worthwhile pointing out that this trick of deciding languages in \BQP\ with a $\BPP^{\QNC}$ algorithm does not obviously generalise to other complexity classes. First, we would need that there are quantum statistical zero-knowledge protocols for the class, and use the property of closure under complementation. Naturally, \QSZK\ satisfies both of these properties, but an arbitrary protocol for languages in \QSZK\ has multiple rounds of communication between prover and verifier. Our construction uses the fact that a \QSZK\ protocol for any language in \BQP\ has a single round of communication from prover to verifier, which facilitates verification of a state in log depth. It is far from obvious how to construct such a verification procedure for an arbitrary language in \QSZK. 


\subsection{Hardness based on Learning with Errors}
In the previous subsection we showed that the polylogarithmic-depth version of the entropy difference problem, in both the classical and quantum case, is unlikely to be as hard as the polynomial-depth variant. This raises the question of whether the problem becomes tractable for polynomial-time classical or quantum algorithms. Here we give indication that the answer is no by proving a reduction from \LWE\ to \EDlog. 
Using techniques from~\cite{aik}, we then strengthen this result by also showing a reduction from \LWE\ to \EDconst. We begin with the log-depth case:

\begin{theorem} \label{thm:lweedlog}
\lwe\ $\leq_P$ \EDlog.
\end{theorem}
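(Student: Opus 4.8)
The plan is to go through an intermediate ``key distinguishing'' problem: reduce \LWE\ to the task of deciding whether a given ETCF key specifies a $1$-to-$1$ or a $2$-to-$1$ function, and then reduce that task to \EDlog. The genuinely new content is not the entropy bookkeeping (which is essentially the $S(f)=n$ versus $S(g)=n-1$ observation from the introduction) but the verification that every circuit involved has \emph{logarithmic} depth, so that is where I would concentrate the effort.

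First I would recall that, assuming the hardness of \LWE, the constructions of~\cite{mahadev2018classical,brakerski2018} yield an injective-invariant ETCF family. The security proof of injective invariance (Definition~\ref{def:injectiveinvariant}) converts any efficient distinguisher between keys produced by $\GEN_{\mF}(1^\lambda)$ and keys produced by $\GEN_{\mG}(1^\lambda)$ into an \LWE\ solver; concretely, an \LWE\ challenge $(A,v)$ is embedded into a key $k$ so that $k$ specifies a $2$-to-$1$ claw-free pair $(g_{k,0},g_{k,1})$ when $v=As+e$ and a $1$-to-$1$ injective pair $(f_{k,0},f_{k,1})$ when $v$ is uniform. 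Thus it suffices to produce from $k$, in polynomial time, an \EDlog\ instance whose answer reveals which of the two cases holds.

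Next I would build the circuits. Let $C$ be the reversible randomized circuit that on uniform input $(b,x,r)$ samples $y$ from $\SAMP(k,b,x;\,\cdot\,)$ and traces out everything except $y$; the \emph{same} procedure is used for both families, so $C$ differs only in the embedded key. When $k$ is injective, the distributions indexed by $(b,x)$ have pairwise disjoint supports, so $(b,x)$ is recoverable from $y$ and the output entropy decomposes exactly as $S=H(b,x)+\bar H=(1+\log|\sX|)+\bar H$, where $\bar H$ is the entropy of the discrete-Gaussian noise, the same for every $(b,x)$. When $k$ is $2$-to-$1$, only the matched class of $(b,x)$ is determined by $y$, giving $S=\log|\sX|+\bar H$; the two cases differ by exactly one bit. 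To convert this one-sided gap into a valid \EDlog\ instance I would let $C_1$ be \emph{two} parallel copies of $C$ (with $b$ uniform), and let $C_2$ be two parallel copies of the circuit $C'$ obtained from $C$ by hard-wiring $b=0$, together with a single appended uniform bit. Since $g_{k,0}$ (and $f_{k,0}$) is injective in $x$ in \emph{both} cases, $C'$ has entropy $\log|\sX|+\bar H$ regardless of the key type, so $S(\mathcal{D}_2)=2\log|\sX|+2\bar H+1$ sits exactly halfway between the two possibilities for $S(\mathcal{D}_1)$. A short calculation then gives $S(\mathcal{D}_1)\ge S(\mathcal{D}_2)+1$ precisely when $k$ is injective and $S(\mathcal{D}_2)\ge S(\mathcal{D}_1)+1$ precisely when $k$ is $2$-to-$1$, so solving \EDlog\ decides the key type and hence \LWE.

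The step I expect to be the real obstacle is showing that $\SAMP$, and therefore $C_1$ and $C_2$, can be carried out in logarithmic depth. Evaluating these functions reduces to (i) a matrix--vector product $Ax \pmod q$ over $\Z_q$ and (ii) the addition of discrete-Gaussian noise. Part~(i) lies in \NC: each coordinate is an inner product, and multiplication of $O(\log q)$-bit integers together with iterated addition of the $m$ products and reduction modulo $q$ are all log-depth computable. Part~(ii) is the delicate one, and here I would invoke the efficient discrete-Gaussian sampler of Peikert~\cite{peikert2010efficient}, arranging the data it requires in a \emph{preprocessing} step performed once, classically, by the polynomial-time reduction, so that the per-sample map from uniform bits $r$ to a Gaussian sample is itself log-depth. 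Finally, parallel repetition and appending a uniform bit preserve depth, and making the sampler reversible with ancillas (placed in the traced-out register) costs only constant depth. Assembling these pieces yields \lwe\ $\leq_P$ \EDlog.
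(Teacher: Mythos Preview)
Your proposal is correct and follows essentially the same architecture as the paper's proof: reduce \LWE\ through injective invariance of the ETCF family of~\cite{mahadev2018classical,brakerski2018}, observe that the $1$-to-$1$ versus $2$-to-$1$ dichotomy produces a one-bit entropy gap, and then verify that the concrete function $Ax + b\cdot v + e \pmod q$ (with $v$ the key vector) is computable in \NC\ because matrix--vector products over $\mZ_q$ and discrete-Gaussian sampling via Peikert's procedure~\cite{peikert2010efficient} (after polynomial-time preprocessing folded into the reduction) are all log-depth.

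The one place you genuinely diverge is in how the two \EDlog\ circuits are assembled. The paper simply takes both vectors---a uniformly random $u$ and the \LWE\ sample $As+e'$---and lets $C_f$ and $C_g$ be the two circuits built from them; deciding which has larger output entropy tells you which vector is which, and this is exactly injective invariance. You instead work from a \emph{single} unknown key $k$ and manufacture a reference circuit $C_2$ (two $b{=}0$ copies plus a fresh uniform bit) whose entropy is guaranteed to sit exactly between the two possible values of $S(\mD_1)$. Your route has the virtue that the \EDlog\ promise is satisfied on the nose in both cases, at the cost of a slightly more elaborate construction; the paper's route is more direct but relies implicitly on being handed both key types.

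One small correction: you say the two cases ``differ by exactly one bit.'' In the actual \LWE-based instantiation the circuit takes the noise seed $e_u$ as part of its input, and the collision $C_g(0,x,e_u)=C_g(1,x{-}s,e'_u)$ only holds when $\textsc{Gaussify}(e_u)-e'$ lies in the support of the truncated Gaussian; this fails on a negligible fraction of inputs. The paper accordingly writes $S(C_g)=n\log q + S_{\text{Gaussian}}+\mu(n)$ for negligible $\mu$, so the gap is $1-\mu(n)$ rather than exactly $1$. This does not affect your argument (the gap is still a positive constant for large $n$, and parallel repetition amplifies it if desired), but the word ``exactly'' should be softened.
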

\begin{proof}
The proof uses the ETCF functions defined in Subsection~\ref{subsect:lwe}. As mentioned, an ETCF family consists of an injective function and a $2$-to-$1$ (or claw-free) function and there exists a reduction from \lwe\ to the problem of distinguishing the two functions given their descriptions\footnote{This is the injective invariance property of Definition~\ref{def:etcfs}.}. By showing that the two functions can be evaluated using circuits of logarithmic depth, we will extend this reduction to \EDlog.
While it has already been shown that certain cryptographic functions based on \LWE\ can be performed in \NC~\cite{banerjee2012pseudorandom}, our result requires that we show this for the circuits that we construct from an ETCF function family.

The ETCF functions we consider will be the same as the ones from~\cite{mahadev2018classical, brakerski2018}:
\begin{equation} \label{eqn:etcf}
f(b, x) = Ax + b \cdot u + e \; (mod \; q) \quad \quad \quad g(b, x) = Ax + b \cdot (As + e') + e \; (mod \; q)
\end{equation}
with\footnote{It should be noted that $q$ itself is a function of $n$. In general $q$ is taken to be suprapolynomial in $n$, though in recent constructions $q$ can also be polynomial in $n$ while still preserving the hardness of the \LWE\ instance~\cite{brakerski2013classical}.} $q \geq 2$, $b \in \{0, 1\}$, $x \in \mZ_q^n$, $s \in \mZ_q^n$, $A \in \mZ^{n \times m}_q$, $u, e' \in \mZ_q^m$, $e \leftarrow_{ D_{\mZ_q,B}^m} \mZ^m$.
These functions will be ETCF even when $A$, $e'$ and $u$ are chosen at random as follows: $A \leftarrow_U \mZ_q^{n \times m}$, $u \leftarrow_U \mZ_q^m$, $e' \leftarrow_{ D_{\mZ_q,B'}^m} \mZ^m$.
The values $B'$ and $B$ are the ones from~\cite{mahadev2018classical, brakerski2018} (there denoted as $B_P$ and $B_V$), and determine where the Gaussian distributions are truncated. Specifically, $B' = \frac{q}{C_T \sqrt{mn \log(q)}}$, where $C_T$ is a fixed constant and $B$ is chosen so that $B'/B$ is super-polynomial in $n$. 
Since it was already shown in~\cite{mahadev2018classical} that the above functions are ETCF, we need only show that they can be evaluated by log-depth circuits.

First of all note that the functions from Equation~\ref{eqn:etcf} output probability distributions, whereas the circuits in \EDlog\ need to output fixed outcomes. We will fix this by making the error, $e$, be part of the input. We cannot, however, make it directly part of the input since the input to the circuits in \EDlog\ is distributed uniformly, whereas $e$ must be drawn from a truncated Gaussian distribution, $D^m_{\mZ_q,B}$. Instead, we will have as part of the input a string $e_u$ which we will turn into a Gaussian sample using a log-depth circuit denoted \textsc{Gaussify}. Implementing this procedure can be achieved using, for instance, the algorithm of Peikert from~\cite{peikert2010efficient}.
\textsc{Gaussify} satisfies the property that if $e_u \leftarrow_U \mZ_q^{m}$ then $e = \textsc{Gaussify}(e_u)$ is distributed according to the truncated Gaussian distribution $ D^m_{\mZ_q,B}$.

The \EDlog\ circuits we construct will therefore have the form
\begin{equation} \label{eqn:cf}
C_f(b, x, e_u) = Ax + b \cdot u + \textsc{Gaussify}(e_u) \; (mod \; q) 
\end{equation}
\vspace{-0.2in}
\begin{equation} \label{eqn:cg}
C_g(b, x, e_u) = Ax + b \cdot (As + e') + \textsc{Gaussify}(e_u) \; (mod \; q) 
\end{equation}
where $A$, $s$, $e'$ and $u$ are fixed. Essentially, one is given $A$, $u$ and $As + e'$ and one has to construct the above circuits and ensure that they have logarithmic depth. Note that the input length for each circuit is $(m + n) \log(q) + 1$. Following~\cite{mahadev2018classical, brakerski2018} we will assume that $m = \Omega(n \log(q))$, so that what we need to ensure is that the circuit depth is $O(\log(m \log(q)))$.
The \LWE\ assumption states that it should be computationally intractable to distinguish $u$ from $As + e'$, when given $A$. However, as we will show, the above circuits will have different entropies in their outputs (when the inputs are chosen uniformly at random). Intuitively this is because one function is $1$-to-$1$ and the other is \emph{approximately} $2$-to-$1$ and so the two cases could be distinguished if we have the ability to solve \EDlog. This is the essence of the reduction.

Let us take stock of all the operations performed by these circuits and why they are all in \NC:
\begin{enumerate}
\item Addition and multiplication modulo $q$ can be performed in logarithmic depth with respect to the input size (which in this case is $O(\log(q))$)~\cite{wallace1964suggestion}, so that this operation requires only depth $O(\log(\log(q)))$. For vectors in $\mZ^m_q$, component-wise addition can be performed in parallel by increasing the width by a factor of $m$. Thus, the overall depth remains $O(\log(\log(q)))$.
\item The dot-product between two vectors in $\mZ^m_q$ requires depth $O(\log(m \log(q)))$. One first computes the component-wise product of the two vectors. This is the same as component-wise addition and can be performed in $O(\log(\log(q)))$ depth. One then adds together all of the results (modulo $q$) and this can be performed in $O(\log(m \log(q)))$ depth with a divide-and-conquer strategy\footnote{Divide the result vector into two vectors of equal length, recursively compute the sum of their components and add the results. Adding the results requires constant depth and the depth of the recursion tree is logarithmic in the length of the vectors.}.
\item Matrix-vector multiplication with matrices in $\mZ_q^{n \times m}$ and vectors in $\mZ_q^{n}$ can be performed in depth $O(\log(m \log(q)))$. Start by copying the vector $m$ times (one copy for each row in the matrix). This can be done in depth $O(\log(m \log(q)))$ with a divide-and-conquer strategy. Then perform the inner products between each row of the input matrix and a corresponding copy of the vector. The inner products can be performed in parallel and each requires $O(\log(m \log(q)))$ depth. Thus, the resulting circuit will have $O(\log(m \log(q)))$ depth.
\item \textsc{Gaussify}\footnote{A simpler, though less efficient way to perform the Gaussian sampling is to take averages of many random numbers (within a range set by the width of the Gaussian) using the uniform randomness and rely on the \emph{central limit theorem}.} can be performed in depth $O(\log(m \log(q)))$ as shown in~\cite{peikert2010efficient}. The procedure from~\cite{peikert2010efficient} requires a pre-processing step of $O(m^3 \log^2(q))$ operations. This will be done as part of the polynomial-time reduction that generates the \EDlog\ instance so that the circuits $C_f$ and $C_g$ already contain the results of this pre-processing step. The actual sampling procedure requires $O(m^2)$ multiplications and additions modulo $q$ which can be performed in parallel requiring depth $O(\log(\log(q)))$. Collecting the results will then require depth at most $O(\log(m \log(q)))$.
\end{enumerate}

This shows that $C_f, C_g \in \NC$. We now estimate the entropies $S(C_f)$ and $S(C_g)$ when the inputs of the circuits are chosen uniformly at random. We will consider $A$ to be a matrix of full rank\footnote{For \LWE, the matrix $A$ is chosen uniformly at random from $\mZ^{n \times m}_q$ and so will be full rank, with high probability.}, $n$. Given that $x \leftarrow_U \mZ^n_q$ and since $A$ is full rank, we have that $Ax \leftarrow_U Im(A)$, where $Im(A)$ denotes the image of $A$. Note that $|Im(A)| = q^n$. For $C_f$, we can choose $u$ such that the distributions $Ax + u$ and $Ax$ have no overlap\footnote{Similar to the choice of $A$, this will be true for most choices of $u$.}. Thus, $Ax + b \cdot u$ will be uniform over $\{0, 1\} \times Im(A)$ and have $n \cdot \log(q) + 1$ bits of entropy. Lastly, we need to account for the Gaussian error. Since this term appears in both $C_f$ and $C_g$, we will simply denote its contribution as $S_{Gaussian}$. We therefore have that $S(C_f) = n \cdot \log(q) + 1 + S_{Gaussian}$.

For the case of $C_g$ the difference will be due to the term $b \cdot (As + e')$. Note that this leads to overlap among different inputs. Specifically $C_g(0, x, e_u) = C_g(1, x - s, e'_u)$, where $e'_u$ is such that $\textsc{Gaussify}(e'_u) = \textsc{Gaussify}(e_u) - e'$. This condition on $e'_u$ is true on all but a negligible fraction of error vectors (as a result of taking $B'/B$ to be super-polynomial in $n$)~\cite{mahadev2018classical, brakerski2018}. The circuit $C_g$ will therefore behave like a $2$-to-$1$ function on all but a negligible fraction of the input domain. We therefore have that $S(C_g) = n \cdot \log(q) + S_{Gaussian} + \mu(n)$, where $\mu(n)$ is a negligible function. For sufficiently large $n$, the $\mu(n)$ term will be less than $1$ and we therefore have that the entropy difference between $C_f$ and $C_g$ is at least a positive constant, as desired\footnote{The entropy difference can be made larger by simply repeating the circuits in parallel. However, an alternate approach presented by the use of ETCF functions is to instead consider the circuits:
\begin{align}
C_f(b_1, b_2, x, e_u) &= Ax + b_1 \cdot u_1 + b_2 \cdot u_2 + \textsc{Gaussify}(e_u) \; (mod \; q)  \\
C_g(b_1, b_2, x, e_u) &= Ax + b_1 \cdot (As_1 + e'_1) + b_2 \cdot (As_2 + e'_2) + \textsc{Gaussify}(e_u) \; (mod \; q)
\end{align}
Here $f$ is still a $1$-to-$1$ function, however $g$ is $4$-to-$1$ so that the entropy difference for these circuits will be $2 - negl(n)$. This construction can be generalised so that, for any constant $k$, the function $g$ can be made into a $2^k$-to-$1$ function.
}.

To complete the proof, note that the reduction we have just described constructs circuits with different entropies starting from an ETCF family (and specifically starting from an \LWE\ instance $(A, As + e')$ and a uniformly random vector $u$). However, being able to distinguish between the output entropies of the circuits allows us to determine which of the two functions is $1$-to-$1$ and which is $2$-to-$1$. By the injective invariance property of ETCF functions (Definition~\ref{def:etcfs}), this is as hard as \LWE, concluding the proof.
\end{proof}
It is worth mentioning that in the above proof we essentially picked ``worst-case'' instances of $A$, $s$, $e'$ and $u$. However, all of the above arguments hold with high probability when $A \leftarrow_U \mZ_q^{n \times m}$, $s, u \leftarrow_U \mZ_q^{n}$ and $e' \leftarrow_{D^m_{\mZ_q,B'}} \mZ_q^{m}$~\cite{mahadev2018classical, brakerski2018}. This means that
\begin{corollary}
\EDlog\ is hard-on-average, based on \LWE\, when the input circuits have the structure given in Equations~\ref{eqn:cf},~\ref{eqn:cg} and are chosen at random according to $A \leftarrow_U \mZ_q^{n \times m}$, $s, u \leftarrow_U \mZ_q^{n}$ and $e' \leftarrow_{D^m_{\mZ_q,B'}} \mZ_q^{m}$.
\end{corollary}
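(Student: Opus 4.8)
The plan is to read off the average-case statement directly from the reduction already constructed in the proof of Theorem~\ref{thm:lweedlog}, using that the hardness source there---the injective-invariance property of Definition~\ref{def:injectiveinvariant}, inherited from \lwe---is itself an average-case assumption over randomly generated keys. Concretely, that reduction is parameterized by $A$, $s$, $u$ and $e'$: from these it builds the circuits $C_f$ and $C_g$ of Equations~\ref{eqn:cf} and~\ref{eqn:cg}, and the correct \EDlog\ answer (which of the two outputs has the larger entropy) is exactly the bit distinguishing the injective coefficient $u$ from the claw-free coefficient $As+e'$. First I would fix the instance distribution to be precisely the one named in the statement: sample $A \leftarrow_U \mZ_q^{n\times m}$, $s,u \leftarrow_U \mZ_q^{n}$, $e' \leftarrow_{D^m_{\mZ_q,B'}} \mZ_q^{m}$, form the pair $(C_f, C_g)$, and present the two circuits in a uniformly random order so that the correct answer is not syntactically visible from their descriptions. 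This is exactly the distribution over which the injective-invariance game is hard.

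The second step is to verify that the entropy gap promised by \EDlog\ survives for all but a negligible fraction of these random parameters, so that almost every sample is a legal instance rather than a promise violation. The proof of Theorem~\ref{thm:lweedlog} used only three generic facts about $A$, $u$ and $e'$: that $A$ has full rank $n$; that $u$ lies outside the image of $A$, so that $Ax$ and $Ax+u$ have disjoint supports; and that the claw identity $C_g(0,x,e_u)=C_g(1,x-s,e'_u)$ holds outside a negligible fraction of inputs. Under the stated distributions each of these holds with probability $1-\negl(n)$---the first two because a uniform $A$ over $\mZ_q^{n\times m}$ with $m=\Omega(n\log q)$ has full rank and its image is an exponentially small fraction of the codomain, and the third because $B'/B$ is super-polynomial---so on the intersection of these events the entropy computation of Theorem~\ref{thm:lweedlog} applies verbatim and yields $S(C_f)-S(C_g)\geq 1-\mu(n)$, which after the parallel amplification already used in that proof meets the constant gap required by the \EDlog\ promise.

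Finally I would conclude by contraposition. If a quantum polynomial-time algorithm decided \EDlog\ correctly on a $1/2+1/\poly(n)$ fraction of this distribution, then composing it with the packaging map above would distinguish a randomly generated injective key from a randomly generated claw-free key with advantage $1/\poly(n)-\negl(n)$, contradicting Definition~\ref{def:injectiveinvariant} and hence the \lwe\ assumption. The step I expect to be the main obstacle is the bookkeeping that ties the last two paragraphs together: one must confirm that the negligible weight of ``bad'' parameter choices (non-full-rank $A$, overlapping $u$, atypical $e'$), on which no entropy gap is guaranteed, can be absorbed into the error budget without eroding the $1/\poly(n)$ advantage being ruled out, and that randomizing the presentation order reproduces the injective-invariance distribution faithfully rather than introducing a detectable bias.
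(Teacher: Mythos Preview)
Your proposal is correct and follows the same approach as the paper: the corollary is obtained by observing that the reduction of Theorem~\ref{thm:lweedlog} already works with high probability over randomly sampled $A$, $s$, $u$, $e'$, since injective invariance is itself an average-case assumption and the structural properties used (full rank of $A$, $u$ outside the image, negligible failure of the claw identity) hold for all but a negligible fraction of parameters. The paper's own treatment is in fact just a one-sentence remark preceding the corollary, deferring to~\cite{mahadev2018classical, brakerski2018} for the fact that these properties hold with high probability; your write-up spells out the contraposition and the negligible-bad-set bookkeeping that the paper leaves implicit, but there is no substantive difference in strategy.
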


In the previous proof we saw that the ETCF functions based on \LWE\ can be evaluated by circuits of logarithmic depth. It seems unlikely, however, that the same functions could be evaluated by circuits of constant depth. To get around this issue, we make use of the compiling techniques from~\cite{aik} that can take a one-way function with log-depth circuit complexity and map it to a corresponding one-way function having constant-depth circuit complexity. This is achieved through the use of a randomized polynomial encoding, in which the value of a function is encoded in a series of points that can be computed using a constant-depth circuit. Formally, we have that,

\begin{theorem} \label{thm:lweedconst}
\lwe\ $\leq_P$ \EDconst.
\end{theorem}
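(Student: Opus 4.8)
The plan is to inherit the entire reduction from Theorem~\ref{thm:lweedlog} and only replace the log-depth circuits $C_f$ and $C_g$ by constant-depth circuits encoding the same information. The proof of Theorem~\ref{thm:lweedlog} already establishes that $C_f, C_g \in \NC$ and that distinguishing their output entropies is as hard as \LWE\ (via injective invariance, Definition~\ref{def:injectiveinvariant}), after the entropy gap has been boosted to a constant. To each I would apply Theorem~\ref{thm:pren} to obtain \emph{perfect} randomized encodings $\hat{C_f}$ and $\hat{C_g}$ in $\mathsf{NC}^0_4$, computed from the descriptions of $C_f$ and $C_g$ in polynomial time. Because $C_f$ and $C_g$ differ only in hardcoded constants (the vector $u$ versus $As+e'$) and hence share the same circuit topology, their encodings can be taken to have the same randomness length $m$ and output length $s$; if needed one pads the randomness of the smaller circuit and copies those bits to the output, a locality-$1$ operation that keeps the circuit in $\mathsf{NC}^0$.

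The key step is to verify that the encoding preserves the entropy \emph{difference}. Fix a perfect randomized encoding $\hat{C}$ of a circuit $C$ with randomness length $m$, feed it a uniformly random pair $(w,r)$, and write $Z = \hat{C}(w,r)$ and $Y = C(w)$. By $\delta=0$ correctness (Definition~\ref{def:randenc}), $Y$ is a deterministic function of $Z$; by $\varepsilon=0$ privacy the conditional distribution of $Z$ given $w$ depends only on $Y$; and by unique randomness (Lemma~\ref{lemma:uniquerand}(a)) the map $\hat{C}(w,\cdot)$ is injective, so for each fixed $w$ the variable $Z$ is uniform over a set of size $2^m$, with these sets disjoint across distinct values of $Y$. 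The entropy then decomposes as
\begin{equation}
S(\hat{C}) \;=\; S(Y) + S(Z \mid Y) \;=\; S(C) + m,
\end{equation}
since $S(Z \mid Y=y) = m$ for every $y$ in the support. Applying this to both circuits with the common value of $m$ gives $S(\hat{C_f}) - S(\hat{C_g}) = S(C_f) - S(C_g)$, so the constant entropy gap of Theorem~\ref{thm:lweedlog} is transported exactly to the constant-depth encodings.

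Next I would argue that indistinguishability is preserved. The encoding is a fixed, publicly known, deterministic polynomial-time transformation on circuit descriptions, so any quantum polynomial-time distinguisher $\mathcal{A}$ for $\hat{C_f}$ versus $\hat{C_g}$ yields a distinguisher $\mathcal{B}(\cdot) = \mathcal{A}(\textsc{Encode}(\cdot))$ for $C_f$ versus $C_g$ of the same advantage; by injective invariance this advantage is negligible, so $\hat{C_f}$ and $\hat{C_g}$ remain \LWE-indistinguishable. Finally, to cast the $\mathsf{NC}^0_4$ encodings into the reversible format demanded by \EDconst, I would realise each as the reversible map $(w,r)\mapsto(\hat{C}(w,r),w,r)$, which is injective by construction and constant depth because each output bit of $\hat{C}$ reads at most four inputs, and then trace out the $(w,r)$ garbage so that the kept marginal is exactly the distribution of $\hat{C}(W,R)$ with entropy $S(C)+m$. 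Substituting $(\hat{C_f},\hat{C_g})$ for $(C_f,C_g)$ in the reduction of Theorem~\ref{thm:lweedlog} then yields \LWE\ $\leq_P$ \EDconst.

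I expect the main obstacle to be the entropy-difference preservation of the second paragraph: it is essential that the additive term contributed by the encoding randomness be \emph{identical} for the two circuits, which is precisely what perfect correctness and perfect privacy, together with unique randomness, buy us, namely a clean $+m$ rather than an input-dependent or simulator-dependent quantity. A secondary technical point to handle with care is matching $m$ across the two encodings and confirming that the reversible embedding does not inflate the depth beyond a constant, in particular that the fan-out of the randomness bits is accommodated within the model.
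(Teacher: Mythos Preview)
Your proposal is correct and follows the same overall strategy as the paper: apply the perfect randomized encodings of Applebaum--Ishai--Kushilevitz (Theorem~\ref{thm:pren}) to the log-depth circuits of Theorem~\ref{thm:lweedlog}, then argue that both the entropy gap and the injective-invariance hardness survive the encoding. Your indistinguishability argument is identical to the paper's.

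The one noteworthy difference is in how the entropy gap is shown to persist. The paper argues structurally that $\hat C_f$ remains a $1$-to-$1$ function and $\hat C_g$ remains (approximately) $2$-to-$1$, using correctness and Lemma~\ref{lemma:uniquerand}. Your route is the cleaner entropy identity $S(\hat C)=S(C)+m$, obtained via the chain rule together with perfect correctness (so $Y$ is determined by $Z$), perfect privacy (so $Z\mid Y$ is the simulator's distribution), and unique randomness (so that distribution is uniform on $2^m$ points). This identity works for arbitrary functions $C$, not just $1$-to-$1$ versus $2$-to-$1$, and it makes explicit the role of privacy, which the paper's argument uses implicitly. You also take care of two technicalities that the paper leaves to a post-proof remark: matching the randomness length $m$ across the two encodings (true here since $C_f$ and $C_g$ share the same circuit topology) and casting the $\mathsf{NC}^0_4$ circuits into the reversible format required by the \EDconst\ definition.
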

\begin{proof}
As shown in Theorem~\ref{thm:lweedlog}, the circuits $C_f$ and $C_g$ constructed from the ETCF functions, can be evaluated in log depth. Using Theorem~\ref{thm:pren}, from~\cite{aik}, this means that there exist randomized encodings $\hat{C}_f$ and $\hat{C}_g$ for the two circuits, that can be computed in $\mathsf{NC}^0_4$. To prove our reduction we need to show two things: that the randomized encodings of $C_f$ and $C_g$ preserve the injective invariance property (i.e. distinguishing between the randomized encodings is as hard as \LWE); that the randomized encodings are $1$-to-$1$ and $2$-to-$1$ functions, respectively. This last condition is required so that when we evaluate $\hat{C}_f$ and $\hat{C}_g$ with uniform inputs, it is still the case that $\hat{C}_f$ has more entropy in its output than $\hat{C}_g$.

Showing that injective invariance is satisfied is immediate. Suppose, for the sake of contradiction, that there exists an algorithm that has non-negligible advantage in distinguishing $\hat{C}_f$ and $\hat{C}_g$. It is easy to see that this leads to an efficient algorithm for distinguishing $C_f$ and $C_g$ with non-negligible advantage. Given instances of $C_f$ and $C_g$ we can construct the randomized encodings $\hat{C}_f$ and $\hat{C}_g$. This can be done in polynomial time, according to Theorem~\ref{thm:pren}. We then use our distinguisher on the randomized encodings and this then allows us to distinguish between $C_f$ and $C_g$ which contradicts the injective invariance property.

We now show that the randomized encodings are $1$-to-$1$ and $2$-to-$1$, respectively. Recall first that the randomized encodings take two arguments, $x$ and $r$. First, from Lemma~\ref{lemma:uniquerand} we know that for any fixed input $x$, the encodings are injective in the second argument. In addition, the perfect correctness property (from Definition~\ref{def:randenc}) guarantees that there are simulators $S_f$ and $S_g$ such that $S_f(\hat{C}_{f}(x, r)) = C_f(x)$ and $S_g(\hat{C}_{g}(x, r)) = C_g(x)$, for all $x$ and $r$. This ensures that if $C_f$ is injective then $\hat{C}_f$ will also be injective. It also ensures that whenever there exists a collision in $C_g$, i.e. $x_1$, $x_2$ such that $C_g(x_1) = C_g(x_2)$, there will be a corresponding collision for $\hat{C}_g$, i.e. $\hat{C}_g(x_1, r_1) = \hat{C}_g(x_2, r_2)$, \emph{for all} $r_1$ and $r_2$.
It follows that the randomized encodings $\hat{C}_f$ and $\hat{C}_g$ will have the same entropy difference as $C_f$ and $C_g$, concluding the proof.
\end{proof}

\noindent Just as with the log-depth case, we also have:
\begin{corollary}
\EDconst\ is hard-on-average, based on \LWE\, when the input circuits have the structure given in Equations~\ref{eqn:cf},~\ref{eqn:cg} and are chosen at random according to $A \leftarrow_U \mZ_q^{n \times m}$, $s, u \leftarrow_U \mZ_q^{n}$ and $e' \leftarrow_{D^m_{\mZ_q,B'}} \mZ_q^{m}$.
\end{corollary}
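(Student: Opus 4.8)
The plan is to obtain this corollary by reading the reduction of Theorem~\ref{thm:lweedconst} as an \emph{average-case} reduction and composing it with the average-case hardness of \EDlog\ already established in the corollary to Theorem~\ref{thm:lweedlog}. First I would recall that the proof of Theorem~\ref{thm:lweedconst} takes the log-depth circuits $C_f$ and $C_g$ of Equations~\ref{eqn:cf} and~\ref{eqn:cg} and replaces them with their constant-depth randomized encodings $\hat{C}_f, \hat{C}_g \in \NC$, obtained via Theorem~\ref{thm:pren}. The relevant distribution over \EDconst\ instances is therefore the one induced by sampling $A \leftarrow_U \mZ_q^{n \times m}$, $s, u \leftarrow_U \mZ_q^{n}$, $e' \leftarrow_{D^m_{\mZ_q,B'}} \mZ_q^{m}$, forming $C_f$ and $C_g$, and then applying the encoding map. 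Crucially, by Theorem~\ref{thm:pren} this map is deterministic and computable in polynomial time in the circuit size, so it pushes forward the parameter distribution to a well-defined distribution over pairs $(\hat{C}_f, \hat{C}_g)$.

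Next I would argue that this map constitutes an average-case reduction. Any distinguisher for $(\hat{C}_f, \hat{C}_g)$ on the induced distribution immediately yields a distinguisher for $(C_f, C_g)$ on the original distribution with the same advantage: on input $(C_f, C_g)$ one builds the encodings in polynomial time and forwards them to the assumed distinguisher. This is exactly the injective-invariance preservation argument already given in the proof of Theorem~\ref{thm:lweedconst}, now read as a statement about induced distributions rather than worst-case instances. Combined with the corollary to Theorem~\ref{thm:lweedlog}, which asserts that distinguishing $C_f$ from $C_g$ under this parameter distribution is hard on average assuming \LWE, it follows that distinguishing $\hat{C}_f$ from $\hat{C}_g$ under the induced distribution is likewise hard on average.

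Finally I would confirm that the induced instances are genuine \EDconst\ instances carrying a constant entropy gap for all but a negligible fraction of the parameter choices. The structure-preservation part of Theorem~\ref{thm:lweedconst}---that $\hat{C}_f$ inherits injectivity and $\hat{C}_g$ inherits the approximate $2$-to-$1$ collision structure via Lemma~\ref{lemma:uniquerand} and perfect correctness---holds \emph{pointwise}, i.e. for every parameter setting for which the corresponding $1$-to-$1$ / approximately-$2$-to-$1$ behaviour held for $C_f, C_g$. The main point to be careful about is precisely this transfer of the ``high probability over parameters'' guarantee through the encoding, but it is immediate: since the encoding map is deterministic on $(C_f, C_g)$ and preserves the collision structure pointwise, the negligible exceptional set of parameters for which $C_f, C_g$ failed to have the required structure maps into a negligible exceptional set for $\hat{C}_f, \hat{C}_g$. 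Hence all but a negligible fraction of the induced instances lie in the \EDconst\ promise with entropy gap at least a positive constant, and the average-case hardness of \EDconst\ for this distribution follows.
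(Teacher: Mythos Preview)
Your proposal is correct and follows essentially the same approach as the paper's proof, just spelled out in considerably more detail. The paper's argument is a two-sentence remark that the log-depth corollary's reasoning---for most random choices of $A$, $s$, $u$, $e'$ the circuits satisfy injective invariance---carries over because the randomized encoding preserves this property; your write-up makes the same point by explicitly describing the pushforward distribution, the distinguisher reduction, and the pointwise preservation of the $1$-to-$1$/approximately-$2$-to-$1$ structure through the encoding (one minor slip: you wrote $\hat{C}_f,\hat{C}_g\in\NC$ where you meant $\mathsf{NC}^0_4$).
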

\begin{proof}
The argument is the same as for the log-depth case: for most choices of the circuit parameters (i.e. the matrix $A$, the vectors $s$ and $u$ and the error vector $e'$), we obtain instances of the circuits that satisfy the injective invariance property. As the above proof shows, this remains true for the randomized encodings of these functions as well.
\end{proof}

In the above proofs, we didn't explicitly make use of the fact that the circuits are reversible, though the same results hold in those cases as well (provided we trace out the ancilla required to performed the reversible gates).
Since classical reversible circuits are a particular kind of quantum circuits, these results have the corollary that:

\begin{corollary}
\lwe\ $\leq_P$ \QEDconst.
\end{corollary}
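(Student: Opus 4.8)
The plan is to derive the corollary purely by transitivity, combining Theorem~\ref{thm:lweedconst}, which gives \lwe\ $\leq_P$ \EDconst, with a direct embedding \EDconst\ $\leq_P$ \QEDconst. A reversible boolean circuit is nothing but a permutation unitary, hence already a constant-depth quantum circuit, so the only genuine content is reconciling the two problem definitions: \EDconst\ feeds a \emph{uniformly random} classical input and scores the \emph{Shannon} entropy of the marginal on the kept output bits, whereas \QEDconst\ feeds the fixed state $\ket{0\cdots0}$ and scores the \emph{von Neumann} entropy of a reduced density operator. I would therefore map each classical instance $(C_1,C_2)$ of \EDconst\ to a quantum instance $(C_1',C_2')$ that reproduces the same entropies.

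Given $C_i$ on $n+k$ bits, I would build $C_i'$ on $2n+k$ qubits as follows. Prepend a depth-$1$ Hadamard layer $H^{\otimes(n+k)}$ on the input register $X$ to turn $\ket{0\cdots0}$ into the uniform superposition $\tfrac{1}{\sqrt{2^{n+k}}}\sum_x \ket{x}$, then apply the quantum version of $C_i$ to $X$. The naive move---trace out the discarded qubits of $\tfrac{1}{\sqrt{2^{n+k}}}\sum_x\ket{C_i(x)}$---fails, and this is the one step that needs care: since $C_i$ is a bijection this state is the uniform superposition over all $(n+k)$-bit strings, which factorizes, so the reduced state is \emph{pure} rather than the intended classical mixture. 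To record the which-path information I would add a fresh $n$-qubit register $Y$ and, after $C_i$, apply a depth-$1$ layer of CNOTs copying the first $n$ output qubits of $X$ into $Y$, producing $\tfrac{1}{\sqrt{2^{n+k}}}\sum_x \ket{C_i(x)}_X\ket{y(x)}_Y$, where $y(x)$ is the first $n$ bits of $C_i(x)$. Declaring the whole register $X$ (all $n+k$ qubits) to be traced out and $Y$ to be kept, orthogonality of the distinct $\ket{C_i(x)}_X$ collapses the reduced state to the diagonal operator $\rho_i = \sum_y \mathcal{D}_i(y)\,\proj{y}$, whose von Neumann entropy equals the Shannon entropy $S(\mathcal{D}_i)$ exactly.

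It then remains only to check the bookkeeping: each $C_i'$ is a Hadamard layer, a constant-depth copy of $C_i$, and a CNOT layer, hence constant depth; it acts on $2n+k$ qubits with $n+k$ traced out, matching the \QEDconst\ template; and the entropy gap carries over verbatim, so the ``yes/no'' answer is preserved. This establishes \EDconst\ $\leq_P$ \QEDconst, and composing with Theorem~\ref{thm:lweedconst} gives \lwe\ $\leq_P$ \QEDconst. The only subtle point is the entropy-matching via the which-path copy; everything else is immediate, which is exactly why one can summarize the embedding as ``classical reversible circuits are a special case of quantum circuits.''
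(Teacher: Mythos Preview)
Your proposal is correct and follows the same overall route as the paper: invoke Theorem~\ref{thm:lweedconst} for \lwe\ $\leq_P$ \EDconst\ and then compose with \EDconst\ $\leq_P$ \QEDconst. The paper simply asserts the latter reduction as immediate (``classical reversible circuits are a particular kind of quantum circuits''), whereas you have explicitly spelled out the Hadamard-layer / CNOT-copy construction and correctly flagged the one nontrivial point---that without recording the which-path information in a fresh register, the post-circuit state is still the uniform superposition and the reduced state is pure---so your write-up is more detailed than, but entirely consistent with, the paper's one-line proof.
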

\begin{proof}
Follows from Theorem~\ref{thm:lweedconst} together with the fact that \EDconst\ $\leq_P$ \QEDconst.
\end{proof}
Note, though, that since for the quantum case we are using the same randomized-encoding construction as in the classical case, the resulting quantum circuits must have gates of unbounded fan-out. That is, the quantum circuits considered here belong to the class $\mathsf{QNC^0_f}$. As shown in~\cite{hoyer2005quantum}, constant-depth quantum circuits with unbounded fan-out are very powerful, as they can produce GHZ states (which normally require logarithmic depth to create, when considering gates of bounded fan-out~\cite{watts2019exponential}) and even perform an approximate version of the Quantum Fourier Transform. It would therefore be interesting to see whether the results can be extended to the case of quantum circuits with bounded fan-out gates. We leave this as an interesting open problem.

\section{Hamiltonian quantum entropy difference}
In this section we consider a Hamiltonian analogue of \QED\ which we call Hamiltonian Quantum Entropy Difference (\HQED). The problem will be to estimate the \emph{entanglement entropy} difference between the ground states of two local Hamiltonians. Equivalently, if we trace out parts of the ground states and examine the resulting reduced states, we want to know which of the two has higher Von Neumann entropy. Formally:

\begin{definition}[\HQED]
Let $H_1$ and $H_2$ be local Hamiltonians acting on $n + k$ qubits, whose ground states are $\ket{\psi_{1}}$ and $\ket{\psi_{2}}$. Define the following $n$-qubit mixed states:
\begin{equation}
\rho_1 = Tr_{k} ( \ket{\psi_1} \bra{\psi_1} ) \quad \quad \quad \quad
\rho_2 = Tr_{k} ( \ket{\psi_2} \bra{\psi_2} )
\end{equation}
Given $n$, $k$ and descriptions of $H_1$ and $H_2$ as input, decide whether:
\begin{equation}
S(\rho_1) \geq S(\rho_2) + 1
\end{equation}
or
\begin{equation}
S(\rho_2) \geq S(\rho_1) + 1
\end{equation}
promised that one of these is the case. For the cases where either of the two Hamiltonians has a degenerate groundspace, $\ket{\psi_j}$ will denote a state in the groundspace for which $S(\rho_j)$ is minimal.
\end{definition}

We will refer to \HQEDlog  and \HQEDconst, respectively, as instances of \HQED\ in which the input Hamiltonians have the additional promise that purifications of the states $\rho_1$ and $\rho_2$ can be approximated (to within a $1/poly(n+k)$ additive error in trace distance) by quantum circuits of logarithmic and constant depth, respectively.

\begin{theorem} \label{thm:hqed}
There exists a deterministic poly-time reduction from \QED\ to \HQED\ ($\QED \leq_P \HQED$).
\end{theorem}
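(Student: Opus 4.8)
The plan is to use the Feynman--Kitaev history-state construction~\cite{fk} to embed each circuit of a \QED\ instance into a local Hamiltonian, and then to control the discrepancy between the reduced state of the history state and the circuit's true output state using the padding idea of~\cite{nirkhe_et_al:LIPIcs:2018:9095}. First I would take a \QED\ instance $(C_1,C_2)$ and, after amplifying its entropy gap to a sufficiently large constant by parallel repetition (entropy is additive, so this is free), write each circuit as a product of $T=\poly(n+k)$ elementary gates $C_j=U^{(j)}_T\cdots U^{(j)}_1$ and append $L$ identity gates to obtain padded circuits $C_j'$ with $T+L$ gates. For each $C_j'$ I would build the standard $O(1)$-local Feynman--Kitaev Hamiltonian $H_j$ (with input, propagation, clock, and output terms) whose unique ground state is the history state
\[
|\psi_j\rangle=\frac{1}{\sqrt{N}}\sum_{t=0}^{N-1}|t\rangle_{\mathrm{clk}}\otimes|\phi^{(j)}_t\rangle_W,\qquad N=T+L+1,
\]
where $|\phi^{(j)}_t\rangle=U^{(j)}_t\cdots U^{(j)}_1|0\cdots 0\rangle$ for $t\le T$ and $|\phi^{(j)}_t\rangle=C_j|0\cdots 0\rangle$ for $t>T$. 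Building $H_1,H_2$ from the circuit descriptions is clearly polynomial time, the Hamiltonians are local, and the ground space is nondegenerate, so the minimal-entropy convention in the definition of \HQED\ never comes into play.

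Next I would fix the bipartition of the \HQED\ instance: the qubits consist of a clock register plus the work register $W=A\cup B$ with $|A|=n$ and $|B|=k$, and I keep only the $n$ output qubits $A$, tracing out both $B$ and the \emph{entire} clock register. Using orthonormality of the clock states, tracing out the clock annihilates all cross terms $t\neq t'$, so the resulting reduced state is
\[
\sigma_j=\Tr_{B,\mathrm{clk}}\big(|\psi_j\rangle\langle\psi_j|\big)=\frac{1}{N}\sum_{t=0}^{N-1}\Tr_B\big(|\phi^{(j)}_t\rangle\langle\phi^{(j)}_t|\big)=(1-\epsilon)\,\rho_j+\epsilon\,\tau_j,
\]
where $\epsilon=T/N$, the state $\rho_j=\Tr_k(C_j|0\cdots0\rangle\langle 0\cdots 0|C_j^\dagger)$ is exactly the one from the \QED\ instance (the $L+1$ time steps $t\ge T$ all carry the output state), and $\tau_j$ is the normalized average of the reduced states over the pre-output steps $t<T$. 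Hence $TD(\sigma_j,\rho_j)=\epsilon\,TD(\tau_j,\rho_j)\le\epsilon$, and choosing $L=\poly(n+k)$ large enough drives $\epsilon$ below any prescribed inverse polynomial.

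To conclude, I would invoke continuity of the von Neumann entropy (the Fannes--Audenaert inequality): since $\sigma_j$ and $\rho_j$ live on a $2^n$-dimensional space and are $\epsilon$-close in trace distance, $|S(\sigma_j)-S(\rho_j)|\le \epsilon n + h(\epsilon)$, which is $<1/2$ once $\epsilon=O(1/n)$, i.e.\ for polynomial $L$. With the original gap amplified to at least $3$, the two per-instance errors of size $<1/2$ cannot flip the sign of $S(\sigma_1)-S(\sigma_2)$ and leave a gap of at least $1$, so $(H_1,H_2)$ is a valid \HQED\ instance with the same yes/no answer as $(C_1,C_2)$. The main obstacle, and the reason the naive reduction fails, is exactly that the unpadded history state places only an $O(1/T)$ fraction of its weight on the output state, so its reduced entropy is swamped by the intermediate computation steps; the padding concentrates a $1-\epsilon$ fraction of the weight on the output, after which the entropy-continuity estimate finishes the argument. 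A secondary point I would check carefully is that it is the tracing-out of the clock register (not merely of $B$) that decouples the time steps into the classical mixture above, so the clock must sit on the traced-out side of the \HQED\ bipartition.
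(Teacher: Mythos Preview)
Your proposal is correct and follows essentially the same approach as the paper: pad the circuits with identity gates as in~\cite{nirkhe_et_al:LIPIcs:2018:9095}, build the Feynman--Kitaev Hamiltonians, trace out the clock to get a state close to the true circuit output, and apply Fannes--Audenaert. The only cosmetic differences are that the paper bounds the trace distance via a fidelity computation rather than your direct convex-mixture observation, and does not pre-amplify the gap (it just carries the $1/\poly$ correction through); also, you should drop the ``output terms'' from your list of Hamiltonian components, since penalizing outputs would shift the ground state away from the history state.
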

\begin{proof}
We could start the reduction by constructing Hamiltonians $H_{1}$ and $H_2$ such that the respective ground states $\ket{\psi_{1}}$ and $\ket{\psi_{2}}$ are Feynman-Kitaev history states for quantum circuits $C_1$ and $C_2$ respectively,
\begin{equation}
\ket{\psi_{j}}=\frac{1}{\sqrt{T+1}}\sum_{t=0}^{T}U_{t}U_{t-1}...U_{1}\ket{00...0}\ket{t}
\end{equation}
where $C_{j}=U_{T}U_{T-1}...U_{1}$, $j \in \{1,2\}$ and the states $\ket{t}$ are the \emph{clock states} in the Feyman-Kitaev history state construction.
However, \textit{a priori} this does not guarantee that determining the entropy difference between the reduced states of $\ket{\psi_{1}}$ and $\ket{\psi_{2}}$ implies determining the entropy difference between the reduced states created by $C_{1}$ and $C_{2}$ respectively. 
This is because the output state of $C_j$ constitutes only one term in the history state superposition. Furthermore, we have no information about the entropy difference between the other terms in $\ket{\psi_1}$ relative to their counterparts in $\ket{\psi_2}$.
To resolve this, we will use a trick from \cite{nirkhe_et_al:LIPIcs:2018:9095}, where a circuit can be padded at the end with identities to give more weight to the ``final term" in the Feynman-Kitaev history state. We will now explain this construction. 

\begin{figure}[h!]
\begin{center}
  \includegraphics[width=0.5\linewidth]{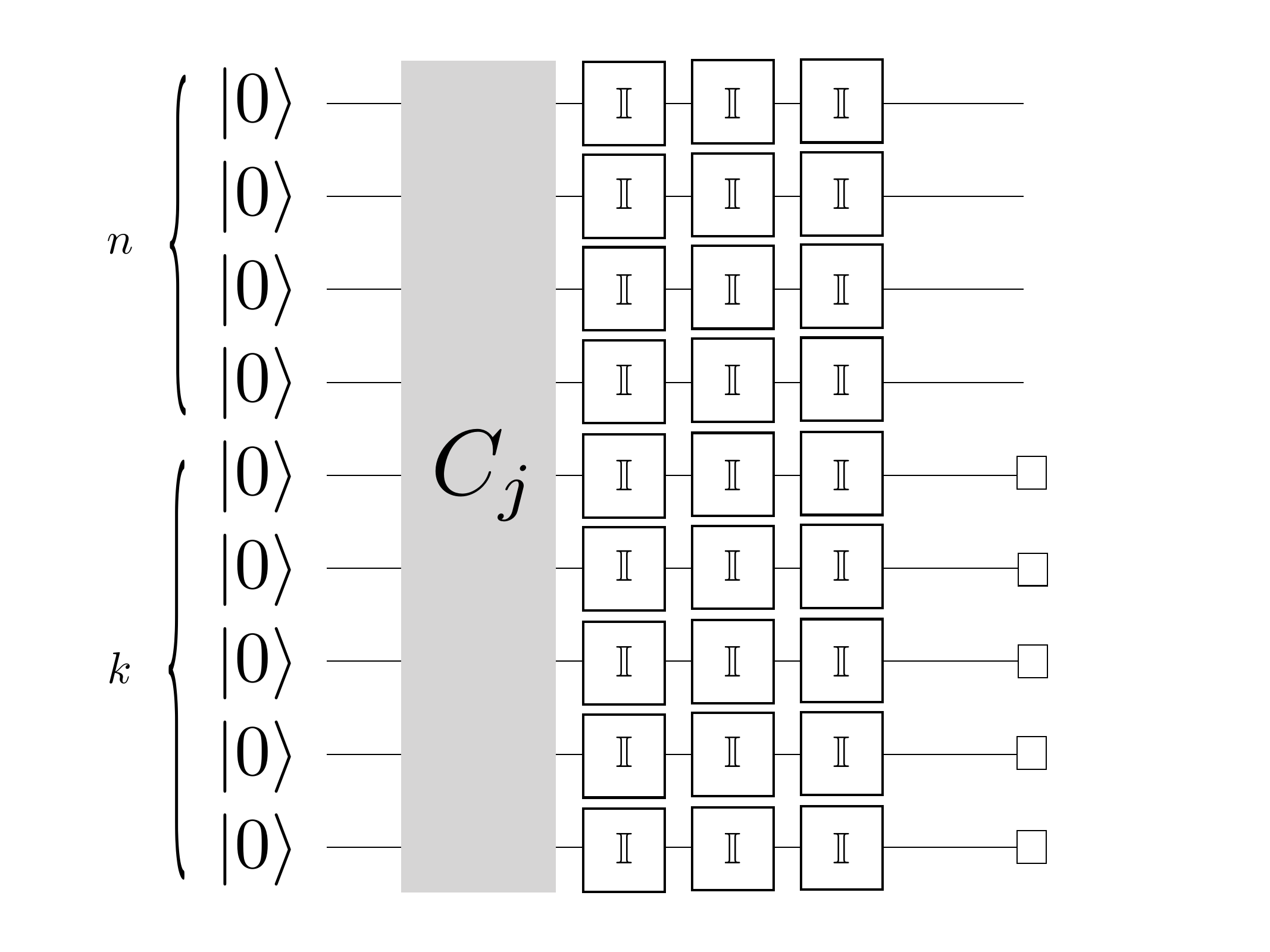}
\caption{Circuit $C_j$ padded with identities.}
\label{fig:paddedcircuit}
\end{center}
\end{figure}

Given the input circuit $C_{j}$ to the problem \QED , first apply $N$ identity operators at the end of the circuit to each qubit, as depicted in Figure \ref{fig:paddedcircuit}; we will call this the padded circuit. Clearly this does not affect the final state, but it will be useful in the reduction to \HQED. We now construct the Feynman-Kitaev history state from the padded circuit, which is
\begin{equation}
\ket{\psi_{j}}=\frac{1}{\sqrt{T+N+1}}\sum_{t=0}^{T+N}U_{t}U_{t-1}...U_{1}\ket{00...0}\ket{t}
\end{equation}
Note that all unitaries $U_i$ for $T+1\leq i \leq T+N$, $U_i=\mathbb{I}^{\otimes (n+k)}$, thus we can simplify the state $\ket{\psi_{j}}$ to be
\begin{equation}
\ket{\psi_{j}}=\frac{1}{\sqrt{T+N+1}}\left((N+1)U_{T}U_{T-1}...U_{1}\ket{00...0}\sum_{t'=T}^{T+N}\ket{t'}+\sum_{t=0}^{T-1}U_{t}U_{t-1}...U_{1}\ket{00...0}\ket{t}\right).
\end{equation}
By the Feynman-Kitaev construction there is a local Hamiltonian, having $poly(n + k)$ terms, for which this is a ground state. This Hamiltonian will act on $n+k+N+T$ qubits due to the clock states $\ket{t}$, which are encoded in unary. Let us now examine the reduced states obtained by tracing out the clock register, which we denote as $\sigma_j$.
In addition, to simplify the notation, we also denote $\ket{\phi_j(t)} = U_{t}U_{t-1}...U_{1} \ket{00...0}$ (with $\ket{\phi_j(0)} = \ket{00...0}$), so that the padded history state can be written as
\begin{equation}
\ket{\psi_{j}}=\frac{1}{\sqrt{T+N+1}}\left((N+1) \ket{\phi_j(T)} \sum_{t'=T}^{T+N}\ket{t'}+\sum_{t=0}^{T-1}\ket{\phi_j(t)}\ket{t}\right).
\end{equation}
Note that $\ket{\phi_j(T)}$ is the output state of the circuit $C_j$. If we now trace out the clock register, we have
\begin{equation}
\sigma_j = Tr_{t}(\ket{\psi_j}\bra{\psi_j}) = \frac{N+1}{T+N+1} \ket{\phi_j(T)}\bra{\phi_j(T)} + \frac{1}{T+N+1} \sum_{t=0}^{T-1}\ket{\phi_j(t)} \bra{\phi_j(t)}
\end{equation}
Computing the fidelity between $\rho_j$ and $\ket{\phi_j(T)}$ we get
\begin{equation}
F(\sigma_j, \ket{\phi_j(T)}\bra{\phi_j(T)}) = \bra{\phi_j(T)} \sigma_j \ket{\phi_j(T)} = \frac{N+1}{T+N+1}  + \frac{1}{T+N+1} \sum_{t=0}^{T-1} |\braket{\phi_j(t) | \phi_j(T)}|^2
\end{equation}
hence
\begin{equation}
F(\sigma_j, \ket{\phi_j(T)}\bra{\phi_j(T)}) \geq \frac{N+1}{T+N+1} = 1 - \frac{T}{T + N + 1}.
\end{equation}
By taking $N = poly(T)$, and given that $T = poly(n+k)$, we get that
\begin{equation}
F(\sigma_j, \ket{\phi_j(T)}\bra{\phi_j(T)}) \geq 1 - \frac{1}{poly(n+k)}.
\end{equation}
From the relationship between fidelity and trace distance this also means that
\begin{equation}
TD(\sigma_j, \ket{\phi_j(T)}\bra{\phi_j(T)}) \leq \frac{1}{poly(n+k)}.
\end{equation}
If we now trace out the $k$ qubits from both of these states and use the fact that this is a trace non-increasing operation, we get
\begin{equation}
TD(\rho'_j, \rho_j) \leq \frac{1}{poly(n+k)},
\end{equation}
where $\rho_j$ is the output state of $C_1$ when tracing out the subsystem of $k$ qubits and $\rho'_j$ is the analogous state for the Hamiltonian $H_j$.
Next, we apply the Fannes-Audanaert inequality~\cite{fannes1973continuity, audenaert2007sharp} relating trace distance and entropy, which says that if
\begin{equation}
TD(\rho'_j, \rho_j) \leq \epsilon
\end{equation}
then
\begin{equation}
| S(\rho'_j) - S(\rho_j) | \leq \frac{\epsilon}{2} (n-1) + h \left( \frac{\epsilon}{2} \right),
\end{equation}
where $h$ is the binary entropy function. Given that $\epsilon = 1/poly(n+k)$, it follows that
\begin{equation}
| S(\rho'_j) - S(\rho_j) | \leq \frac{1}{poly(n+k)}.
\end{equation}
By the triangle inequality we get that if $S(\rho_1) \geq S(\rho_2) + O(1)$, then $S(\rho'_1) \geq S(\rho'_2) + O(1)$ and if $S(\rho_2) \geq S(\rho_1) + O(1)$, then $S(\rho'_2) \geq S(\rho'_1) + O(1)$.

Thus, from the instance of \QED\ $(n + k, k, C_1, C_2)$ we have constructed an instance of \HQED\ $(n + k + N + T, k + N + T, H_1, H_2)$ that preserves the entropy difference of the original circuits (up to a $1/poly(n+k)$ error). This concludes the proof.
\end{proof}

\begin{corollary}
\QEDlog\ $\leq_P$ \HQEDlog , $QED_{O(1)}^b \leq_P$ \HQEDconst.
\end{corollary}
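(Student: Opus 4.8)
The corollary asserts that the reduction of Theorem~\ref{thm:hqed} respects the depth structure of the approximating circuits. The plan is to re-examine the construction in the proof of Theorem~\ref{thm:hqed} and verify that when the input instance is an instance of \QEDlog\ (respectively \QEDconst), the \HQED\ instance it produces is in fact an instance of \HQEDlog\ (respectively \HQEDconst). By definition, this amounts to exhibiting, for each constructed Hamiltonian $H_j$, a logarithmic-depth (respectively constant-depth) quantum circuit whose output, after tracing out the appropriate subsystem, is $1/poly(n+k)$-close in trace distance to a purification of $\rho'_j$.

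\vb
First I would recall that the reduction takes the circuits $C_1, C_2$ from the \QED\ instance, pads each with $N = poly(T)$ identity layers, and sets $\ket{\psi_j}$ to be the Feynman-Kitaev history state of the padded circuit, which is the ground state of a local Hamiltonian $H_j$. The key observation is that this history state is itself preparable by a circuit of essentially the same depth as $C_j$, up to the overhead of building the unary clock register and the controlled applications of the $U_i$. Concretely, $\ket{\psi_j}$ can be prepared by first creating the clock superposition $\sum_t \ket{t}$ and then applying $U_t U_{t-1}\cdots U_1$ to the work register conditioned on the clock. Since the padding gates are identities, the only nontrivial gates are the original $T = poly(n+k)$ gates of $C_j$, and the controlled versions of these can be applied in depth comparable to $depth(C_j)$ provided the clock is encoded so that the controls can be distributed in parallel. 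Thus if $depth(C_j) = O(\log(n+k))$ (respectively $O(1)$), the history-state preparation circuit has depth $O(\log(n+k))$ (respectively $O(1)$) up to the clock-construction overhead.

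\vb
The main step is therefore to argue that the clock register and the controlled-$U_i$ structure do not inflate the depth beyond the allowed class. For the logarithmic case this is straightforward: building a unary clock on $poly(n+k)$ registers and fanning out the control bits can be done in logarithmic depth by a standard divide-and-conquer copying strategy (the same one invoked in the proof of Theorem~\ref{thm:lweedlog} for copying vectors). For the constant-depth case one must be slightly more careful, since unbounded fan-out is not available in \QNCz; however, we are not required to prepare the \emph{exact} history state but only a purification of $\rho'_j$ to within $1/poly(n+k)$ in trace distance, and the definitions of \HQEDlog\ and \HQEDconst\ only demand this approximate preparability. Since $\rho'_j$ is $1/poly(n+k)$-close to $\rho_j = Tr_k(\ket{\phi_j(T)}\bra{\phi_j(T)})$ by the fidelity bound already established in the proof of Theorem~\ref{thm:hqed}, and $\ket{\phi_j(T)}$ is simply the output of $C_j$ acting on $\ket{00\ldots0}$, a purification of $\rho'_j$ can be approximated to within $1/poly(n+k)$ by the original circuit $C_j$ itself, which has the required depth by hypothesis.

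\vb
The hard part, if any, will be reconciling the two descriptions of the approximating circuit: the definition of \HQEDlog\slash\HQEDconst\ asks for a shallow circuit approximating a purification of the reduced ground state $\rho'_j$, whereas the natural object produced by the reduction is the history state. I expect the cleanest route is to bypass the history state entirely at this stage and simply note that, by the trace-distance bound $TD(\rho'_j,\rho_j)\leq 1/poly(n+k)$ derived in Theorem~\ref{thm:hqed}, the circuit $C_j$ from the \QED\ instance already gives a shallow approximate preparation of (a purification of) $\rho'_j$. Since $C_j$ has logarithmic (respectively constant) depth exactly when the starting instance is in \QEDlog\ (respectively \QEDconst), the constructed \HQED\ instance satisfies the promise of \HQEDlog\ (respectively \HQEDconst), which is precisely what the corollary claims.
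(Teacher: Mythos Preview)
Your proposal is correct, and the argument you ultimately settle on---bypassing the history state and observing that the original circuit $C_j$ already yields a shallow-depth preparation of a state whose reduction is $1/poly(n+k)$-close to $\rho'_j$, by the trace-distance bound from Theorem~\ref{thm:hqed}---is exactly the paper's argument. The initial detour through preparing the history state directly is unnecessary (and, as you yourself note, runs into fan-out issues in the constant-depth case); the paper skips it entirely and goes straight to the observation that since padding with identities does not change circuit depth and $\rho'_j$ is close to $\rho_j$, the circuits $C_j$ themselves witness the shallow-preparability promise of \HQEDlog\ and \HQEDconst.
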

\begin{proof}
In the proof of Theorem~\ref{thm:hqed}, we constructed Hamiltonians $H_1$ and $H_2$ for which the reduced ground states, $\rho'_1$ and $\rho'_2$, are close in trace distance to the output states of the \QED\ circuits $C_1$ and $C_2$. Furthermore, the padded construction used in the previous theorem does not alter the depth of the original circuits, since we are padding with identity gates.
Thus, the depth of the circuits required to approximate $\rho'_1$ and $\rho'_2$ (to within additive error $1/poly(n+k)$ in trace distance) is upper bounded by the depth of $C_1$ and $C_2$ (as $\rho_1$ and $\rho_2$ are approximations of these states). 
For the cases in which these circuits are log-depth or constant-depth, respectively, we obtain instances of \HQEDlog\ and \HQEDconst, respectively.
\end{proof}

For the constant depth case, in Appendix~\ref{sect:constdepthham}, we give a different construction for a local Hamiltonian for which the ground state is \emph{exactly} the output of a quantum circuit from $QED_{O(1)}^b$.

\subsection{Applications to holography} \label{subsect:holo}
Holographic duality is an idea inspired from early results of Bekenstein and Hawking showing that the entropy of a black hole is proportional to its area~\cite{bekenstein, hawking}. This connection between a quantum mechanical property, the Von Neumann entropy of a quantum state, and geometry, in the form of the black hole area, was later expanded upon through the AdS/CFT correspondence~\cite{adscft}. Briefly, the AdS/CFT correspondence is a duality between a non-gravitational quantum field theory (the conformal field theory, or CFT) and a quantum gravitational theory that takes place in an Anti-de Sitter (AdS) space-time. 
The CFT is defined on the boundary of the AdS space. The purpose of the correspondence is to be able to relate physical observables in the bulk to observables on the boundary and vice versa through the so-called \emph{AdS dictionary} (or bulk-to-boundary and boundary-to-bulk maps). This would allow for the derivation of predictions in the quantum gravitational bulk theory purely from a non-gravitational boundary theory.

Similar to the Bekenstein-Hawking result, Ryu and Takayanagi showed a correspondence between geometry and entanglement in AdS/CFT~\cite{rt}. This is known as the \emph{Ryu-Takayanagi} formula and it states that, to leading order, the entropy of a state on the boundary CFT is given by the area of a minimal bulk surface that encloses that state.

Since entropy is an important quantity of interest in AdS/CFT, we discuss potential implications of our result for this duality\footnote{We note that \QSZK\ has appeared before in holography in the context of the Harlow-Hayden decoding task~\cite{harlow2013quantum}. Briefly, Harlow and Hayden considered the task of decoding information from the Hawking radiation of a black hole and showed that one could encode a \QSZK-complete problem in this task. The result was later improved by Aaronson who showed that decoding the information from the radiation would allow one to invert general one-way functions~\cite{aaronson2016complexity}.} In particular we propose a gedankenexperiment based on our results that gives evidence for certain instances of AdS/CFT having the AdS dictionary be computationally intractable to compute, unless \LWE\ is tractable.
A similar result was obtained by Bouland et al \cite{boulandfeffermanvazirani}, in the context of the wormhole growth paradox. Their result also uses cryptographic techniques in the form of pseudorandom quantum states. In contrast to our setting, they only require that such states exist and are computationally indistinguishable, whereas we are using the more fine-grained \LWE\ assumption.

Roughly speaking, the main idea here is that the Ryu-Takayanagi formula relates a quantity that we have shown is hard to compute even for shallow circuits (the entropy), to a quantity that seemingly can be efficiently computed, the area of a surface. Thus assuming that \lwe\ is hard for polynomial time quantum computers, we arrive at a contradiction. A potential resolution is that the AdS dictionary does not efficiently translate from the bulk to the boundary. 

In the following thought experiment, we will assume that CFT states can be prepared efficiently starting from descriptions of functions $f$ and $g$, such as the ETCF functions used in Theorem \ref{thm:lweedlog}, that are $1$-to-$1$ and $2$-to-$1$, respectively. Furthermore, it should be the case that there is a constant difference in entanglement entropy for the two types of states\footnote{As alluded to in the proof of Theorem \ref{thm:lweedlog}, we can in fact make the entropy difference be any constant, $k$, by taking $g$ to be a $2^k$-to-$1$ function (and changing $f$ appropriately, though keeping it a $1$-to-$1$ function).}. To give arguments for why we think this is true, first note that, as stated in Theorem~\ref{thmhqed}, we can construct local Hamiltonians for which the ground states will indeed encode instances of such functions. These ground states will have different entanglement entropy depending on which function was used.

A second argument is based on the observation that certain quantum error-correcting codes serve as toy models for the AdS/CFT correspondence~\cite{pastawski2015holographic}. Specifically, as discussed in~\cite{harlow2017ryu}, codes that protect against erasure errors constitute such toy models. They satisfy the property that encoded information can be recovered by acting on only a fraction of the qubits in the encoded state. As an example of this (taken from~\cite{harlow2017ryu}), if we let $\ket{\tilde{\psi}}_{123}$ be an encoding of $\ket{\psi}$ on three subsystems, it should be that there exists a unitary $U_{12}$ such that
\begin{equation} \label{eqn:qeccstate}
\ket{\tilde{\psi}} = U_{12} [ \ket{\psi}_1 \otimes \ket{\chi}_{23}]
\end{equation}
as well as corresponding unitaries $U_{13}$ and $U_{23}$ that act in a similar way.
Here $\ket{\chi}$ is the maximally entangled state $\ket{\chi} = \sum_i \ket{i}\ket{i}$.
As a toy model for AdS/CFT, the state $\ket{\psi}$ represents the quantum state of a bulk observer and $\ket{\tilde{\psi}} $ will be the corresponding CFT state that lives on the boundary of the AdS space. The indices label three different subsystems on the boundary and Equation~\ref{eqn:qeccstate} simply states that the bulk information (the state $\ket{\psi}$) can be recovered by acting on only part of the boundary. As shown in~\cite{harlow2017ryu}, these states satisfy a Ryu-Takayanagi formula (in addition to other properties that are satisfied by AdS/CFT). Specifically, the entanglement entropy of $\ket{\chi}$ corresponds to the area of a bulk surface. One could imagine considering the states
\begin{equation} \label{eqn:chifg}
\ket{\chi_f} = \sum_x \ket{x} \ket{f(x)} \quad \quad \quad \ket{\chi_g} = \sum_x \ket{x} \ket{g(x)}
\end{equation}
instead of $\ket{\chi}$, where $f$ and $g$ are $1$-to-$1$ and $2$-to-$1$, respectively. 
In this case the difference in entanglement entropy will be determined by whether the function that was used was $1$-to-$1$ or $2$-to-$1$. States such as the ones from Equation~\ref{eqn:chifg}, or even analogous weighted superpositions of such states are efficiently preparable (according to the efficient evaluation property from Definition~\ref{def:etcfs}).

Finally, note that since $f$ and $g$ themselves can be implemented by circuits of constant depth, as shown in Theorem~\ref{thm:lweedconst}, the quantum states derived from these functions (such as $\ket{\chi_f}$ or $\ket{\chi_g}$) could also be prepared by short depth quantum circuits. This would be consistent with a conjecture by Swingle that the underlying CFT states that lead to the Ryu-Takayanagi formula are well approximated by states resulting from MERA (\emph{multi-scale entanglement renormalization ansatz})~tensor networks \cite{swingle}. Such MERA states essentially have log-depth quantum circuit descriptions.
Let us now describe our thought experiment.

Suppose Alice has a quantum computer and is given the description of a function, denoted $h$, which is promised to be either $f$ (that is a $1$-to-$1$) or $g$ (that is a $2$-to-$1$) from Equation~\ref{eqn:etcf}. Alice is then asked whether the function she received is $1$-to-$1$ or $2$-to-$1$. By the injective invariance property (Definition~\ref{def:etcfs}) this is as hard to determine as solving \LWE. Suppose now that Alice uses her quantum computer to do the following:
\begin{enumerate}
\item She first prepares a state $\ket{\psi^h_{CFT}}$ that is supposed to represent a CFT state whose entanglement entropy is determined by the type of function of $h$. In other words, if $h$ is $f$, we will say that the state has high entanglement entropy and if $h$ is $g$ we will say it has low entanglement entropy. As discussed above, we conjecture that there should exist an efficient procedure for preparing such a state, given the function description.
\item By the AdS/CFT correspondence, $\ket{\psi^h_{CFT}}$ should be dual to a state $\ket{\psi^h_{bulk}}$ in the bulk. In this bulk space-time, under the Ryu-Takayanagi formula the area of a certain surface $\gamma_h$ will be equal\footnote{Or be approximately equal.} to the entanglement entropy of $\ket{\psi^h_{CFT}}$. Using the AdS dictionary, Alice then considers a bulk Hamiltonian $H_{bulk}$ such that the time evolution of $\ket{\psi^h_{bulk}}$ under $H_{bulk}$ corresponds to an observer in the bulk measuring the area of $\gamma_h$. If this fictional bulk observer notices that the area of $\gamma_h$ is above a certain threshold (corresponding to the case of high entropy), it will ``reset itself'' so that at the end of the evolution it returns to the state $\ket{\psi^h_{bulk}}$ (and so the corresponding CFT state returns to $\ket{\psi^h_{CFT}}$). If, on the other hand the area is below the threshold (corresponding to low entropy) it should then map itself into a state for which the dual CFT state is ``as close to orthogonal to $\ket{\psi^h_{CFT}}$ as possible''. In other words, we would like this state to be distinguishable from $\ket{\psi^h_{CFT}}$. A schematic illustration of the fictional bulk observer's two perspectives is shown in Figure~\ref{fig:adscft}. The time required for the observer to perform the measurement should be proportional to the area\footnote{Note that since we don't know the entanglement entropy of $\ket{\psi^h_{CFT}}$ in advance, we can take the time we evolve by $H_{bulk}$ to be the longest of the two choices, corresponding to the case of maximum entropy.} of $\gamma_h$. 
\item Using the AdS dictionary, Alice computes the boundary CFT Hamiltonian $H_{CFT}$ that is dual to $H_{bulk}$. She then time-evolves her state $\ket{\psi^h_{CFT}}$ with $H_{CFT}$. Under the AdS/CFT correspondence the evolution of her state will be dual to the time-evolution of the bulk observer that is performing the measurement of $\gamma_h$. Alice is, in effect, simulating this process on her quantum computer.
\item At the end of the evolution, Alice performs SWAP tests to check whether the state she is left with is $\ket{\psi^h_{CFT}}$. If this is the case, she concludes that the original function was $1$-to-$1$, otherwise she concludes that it is $2$-to-$1$.
\end{enumerate}

\begin{figure}[ht!]
\begin{center}
  \includegraphics[width=0.8\linewidth]{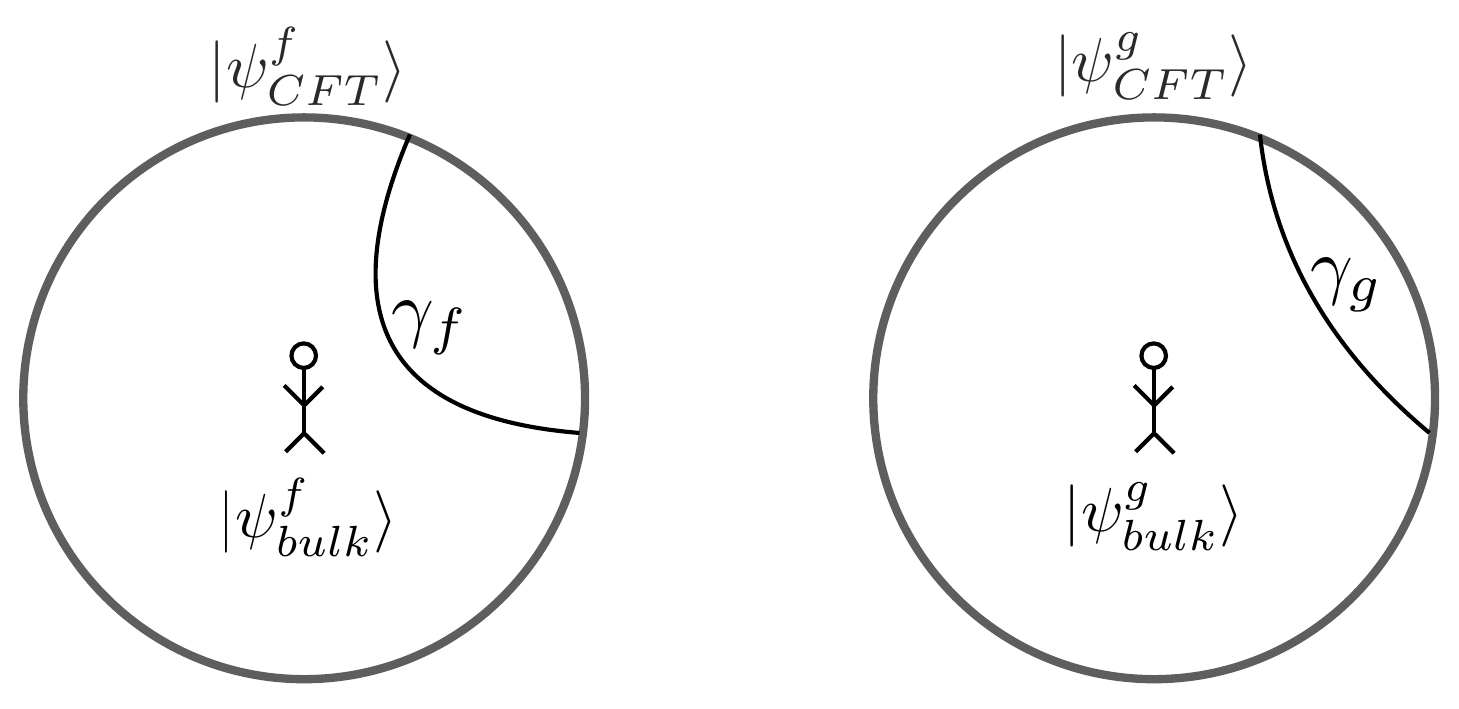}
\caption{The two situations for the bulk observer. The observer should measure the area of the surface $\gamma$ to determine which situation it is in.}
\label{fig:adscft}
\end{center}
\end{figure}

If all the steps in the above procedure can be performed efficiently, then this experiment would violate the injective invariance property of ETCF functions, since it can efficiently distinguish between the two function types. Correspondingly, we would have an efficient quantum algorithm for solving \LWE. Since we believe that this is unlikely, we would need to determine which of the above steps is intractable. As mentioned, we conjecture that preparing the CFT states should be efficient. The time-evolution under a Hamiltonian, that Alice has to perform, should also be efficient using standard techniques from quantum simulation~\cite{brown2010using, berry2015hamiltonian}. One step that seems potentially problematic is step 3. Here the bulk observer needs to affect his space-time so as to map Alice's state to one of two efficiently distinguishable states. It certainly seems plausible that the bulk observer can do very different things depending on the area he measures, resulting in completely different bulk states. But it is unclear whether the resulting dual CFT states would then be distinguishable by Alice. The other possible source of intractability is the use of the AdS dictionary. If the dictionary is exponentially complex, Alice cannot determine the state that is dual to her CFT state or what her boundary Hamiltonian should be.

An important observation to make here is that the entropy difference between the two cases is constant and one could argue that we should not expect bulk observers to be able to efficiently detect such small differences in geometry. Indeed, the Ryu-Takayanagi formula typically states that entropy equals geometry \emph{to leading order}. The entropy of the bondary state will have higher-order corrections stemming from the entanglement-entropy  of states in the bulk. Thus,
it might be more relevant to have a scenario in which the \emph{entropy ratio} is constant instead of the entropy difference, since this would correspond to a noticeable change in area between the two cases (and would be more in line with the portrayal in Figure~\ref{fig:adscft}).
As we show in Appendix~\ref{sect:eratio}, even estimating the entropy ratio is hard based on \LWE\ and, in fact, entropy difference and entropy ratio are computationally equivalent.

A final comment we make about the above experiment is that it does not require holographic duality to be true for our own universe. Indeed, as long as AdS/CFT is true it should in principle be possible to simulate dynamics in a ``virtual'' AdS space by constructing CFT states on a quantum computer and evolving them under the CFT Hamiltonian. 
Can instances of \LWE\ and ETCF functions be encoded in these CFT states? We leave answering this question and formalizing the above experiment for future work.

\bibliography{biblio}
\bibliographystyle{alpha}

\appendix
\section{An unconditional lower bound on entropy estimation} \label{sect:bqpqpoly}
A question we can ask concerning entropy estimation is whether the entropy of a quantum state, $\rho$, on $n$-qubits can be estimated up to additive error $\epsilon$ in time (and with a number of copies of $\rho$) that scales as $poly(n, \log(1/\epsilon))$. While it has already been shown in~\cite{acharya, quantumalgorithmentropy} that the answer is no, here we give a complexity-theoretic proof of this fact:

\begin{theorem}
There is no quantum algorithm running in time $poly(n, \log(1/\epsilon))$ for estimating the entropy of an $n$-qubit quantum state $\rho$ to within additive error $\epsilon > 0$.
\end{theorem}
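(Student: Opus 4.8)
The plan is to argue by contradiction, using the assumed estimator as an unexpectedly powerful subroutine inside a polynomial-time quantum computation equipped with polynomial-size quantum advice. Suppose there were a quantum algorithm $\mathcal{A}$ that, given copies of an $n$-qubit state $\rho$, outputs $S(\rho)$ to additive error $\epsilon$ in time $\poly(n,\log(1/\epsilon))$. I would leverage $\mathcal{A}$ to show that every language is decidable in quantum polynomial time with polynomial quantum advice, i.e. $\mathsf{ALL}\subseteq\mathsf{BQP}/\mathsf{qpoly}$, and then contradict the proven separation $\mathsf{BQP}/\mathsf{qpoly}\neq\mathsf{ALL}$. The latter follows from Aaronson's containment $\mathsf{BQP}/\mathsf{qpoly}\subseteq\mathsf{PostBQP}/\mathsf{poly}=\mathsf{PP}/\mathsf{poly}$ together with the elementary fact that $\mathsf{PP}/\mathsf{poly}\subsetneq\mathsf{ALL}$, since a countable family of machines taking polynomial classical advice cannot decide every language.

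The heart of the argument is the encoding. For a fixed input length $\ell$, I would design an advice state on $n=\poly(\ell)$ qubits whose spectrum carries information about the restriction of $L$ to $\{0,1\}^{\ell}$, and arrange that on input $x$ the machine prepares, in time $\poly(\ell)$ and using the advice, a state $\sigma_x$ whose von Neumann entropy differs by a constant according to the value of $L(x)$. Running $\mathcal{A}$ on $\sigma_x$ with a constant error $\epsilon=1/3$ then reveals $L(x)$ in time $\poly(\ell)$. The feature that makes this conceivable is that $\mathcal{A}$ is allowed additive precision as small as $2^{-\poly(\ell)}$ while still running in polynomial time, because the running time depends on $\epsilon$ only through $\log(1/\epsilon)$; the entropy value therefore behaves as a high-capacity register that is not the outcome of any fixed observable, but rather a global nonlinear functional of the spectrum.

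The hard part will be exactly this encoding step, and reconciling it with the Holevo/Nayak accessible-information bounds, which forbid recovering arbitrarily many bits of $L$ from polynomially many advice qubits by ordinary measurement. The point to make precise is that an estimate of the entropy to exponential additive precision is \emph{not} a single-shot accessible measurement statistic, so it is the assumed efficiency of computing this global functional, and not the advice itself, that supplies the extra power. Getting the quantitative trade-off right between the precision $\epsilon$, the advice length, and the polynomial running time is where the real work lies; once the encoding delivers $L(x)$ within this budget, the inclusion $\mathsf{ALL}\subseteq\mathsf{BQP}/\mathsf{qpoly}$ follows and contradicts the separation above, refuting the existence of $\mathcal{A}$.
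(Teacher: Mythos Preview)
Your high-level strategy matches the paper's: derive a contradiction by showing the assumed estimator would give $\mathsf{BQP/qpoly}=\mathsf{ALL}$. But there is a genuine gap. You never supply the encoding, and you yourself flag it as ``where the real work lies''; that encoding is the entire content of the proof. You are also inconsistent about the precision: you first propose a constant entropy gap with $\epsilon=1/3$, which would not exploit the $\log(1/\epsilon)$ scaling of the hypothesis at all, and then shift to talking about exponential precision without saying what state realizes it or how.

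The paper's encoding is short and concrete. The advice for input length $n$ consists of polynomially many copies of $\ket{\psi_n}_{Yes}\ket{\psi_n}_{No}$, the uniform superpositions over the length-$n$ yes and no instances respectively. On input $x$, one applies to each factor a unitary $EQ_x$ that flips an appended ancilla only on the branch $\ket{x}$, and then traces out the $n$-qubit register. If $x$ is a yes instance, the resulting one-qubit state on the yes side has entropy at least $2^{-n}$ while the no side has entropy exactly $0$ (and symmetrically for no instances). Running the assumed estimator with $\epsilon=2^{-n-1}$ distinguishes these in time $\poly(n,\log(1/\epsilon))=\poly(n)$. The entropy gap is \emph{deliberately} exponentially small: this is precisely what leverages the hypothesis and dissolves the Holevo/Nayak tension you raise, since only a single bit is extracted per use of the advice and its recoverability rests entirely on the estimator's assumed ability to reach precision $2^{-n-1}$ in polynomial time.
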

\begin{proof}
We prove this result by contradiction. Suppose that a $poly(n, \log(1/\epsilon))$ quantum algorithm for entropy estimation existed. We will argue that this implies $\mathsf{BQP/qpoly} = \mathsf{ALL}$, where $\mathsf{BQP/qpoly}$ denotes the set of languages that can be decided by a polynomial-time quantum algorithm with quantum advice, and $\mathsf{ALL}$ denotes the set of all languages~\cite{zoo}.
A $\mathsf{BQP/qpoly}$ algorithm is a quantum algorithm that runs in polynomial time and that receives, in addition to its input, denoted $x$, a quantum state on $poly(|x|)$ qubits (the advice state) that depends only on the size of the input and not on the input itself.
We leverage this fact, together with the ability to efficiently estimate entropy to provide an algorithm for deciding any language.

For a given language $L \subseteq \{0, 1\}^*$ and input length $n$, let $L_n$ denote the set of strings $x$ of length $n$, such that $x \in L$. We also let $\bar{L}_n$ denote the strings of length $n$ not contained in $L$, i.e. $\bar{L}_n = \{0, 1\}^n \setminus L_n$. With this notation, we define the states
\begin{equation}
\ket{\psi_n}_{Yes} = \frac{1}{\sqrt{L_n}} \sum_{x \in L_n} \ket{x}  \quad \quad \quad
\ket{\psi_n}_{No} = \frac{1}{\sqrt{\bar{L}_n}} \sum_{x \in \bar{L}_n} \ket{x}
\end{equation}
to be the equal superpositions over the ``yes'' instances of length $n$ and the ``no'' instances, respectively.
Finally, we let $EQ_x$ be the following unitary operation acting on $n+1$ qubits (for $b\in\{0,1\}$):
\begin{equation}
EQ_x \ket{y} \ket{b} = \left\{
\begin{array}{ll}
      \ket{y}\ket{b \oplus 1} & \text{ if } x=y  \\
      \ket{y}\ket{b} & \text{ otherwise }\\
\end{array} 
\right.
\end{equation}
Note that $EQ_x$ can be implemented with $poly(|x|)$-many gates.

The $\mathsf{BQP/qpoly}$ algorithm works as follows. Setting $\epsilon = 2^{-n-1}$, the quantum advice will consist of $poly(n)$ copies of $\ket{\psi_n}_{Yes}\ket{\psi_n}_{No}$.
The algorithm then appends two qubits in the state $\ket{0}$ to each copy $\ket{\psi_n}_{Yes}\ket{\psi_n}_{No}$ to get $\ket{\psi_n}_{Yes}\ket{0}\ket{\psi_n}_{No}\ket{0}$. Then for the input $x$, the algorithm applies $EQ_x$ to both $\ket{\psi_n}_{Yes}\ket{0}$ and $\ket{\psi_n}_{No}\ket{0}$ individually, for all copies. Consider what happens if $x$ is a ``yes'' instance (the ``no'' instance case is analogous). The resulting states will be
\begin{equation} \label{eqn:yes}
EQ_x \ket{\psi_n}_{Yes} \ket{0} = \frac{1}{\sqrt{L_n}} \sum_{z \in L_n, z \neq x} \ket{z}\ket{0} + \frac{1}{\sqrt{\bar{L}_n}} \ket{x}\ket{1}
\end{equation}
\begin{equation} \label{eqn:no}
EQ_x \ket{\psi_n}_{No} \ket{0} = \ket{\psi_n}_{No} \ket{0}
\end{equation}
If we trace out the first $n$ qubits and denote the resulting states as $\rho_Y$ and $\rho_N$, we can see that
\begin{equation}
S(\rho_Y) = -\frac{L_n - 1}{L_n} \log \left(\frac{L_n - 1}{L_n} \right) - \frac{1}{L_n} \log \left(\frac{1}{L_n} \right)
\end{equation}
\begin{equation}
S(\rho_N) = 0
\end{equation}
Since $L_n \leq 2^n$, we have that $S(\rho_Y) \geq 2^{-n}$. But now, by assumption, having $poly(n)$-many copies of $\rho_Y$ and $\rho_N$ we can estimate the entropies of the two states to within additive error $2^{-n-1}$, thus being able to determine which of the two has non-zero entropy.
The algorithm is therefore able to decide any language $L$, hence showing that $\mathsf{BQP/qpoly} = \mathsf{ALL}$. However, we know from~\cite{ny} that $\mathsf{BQP/qpoly} \neq \mathsf{ALL}$ (since, in particular $\mathsf{EESPACE} \not\subset \mathsf{BQP/qpoly}$) and this provides the desired contradiction.
\end{proof}

\section{The entropy ratio problem} \label{sect:eratio}
We have seen that estimating entropy differences for circuit outputs, even constant depth circuits, is hard. A related question we can ask is: what about entropy \emph{ratio}? In other words, what can we say about the hardness of estimating $S(\rho_1)/S(\rho_2)$, where $\rho_1$ and $\rho_2$ are states produced by circuits $C_1$ and $C_2$, respectively?\footnote{Assuming $S(\rho_2) > 0$.}

The motivation for this question stems from the relationship between entropy and geometry in the AdS/CFT correspondence, which we discuss in more detail in Subsection~\ref{subsect:holo}. Briefly, there can be situations in which the relevant physical quantity to consider is the ratio in entropy between two states (or distributions), rather than the difference. In AdS/CFT this is the case because various geometric quantities (such as lengths, areas or volumes) are equal, to first order, to the entropy of a quantum state. Observers in a spacetime are generally able to determine ratios between these geometric quantities and thus
obtain estimates for the entropy ratio between quantum states. It is therefore important to characterise the hardness of estimating these ratios. We thus define the following problem\footnote{The classical version of the problem is analogous.}:

\begin{definition}[Quantum Entropy Ratio (\QER)]
Let $C_1$ and $C_2$ be quantum circuits acting on $n + k$ qubits. Define the following $n$-qubit mixed states:
\begin{equation}
\rho_1 = Tr_{k} (C_1 \ket{00...0} \bra{00...0} C_1^{\dagger}) \quad \quad \quad \quad
\rho_2 = Tr_{k} (C_2 \ket{00...0} \bra{00...0} C_2^{\dagger})
\end{equation}
Given $n$, $k$, $a$, $b$ with $a, b > 0$, $a - b > 1/poly(n + k)$, as well as descriptions of $C_1$ and $C_2$ as input, decide whether:
\begin{equation}
\frac{S(\rho_1)}{S(\rho_2)} \geq a
\end{equation}
or
\begin{equation}
\frac{S(\rho_1)}{S(\rho_2)} \leq b
\end{equation}
promised that one of these is the case and promised that $S(\rho_2) > 1/poly(n + k)$.
\end{definition}

Before proceeding further, let us first comment on the condition $S(\rho_2) > 1/poly(n + k)$. First of all it is clear that $S(\rho_2) > 0$, however we can see that if the entropy of $\rho_2$ can be arbitrarily close to $0$ the problem becomes ``trivially'' $\mathsf{NP}$-hard\footnote{In fact, a more involved version of this argument can be used to show the problem is also $\mathsf{\#P}$-hard.}. The reason is that one can consider $C_1$ to be a circuit that creates a superposition over all satisfying assignments of some CNF formula, $\phi$. Taking $n = 1$, and $k$ as the number of variables of $\phi$, the circuit then flips the last qubit conditioned on having a satisfying assignment for $\phi$. It's not hard to see that in this case
\begin{equation}
S(\rho_1) = - \frac{\# SAT(\phi)}{2^k} \log \left( \frac{\# SAT(\phi)}{2^k} \right) - \left( 1 - \frac{\# SAT(\phi)}{2^k} \right) \log \left(1 - \frac{\# SAT(\phi)}{2^k} \right) 
\end{equation}
where $\# SAT(\phi)$ denotes the number of satisfying assignments of $\phi$.
Similar to the proof from Appendix~\ref{sect:bqpqpoly}, this means that if $\#SAT(\phi) > 0$, then $S(\rho_1) > 2^{-k}$.
For $C_2$, we can consider a circuit that creates a state, $\rho_2$, for which $S(\rho_2) = 2^{-k}$. It therefore follows that the entropy ratio is $0$ when $\#SAT(\phi) = 0$ and at least\footnote{Note that this is true provided that not all assignments of $\phi$ are satisfying assignments, since in that case $S(\rho_1) = 0$. However, this case can easily be eliminated by taking an additional ``dummy variable'', $x_0$, and considering instead the formula $\phi' = x_0 \wedge \phi$. We can see that $\phi'$ is satisfying whenever $\phi$ is satisfying (by taking $x_0 = 1$), but not all assignments of $\phi'$ are satisfying (because of $x_0 = 0$).} $1$ when $\#SAT(\phi) > 0$. Thus, taking $a = 1$ and $b = 0$, we have a reduction from SAT to \QER.
To avoid this and have a physically meaningful version of \QER, we require $S(\rho_2) > 1/poly(n + k)$.

What can we say about the hardness of \QER? One way to answer this question is to relate it to \QED. To that end, we show the following:

\begin{lemma} \label{lemma:qerqed}
\QER $\leq_P^{C}$ \QED, where $\leq_P^{C}$ denotes a polynomial-time Cook reduction. Equivalently, \QER $\in \mathsf{BPP}^{\QED}$.
\end{lemma}
\begin{proof}
Supposing $S(\rho_2) \geq 1/p(n + k)$, for some polynomial $p$, take $q(x) = (a - b) p^2(x)$, where $a$ and $b$ are the parameters from \QER. A $\mathsf{BPP}$ algorithm with an oracle for \QED\ can now estimate $S(\rho_1)$ and $S(\rho_2)$ to within precision $1/q(n + k)$, using calls to the oracle. Note that this can be done by first calling the oracle with $C_1$ and the identity circuit and $C_2$ and the identity circuit, respectively. This is because the identity circuit produces a state having zero entropy and so using \QED\ this way, one obtains entropy estimates for $S(\rho_1)$ and $S(\rho_2)$, which we denote as $\tilde{S}_1$ and $\tilde{S}_2$, respectively.
It is the case that 
\begin{equation}
\tilde{S}_i = S(\rho_i) + \epsilon_i,
\end{equation}
for some $\epsilon_i$ such that $|\epsilon_i| \leq 1/q(n + k)$ and $i \in \{1, 2\}$. 

Finally, the $\mathsf{BPP}$ algorithm computes $\tilde{S}_1 / \tilde{S}_2$, or
\begin{equation}
\frac{\tilde{S}_1}{\tilde{S}_2} = \frac{S(\rho_1) + \epsilon_1}{S(\rho_2) + \epsilon_2}. 
\end{equation}
But now since $S(\rho_2) > \epsilon_2$, by Taylor expanding the fraction to first order in $\epsilon_2$, we have that
\begin{equation}
\frac{S(\rho_1) + \epsilon_1}{S(\rho_2) + \epsilon_2} = \frac{S(\rho_1)}{S(\rho_2)} + O(\epsilon_2) 
\end{equation}
Because $|\epsilon_2| \leq 1/q(n + k)$, the estimate of the entropy ratio is good enought to decide whether 
\begin{equation}
\frac{S(\rho_1)}{S(\rho_2)} > a
\end{equation} 
or
\begin{equation}
\frac{S(\rho_1)}{S(\rho_2)} < b
\end{equation}
concluding the proof.
\end{proof}

The above result indicates that estimating the entropy ratio (when we are promised that one of the circuits produces a state that has a non-trivial amount of entropy) is easier than estimating entropy difference. However, we can perform a similar reduction in the other reduction showing that, in fact, the two problems are equivalent (under Cook reductions):

\begin{lemma} \label{lemma:qedqer}
\QED $\leq_P^{C}$ \QER, where $\leq_P^{C}$ denotes a polynomial-time Cook reduction. Equivalently, \QED $\in \mathsf{BPP}^{\QER}$.
\end{lemma}
\begin{proof}
Analogously to Lemma~\ref{lemma:qerqed}, we will use the oracle to \QER\ in order to estimate the entropies $S(\rho_1)$ and $S(\rho_2)$, corresponding to the two circuits, and then take their difference. Assuming their difference is lower bounded by $1/p(n + k)$, for some polynomial $p$, we can again take $q(x) = p^2(x)$, and perform this estimation to within precision $1/q(n + k)$. The key to doing this will be to define a new circuit, $C$, which simply creates one Bell pair. If one of the qubits in this pair is traced out, the remaining qubits will have entropy exactly $1$.

Thus, the $\mathsf{BPP}$ algorithm will first use the oracle to \QER\ with circuits $C_1$ and $C$ and then with $C_2$ and $C$. Since the entropy of the state produced by $C$ is $1$, this procedure will yield estimates for the entropies $S(\rho_1)$ and $S(\rho_2)$. As their difference is assumed to be at least $1/p(n + k)$ and since the estimates are within precision $1/p^2(n+k)$, taking their difference will be sufficient to determine whether
\begin{equation}
S(\rho_1) \geq S(\rho_2) + 1/p(n+k)
\end{equation}
or
\begin{equation}
S(\rho_2) \geq S(\rho_1) + 1/p(n+k)
\end{equation}
concluding the proof.
\end{proof}

\begin{corollary}
\QED\ and \QER\ are equivalent under Cook reductions.
\end{corollary}
\begin{proof}
Follows from Lemmas~\ref{lemma:qerqed}, \ref{lemma:qedqer}.
\end{proof}

The above results remain valid even for circuits of constant depth. Of course, in that case it would be desirable to have Karp reductions, or even log-space reductions between the problems. While we leave finding such reductions as an open problem, we can at least show that such reductions do exist from \LWE\ to the constant-depth version of \QER:

\begin{lemma}
\lwe\ $\leq_P$ \QERconst. 
\end{lemma}
\begin{proof}
This reduction follows exactly the same steps as Theorems~\ref{thm:lweedlog},~\ref{thm:lweedconst}, except that instead of using a function $g$ that is $2$-to-$1$, we use a function that is $2^r$-to-$1$, for some $r > 0$. Specifically, the functions we consider are:
\begin{equation}
f(b_1, b_2, ... b_r, x) = Ax + \sum_{i=1}^r b_i u_i + e \; (mod \; q)
\end{equation}
\begin{equation}
g(b_1, b_2, ... b_r, x) = Ax + \sum_{i= 1}^r b_i (As_i + e_i) + e \; (mod \; q)
\end{equation}
with $q \geq 2$ a prime integer, $i \leq r$, $b_i \in \{0, 1\}$, $x \in \mZ_q^n$, $s_i \in \mZ_q^n$, $A \in \mZ^{n \times m}_q$, $u_i, e_i \in \mZ_q^m$, $e \leftarrow_{ D_{\mZ_q,B}^m} \mZ^m$.

We can see that $f$ is indeed (approximately) a $1$-to-$1$ function, while $g$ is (approximately) $2^r$-to-1, since, for any $x \in \mZ_q^n$,
\begin{equation}
g(0, 0, ... 0, x) \approx g \left( b_1, b_2, ... b_r, x - \sum_{i= 1}^r b_i s_i \right)
\end{equation}
for all $b_i \in \{0, 1\}$, $i \leq r$.
Moreover, it is also the case that $f$ and $g$ can be implemented given the uniformly random vectors $u_i$ and the LWE samples $As_i + e_i$, respectively. So long as $r = poly(n)$, the vectors $u_i$ and the LWE samples $As_i + e_i$ are computationally indistinguishable from each other~\cite{lwe}.
The circuits implementing these two functions will again be denoted as $C_f$ and $C_g$, as in Theorem~\ref{thm:lweedlog}, and they are similarly:
\begin{equation}
C_f(b_1, b_2, ... b_r, x, e_u) = Ax + \sum_{i=1}^r b_i u_i + \textsc{Gaussify}(e_u) \; (mod \; q) 
\end{equation}
\vspace{-0.2in}
\begin{equation}
C_g(b_1, b_2, ... b_r, x, e_u) = Ax + \sum_{i= 1}^r b_i (As_i + e_i) + \textsc{Gaussify}(e_u) \; (mod \; q) 
\end{equation}
The only difference between these circuits and the ones from Theorem~\ref{thm:lweedlog}, is the computation of the sums $\sum_{i=1}^r b_i u_i$ and $\sum_{i= 1}^r b_i (As_i + e_i)$, respectively. However, these sums can be computed in parallel using $O(\log(r \log(q)))$ depth, by recursively evaluating the first half of the sum and the second half separately and then combining the results. Thus, as long as $r = poly(n)$, $C_f, C_g \in \NC$.

Finally, let us discuss the entropies we obtain for the two circuits under uniform inputs. We can see that in the case of $f$, each $b_i$ is adding an additional bit of entropy to that obtained from having a uniform $x$ input. Hence $S(C_f) = r + n \log(q) + S_{Gaussian} + \mu(n)$, for some negligible function $\mu(n)$. For $C_g$ we have $S(C_g) = n \log(q) + S_{Gaussian} + \mu(n)$. Also note that $S_{Gaussian} \leq n \log(q)$ so that by taking $r = p(n) (2 n \log(q) + \mu(n))$, for some polynomial $p$, we have that
\begin{equation}
S(C_f) / S(C_g) > p(n)
\end{equation}
and also
\begin{equation}
S(C_g) / S(C_f) < 1/p(n)
\end{equation}

This essentially concludes the proof. We have shown how one can construct log-depth circuits whose entropy ratio (under a uniform input) will either be very large or very small. Moreover, distinguishing which is the case is as hard as \lwe. We can then use randomized encodings as in Theorem~\ref{thm:lweedconst}, to achieve the same thing for constant depth circuits. 
By implementing these as quantum circuits, the results extend to the quantum case as well.
\end{proof}

\section{HQED with constant depth ground state} \label{sect:constdepthham}
In the reduction from Theorem~\ref{thm:hqed} we used the history state construction to \emph{approximately} map the outputs of circuits $C_1$ and $C_2$ from an instance of \QED\ to ground states of Hamiltonians $H_1$ and $H_2$ in \HQED.
Correspondingly, the entanglement entropy difference for the ground states of $H_1$ and $H_2$ differed from that of the output states of $C_1$ and $C_2$ by an additive term of $1/poly(n+k)$.

Here we give an alternate reduction for the case where $C_1$ and $C_2$ are of constant depth $d$, based on a recent result of Bravyi, Gosset and Movassagh~\cite{quantummeanvalues}.
This reduction has the appealing feature that the resulting Hamiltonians will have as ground states \emph{exactly} $C_1 \ket{00...0}$ and $C_2 \ket{00...0}$, rather than approximate versions of these states. This means that the entanglement entropy difference for the ground states of $H_1$ and $H_2$ will be identical to that of the states produced by $C_1$ and $C_2$.

\begin{lemma}
There exists a reduction \QEDconst\ $\leq_P$ \HQEDconst\ that exactly preserves the entropy difference.
\end{lemma}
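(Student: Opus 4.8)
The plan is to build, for each circuit in the \QEDconst\ instance, a \emph{parent Hamiltonian} whose unique ground state is literally the circuit's output state, which is the construction underlying~\cite{quantummeanvalues}. Given an instance of \QEDconst\ consisting of depth-$d$ circuits $C_1,C_2$ on $n+k$ qubits with $d=O(1)$, I would let $P_i$ denote the projector $\proj{1}$ acting on the $i$-th qubit and define, for $j\in\{1,2\}$,
\begin{equation}
H_j \;=\; \sum_{i=1}^{n+k} C_j\, P_i\, C_j^{\dagger}.
\end{equation}
Since $\sum_i P_i$ is positive semidefinite with non-degenerate ground state $\ket{0\dots0}$ at energy $0$ (its eigenvalues are Hamming weights), and conjugation by the unitary $C_j$ preserves the spectrum, $H_j$ is positive semidefinite with a \emph{unique} ground state $\ket{\psi_j}=C_j\ket{0\dots0}$ at energy $0$. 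Because the ground space is non-degenerate, the \HQED\ tie-breaking rule for degenerate ground spaces is never invoked.

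The first thing to check is locality. Each depth-$d$ circuit built from two-qubit gates has the property that the backward light-cone of any single qubit contains at most $2^{d}=O(1)$ qubits, so each term $C_j\, P_i\, C_j^{\dagger}$ is supported on a constant number of qubits. Thus $H_j$ is a sum of $n+k$ constant-weight terms, i.e.\ a bona fide local Hamiltonian, and it can be written down in polynomial time from the description of $C_j$. This is exactly where constant depth is essential: for log-depth circuits the same construction would produce terms of weight $2^{O(\log n)}=\poly(n)$, which is why this exact reduction is confined to the constant-depth case.

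It then remains to verify that the entropy difference is preserved \emph{exactly}. Tracing out the same $k$ qubits as in the \QEDconst\ instance, the reduced ground state of $H_j$ is
\begin{equation}
\Tr_k\!\big(\,\proj{\psi_j}\,\big) \;=\; \Tr_k\!\big(\,C_j\,\proj{0\dots0}\,C_j^{\dagger}\,\big)\;=\;\rho_j ,
\end{equation}
which is identical to the \QEDconst\ state $\rho_j$; consequently $S(\rho_1)-S(\rho_2)$ is unchanged and the promised gap of $1$ carries over verbatim. Moreover, the purifications $\ket{\psi_j}=C_j\ket{0\dots0}$ of $\rho_j$ are by construction outputs of constant-depth circuits, so the \HQEDconst\ promise (that purifications of $\rho_1,\rho_2$ be approximable, here exactly realizable, by constant-depth circuits) holds trivially. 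This yields the \HQEDconst\ instance $(n+k,\,k,\,H_1,\,H_2)$ with entropy difference identical to the original, completing the reduction.

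The main obstacle I anticipate is not the algebra but justifying the two structural claims cleanly: (i) that the terms are genuinely constant-weight, which rests on the light-cone bound and hence on the gates being bounded-arity and the depth being $O(1)$; and (ii) that the ground state is non-degenerate. Both follow from the observation that $H_j$ is unitarily equivalent to the gapped, frustration-free Hamiltonian $\sum_i P_i$, so the real content is simply packaging this parent-Hamiltonian construction from~\cite{quantummeanvalues} and confirming that its locality is inherited from the constant circuit depth.
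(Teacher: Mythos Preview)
Your proposal is correct and follows essentially the same approach as the paper: both use the parent Hamiltonian $H_j=\sum_i C_j\proj{1}_i C_j^{\dagger}$ from~\cite{quantummeanvalues}, argue $O(2^d)$-locality via the light-cone of a constant-depth circuit, and observe that the unique ground state $C_j\ket{0\dots0}$ makes the entropy difference carry over exactly. If anything, your write-up is slightly more thorough in explicitly checking non-degeneracy and the \HQEDconst\ purification promise.
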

\begin{proof}
Assuming, as before, that the circuits acting on $n + k$ qubits have the form $C_j = U_{d} U_{d-1} ... U_1$, $j \in \{1, 2\}$, where each $U_i$ is a quantum gate and $d$ is constant, we use the Hamiltonians considered in~\cite{quantummeanvalues}:
\begin{equation}
H_j = \sum_{i = 1}^{n + k} C_j \ket{1}\bra{1}_i C_j^{\dagger}
\end{equation}
where $\ket{1}\bra{1}_i$ acts non-trivially only on the $i$'th qubit. Since $C_j$ is a circuit of depth $d$, the locality of each term is $2^d$. Because $d$ is constant, the resulting Hamiltonian is local.
As shown in~\cite{quantummeanvalues}, the unique ground state of $H_j$ is $C_j \ket{00...0}$.
Thus, the entanglement entropy difference for $H_1$ and $H_2$ is given by the entanglement entropy difference of $C_1 \ket{00...0}$ and $C_2 \ket{00...0}$, concluding the proof.
\end{proof}

\end{document}